\numberwithin{equation}{section}
\newtheorem{teo}{Theorem}[section]
\newtheorem*{teo*}{Theorem}
\newtheorem*{prop*}{Proposition}
\newtheorem*{corol*}{Corollary}
\newtheorem{prop}[teo]{Proposition}
\newtheorem{lema}[teo]{Lemma}
\newtheorem{defi}[teo]{Definition}
\theoremstyle{definition}
\newtheorem{ex}[teo]{Example}
\newtheorem{remark}[teo]{Remark}
\newcommand{\BS}{\operatorname{BS}}
\newcommand{\e}{\mathrm{e}}
\newcommand{\floor}[1]{\left\lfloor #1 \right\rfloor}
\newcommand{\bfa}{\mathbf{a}}
\newcommand{\bfb}{\mathbf{b}}
\newcommand{\E}{\mathbb{E}}
\newcommand{\F}{\mathcal{F}}
\newcommand{\N}{\mathbb{N}}
\renewcommand{\P}{\mathbb{P}}
\newcommand{\R}{\mathbb{R}}
\newcommand{\X}{\mathbb{X}}
\newcommand{\bX}{\mathbf{X}}
\newcommand{\bY}{\mathbf{Y}}
\newcommand{\bB}{\mathbf{B}}
\newcommand{\1}{\mathbf{1}}
\newcommand{\eps}{\varepsilon}
\newcommand{\blfootnote}[2]{%
  \begingroup
  \renewcommand\thefootnote{#1}%
  \addtocounter{footnote}{1}%
  \footnotetext{#2}%
  \addtocounter{footnote}{-1}%
  \endgroup
}
\title{Volatility Modeling in Markovian and Rough Regimes: Signature Methods and Analytical Expansions}
\author{Elisa Alòs\textsuperscript{*}, Òscar Burés\textsuperscript{†\,§}, Rafael de Santiago\textsuperscript{§} and Josep Vives\textsuperscript{†\,‡}}
\date{\today}
\begin{document}

\maketitle

\blfootnote{*}{Department of Economics and Business, Universitat Pompeu Fabra and Barcelona School of Economics. Ramón Trias Fargas 25-27, 08005, Barcelona, Spain.}
\blfootnote{†}{Departament de Matemàtica Econòmica, Financera i Actuarial, Universitat de Barcelona. Diagonal 690--696, 08034 Barcelona, Spain.}
\blfootnote{§}{Department of Managerial Decision Sciences, IESE Business School. Av. Pearson 21, 08034 Barcelona, Spain.}
\blfootnote{‡}{Institut de Matemàtiques, Universitat de Barcelona. Gran Via de les Corts, 585, 08007 Barcelona, Spain.}

\begingroup
\renewcommand{\thefootnote}{}  % removes the number
\renewcommand{\footnotemargin}{0pt}  % ensures no indent (if using footmisc)
\footnotetext{%
 \noindent\hspace{0pt}Òscar Burés supported by program AGAUR-FI ajuts (2025 FI-1 00580) from the Department of Research and Universities of the Government of Catalonia and the co-funding of the European Social Fund Plus (ESF+).}
\endgroup

\begin{abstract}
We study two complementary methodologies for calibrating implied volatility surfaces: analytical approximations and data-driven models based on rough path theory. On the analytical side, we revisit a second-order asymptotic expansion for the Heston model, and we propose a new, VIX-based calibration scheme for the rough Bergomi model. Both methods yield highly accurate and computationally efficient calibration formulas when the underlying dynamics are well specified. In parallel, we develop a signature-based approach in which volatility is represented as a linear functional of the truncated signature of a primary stochastic process, providing a flexible and model-agnostic alternative.

Our numerical experiments compare the two approaches across both Markovian and non-Markovian settings. In the Heston case, signature-based models achieve a level of accuracy comparable to analytical expansions. In the rough Bergomi setting, using a fractional Brownian motion as the primary process, the signature approach continues to perform strongly and in some cases improves upon the Markovian specification, reflecting its ability to capture more complex temporal dependencies. Overall, the results illustrate that analytical methods are highly effective when the model is correctly specified, while signature-based methods offer a robust and flexible framework for calibration across a wider range of volatility dynamics.
\end{abstract}

\noindent\textbf{Keywords:} Rough volatility, rough path theory, signatures, implied volatility calibration, VIX.

\medskip
\noindent\textbf{JEL Classification:} G13, C63, C58.

\medskip
\noindent \textbf{MSC 2020:} 60L70, 60H10, 91G20, 91G60, 60G22.

\section{Introduction}

The realization that constant-volatility models could not account for effects like clustering, heavy tails, and \emph{smiles} prompted the extension of the Black–Scholes framework to include stochastic volatility. Early models addressed these limitations by introducing a second (correlated) Brownian motion to govern the volatility, yielding more realistic asset price dynamics. This line of research, initiated by \cite{hullwhite}, \cite{wiggins87}, \cite{steinstein}, and \cite{heston93}, has given rise to a broad and influential literature in mathematical finance.

A particularly influential part of the literature has focused on developing closed-form approximation formulas for implied volatility through asymptotic expansions and perturbation methods applied to parametric models, such as Heston and SABR. These approximations aim to facilitate the calibration of model parameters to market-observed implied volatility surfaces. Efficient calibration is a central task in financial practice, as option prices are typically quoted via implied volatilities, and model parameters must be inferred by inversion. Contributions to this area include, among others, \cite{HKLW02}, \cite{FPSS2003}, \cite{MS07}, \cite{DSFS2008}, \cite{AS09}, \cite{BGM09, BGM10a, BGM10b}, \cite{FJM10}, \cite{FJ11}, \cite{FJL11}, \cite{alos12}, \cite{LPP13} and \cite{ADSV15}.

While these parametric models and their expansions have proven theoretically elegant and computationally tractable, they also come with important limitations. Relying on a fixed model structure may restrict flexibility and hinder the model’s ability to capture certain stylized features observed in market data. In addition, the presence of multiple stochastic drivers often increases computational complexity, requiring the use of numerical tools such as Fourier transforms, Monte Carlo methods, or finite-difference PDE solvers. Moreover, a growing body of empirical evidence suggests that volatility exhibits rough, fractional-like behavior that traditional Markovian models struggle to reproduce. This has led to increased interest in fractional (rough) volatility models, such as those introduced in \cite{CR98}, \cite{ALV}, and \cite{fukasawa17}, which more accurately reflect the observed structure of the implied volatility surface and align with empirical evidence on the roughness of volatility paths (e.g., \cite{bayerfrizgatheral16}, \cite{GJR18}).

In response to these challenges, a more recent line of research has embraced non-parametric, data-driven approaches that aim to learn volatility dynamics directly from observed paths, rather than specifying a rigid structural model. A prominent example of this paradigm is the use of rough path theory and signature methods, originally introduced in \cite{chen58} and rigorously developed in \cite{lyons98}. The signature of a path (to be defined precisely below) consists of its iterated integrals and encodes its temporal features into a rich algebraic structure. Over the years, these ideas have found applications in machine learning, time series analysis, and mathematical finance (e.g., \cite{buehler20}, \cite{perezarribas20}, \cite{cuchiero23}). Because it relies on functional features rather than a parametric form, signature-based modeling is well suited for capturing behaviors observed in rough volatility models.

In this paper, we consider two complementary approaches to the calibration of implied volatility surfaces. On the one hand, we present analytical approximations for both the Heston and rough Bergomi models. For Heston, we revisit a second-order asymptotic expansion, while for rough Bergomi we introduce a new calibration scheme based on VIX-implied information. These methods yield highly accurate formulas when the underlying model is well specified. On the other hand, we study signature-based approximations in which the volatility is represented as a linear functional of the truncated signature of a primary stochastic process (taken to be, respectively, a Heston variance process and a fractional Brownian motion). This approach does not rely on a fixed parametric specification and provides a flexible framework capable of adapting to different volatility dynamics. Rather than contrasting the two methodologies, our aim is to highlight their respective strengths and to assess their performance across both Markovian (Heston) and non-Markovian (rough Bergomi) regimes.

To make the paper self-contained, Section~\ref{2orderApprox} presents the analytical approximations for both models: we first review the second-order expansion technique for the Heston model introduced in \cite{ADSV15}, highlighting its mathematical structure and practical strengths, and then develop a new calibration scheme for the rough Bergomi model, based on VIX-implied volatility information. Section~\ref{pathsignatures} introduces the core elements of rough path theory needed to define and compute path signatures. In Section~\ref{SigBasedModel}, we develop the signature-based stochastic volatility model, describe its numerical implementation, and outline the training procedure. Section~\ref{numerical_Heston} reports numerical results for the Heston setting, including both uncorrelated and correlated dynamics as in~\cite{ADSV15}, and compares them with the corresponding analytical approximation. Section~\ref{numerical_rBergomi} presents the results for the rough Bergomi model, using a fractional Brownian motion as the primary process, and provides the analogous comparison.

\section{Analytic Calibration Methods} \label{2orderApprox}

In this section we explore two alternative ways of calibrating analytically the implied volatility surface generated by the Heston model and the rough Bergomi model. For the Heston model, we follow the approach in \cite{ADSV15}, which calibrates the model parameters by solving a system of non-linear equations. For the rough Bergomi model, we introduce a new way to calibrate the implied volatility surface using short-maturity information. 

\subsection{A Second-Order Approximation to the Heston Implied Volatility}\label{ASV_IVolApproxim}

We provide here a concise overview of the second-order approximation derived in \cite{ADSV15}. For full proofs and theoretical justifications, we refer the reader to the original paper. Our purpose is to use the estimates obtained with this methodology as a benchmark for comparison with the signature-based models that are introduced later. 

Assume that, for $t \in [0,T]$, the stock price follows the dynamics
\begin{equation} \label{stockpriceSDE}
dS_t = rS_t dt + \sigma_t S_t d(\rho W_t + \sqrt{1-\rho^2} B_t) 
\end{equation}
under a risk neutral probability $\P$, where $r \geq 0$ is the constant instantaneous interest rate, $W$ and $B$ are independent standard Brownian motions defined in the complete probability space $(\Omega, \F, \P)$, and $\rho \in (-1,1)$. We also assume that the volatility process $\sigma_t$ satisfies
\begin{equation} \label{Hestonvolatility}
d\sigma_t^2 = \kappa(\theta -  \sigma_t^2) dt +\nu \sqrt{\sigma_t^2 } dW_t,
\end{equation}
with $2\kappa \theta \geq \nu^2$. We denote by $\F^W = \{\F_t^W; t \in [0,T]\}$ and $\F^B = \{\F_t^B; t \in [0,T]\}$ the filtrations generated, respectively, by $W$ and $B$, and we define $\F$ as the collection of sigma algebras $\F^W_t \vee \F^B_t$ for each $t\in [0,T]$, that is, $\F := \F^W \vee \F^B$. Equations (\ref{stockpriceSDE}) and (\ref{Hestonvolatility}) constitute what is known as the Heston model. 

If we let $X_{t}:=\ln S_t$, the price of a European call option at time $t$ with strike $K$ and maturity $T$ is given by
\[
V_t=e^{-r(T-t)}E_t[(e^{X_T}-K)^+],
\]
where $E_t[\cdot] := E[\cdot|\F_t]$. For a constant volatility $\sigma$, and letting $k=\ln K$, the above general expression has the well-known analytical solution  
\[
V_t = \BS(T,t,X_t,k,\sigma) = e^{X_t} \Phi(d_+) - e^{k-r(T-t)}\Phi(d_{-}), 
\]
where $\Phi$ is the cumulative distribution function of a standard Gaussian random variable, and
\[
d_{\pm} = \frac{X_t-k+r(T-t)}{\sigma\sqrt{T-t}} \pm \frac{\sigma}{2}\sqrt{T-t}.
\]

When volatility is stochastic, the Black-Scholes formula no longer provides an exact solution. However, we can define an implied volatility approximation by evaluating the Black-Scholes formula at
\begin{equation} \label{nonconstant}
V_t \approx \BS(T,t,X_t,k,v_t),
\end{equation}
where  
\[
v_t = \sqrt{\frac{1}{T-t}\int_t^T E_t[\sigma_s^2] ds}
\]
represents the square root of the expected average variance over the remaining life of the option. 

Let $V^{mkt}_t$ be the market price at time $t$ of a European call option with maturity $T$ and strike $K$. As the $\BS$ function is invertible in the argument $v_t$, we can define the implied volatility as the unique $I(T,K)$ satisfying the equality
\[
\BS(T,t,X_t, k, I(T,K)) = V^{mkt}_t.
\]

Since even in the simplest stochastic volatility models it is generally not feasible to derive closed-form expressions for the implied volatility surface, a variety of approximation methods have been developed to estimate $I(T,K)$. 

The method in \cite{ADSV15} proceeds as follows. First, an approximation to the price of a European call is derived. Based on this, the following second-order expansion of the implied volatility is obtained:
\[
I(K,T) \approx v_0 + I_1(K,T) + I_2(K,T)  
\] 
where
\begin{align*}
I_1(K,T) & = \frac{\rho}{2v_0 T}\left(1- \frac{d_+(K,T)}{v_0 \sqrt{T}} \right)\E\left[\int_0^T \sigma_s d \langle M,W\rangle_s \right] \\
I_2(K,T) & = \frac{1}{8v_0 T} \left( \frac{d_+(K,T)^2}{v_0^2 T} - \frac{d_+(K,T)}{v_0\sqrt{T}} - \frac{1}{v_0^2 T} \right)\E\left[ \int_0^T d \langle M, M\rangle_s\right]  
\end{align*}
and $M_t = \int_0^T E_t(\sigma_s^2) ds$.
Finally, using the above expressions together with the term structure of at-the-money (ATM) options, the following asymptotic results are derived:
\begin{itemize}
\item When the call is close to maturity ($T \to 0$), the implied volatility can be approximated as:
\begin{equation}\label{IV-at-zero}
I(0, K) \approx \sigma_0 \,-\, \frac{\rho\nu }{4\sigma _{0}}\left( x-k\right)
\,+\, \frac{\nu^{2}}{24\sigma _{0}^{3}}\ (x-k)^{2}.
\end{equation}

\item When the European call is far away from maturity ($T \to \infty$):
\begin{equation}\label{IV2-at-infty}
I(T, K) \approx \sqrt{\theta }\left( 1+\frac{\nu \rho }{4\kappa }-\frac{%
\nu ^{2}}{32\kappa ^{2}}\right) +\left( \frac{\sigma _{0}^{2}-\theta }{%
2\kappa \sqrt{\theta }}\,+\,\nu \rho \frac{\sigma _{0}^{2}-2\theta }{4\kappa
^{2}\sqrt{\theta }}\,-\,\nu ^{2}\frac{\sigma _{0}^{2}-\frac{5}{2}\theta
+4\kappa }{32\sqrt{\theta }\kappa ^{3}}\right) \frac{1}{T}.
\end{equation}

\item When the option is at-the-money ($x = k - rT$), the expansion around $\sigma_0$ is:
\begin{equation}
I(T, K) \approx \sigma _{0}+\frac{3\sigma _{0}^{2}\rho \nu -6\kappa
(\sigma _{0}^{2}-\theta )-\nu ^{2}}{24\sigma _{0}}\,T.  \label{Taylor-at-0}
\end{equation}
\end{itemize}

\noindent These approximations are then used as follows. By fitting a linear equation to the ATM implied volatilities for different values of $T$, the values of $\sigma_0$ and  
$$ 
\frac{3\sigma _{0}^{2}\rho \nu -6\kappa (\sigma _{0}^{2}-\theta )-\nu ^{2}}{24\sigma _{0}}
$$
are obtained from (\ref{Taylor-at-0}).

By fitting a linear equation to the implied volatilities for short maturities as a function of the log-moneyness $(x-k)$, equation (\ref{IV-at-zero}) provides the value of $\rho\nu$. Notice that for strikes close to the money, the linear component is stronger than the curvature, making it difficult to estimate $\nu$ directly.

For large values of $T$, an equation is fitted to the implied volatilities as a function of $1/T$, and the values of 
$$
\sqrt{\theta }\left( 1+\frac{\nu \rho }{4\kappa }-\frac{\nu^{2}}{32\kappa ^{2}}\right) 
$$
and
$$
\left( \frac{\sigma _{0}^{2}-\theta }{%
2\kappa \sqrt{\theta }}\,+\,\nu \rho \frac{\sigma _{0}^{2}-2\theta }{4\kappa
^{2}\sqrt{\theta }}\,-\,\nu ^{2}\frac{\sigma _{0}^{2}-\frac{5}{2}\theta
+4\kappa }{32\sqrt{\theta }\kappa ^{3}}\right) .
$$
are obtained from (\ref{IV2-at-infty}). 

The previous steps provide a system of three equations. By solving them, the full set of Heston parameters ($\sigma_0, \nu, \kappa, \theta, \rho$) is calibrated.     

In particular, with \emph{market} parameters $S_0=100$, $\sigma_0 = 0.2$, $\nu = 0.3$, $\kappa = 3$, $\theta = 0.09$ and $\rho = 0$, the parameters calibrated with the above algorithm are:

\begin{table}[H] 
    \centering
    \begin{tabular}{ccc}
        \hline
        Parameter & True Value & Calibrated Value \\
        \hline
        $\sigma_0$ & 0.2 & 0.200013 \\
        $\nu$ & 0.3 & 0.307340 \\
        $\kappa$ & 3 & 2.998598 \\
        $\theta$ & 0.09 & 0.089960 \\
        $\rho$ & 0 & 0.000000 \\
        \hline
    \end{tabular}
    \caption{Calibrated parameters from \cite{ADSV15} with $\rho = 0$.}
    \label{tab:ASV-calibration}
\end{table}

In the correlated case, with $\rho = -0.5$, the calibrated parameters are: 
\begin{table}[H]
    \centering
    \begin{tabular}{ccc}
        \hline
        Parameter & True Value & Calibrated Value \\
        \hline
        $\sigma_0$ & 0.2 & 0.200016 \\
        $\nu$ & 0.3 & 0.290138 \\
        $\kappa$ & 3 & 2.973728 \\
        $\theta$ & 0.09 & 0.090022 \\
        $\rho$ & $-0.5$ & $-0.504084$ \\
        \hline
    \end{tabular}
    \caption{Calibrated parameters from \cite{ADSV15} with $\rho = -0.5$.}
    \label{tab:ASV-correlated}
\end{table}

These accurate calibrations will serve as a benchmark in Section~\ref{numerical_Heston} for comparison with the signature-based model. A limitation of this method lies in the assumption of taking the Heston model as the underlying dynamics for the volatility process. As a result, its flexibility is constrained by the limitations of that specific model, which may not adequately capture roughness or other complex behaviors observed in real markets.

\subsection{Closed-form calibration of the Rough Bergomi Model} \label{short_time_calib_rb}

Consider the rough Bergomi model 
\begin{equation}\label{rBergomi_equation}
\begin{aligned}
    dS_t &= r S_t\, dt + \sigma_t S_t\, d\left( \rho W_t + \sqrt{1 - \rho^2}\, B_t \right) \\
    \sigma_t^2 &= \sigma_0^2 \exp\left( \eta W_t^H - \frac{1}{2} \eta^2 t^{2H} \right),
\end{aligned}
\end{equation}
where $\eta > 0$, $H \in (0,1)$ and $W^H_t$ is a Volterra-type fractional Brownian motion:
\[
W_t^H := \int_0^t K_H(t,s)\, dW_s,
\]
with $K_H(t,s) = \sqrt{2H}\, (t - s)^{H - \frac{1}{2}}$ for $0<s<t$. Assume that the risk-free rate $r$ is known (it is observable from the market) and that $\rho \neq 0$. 

In this section we propose a new closed-form calibration procedure for the rough Bergomi model that combines short-time asymptotic results for the implied volatility surface with information extracted from VIX-implied volatility, leading to an efficient and easily implementable estimation of the model parameters. In the same way as with the method for the Heston calibration, we assume that the whole implied volatility is available. By this, we mean that $I(T,K)$ can be determined for every maturity $T$ and strike $K$. 

Let $H \in (0,1/2)$. To estimate the set $(\sigma_0, H, \eta, \rho)$, we proceed as follows.
\begin{itemize}
    \item \textbf{Step 1 - Estimation of $H$.} We take two reference strikes on either side of ATM, $K_T^+$ and $K_T^-$, respectively defined by $d_+=0$ and $d_-=0$. The difference $I(T,K_T^+) - I(T,K_T^-)$ captures the skew of the implied volatility smile. For maturities $T_1$ and $T_2$, Alòs et al.\,(2025) show that the Hurst index can be estimated as:
    \[
    \hat{H} = -\frac{1}{2} + \frac{\ln\left( \frac{I(T_1, K^+_{T_1}) - I(T_1, K^{-}_{T_1})}{I(T_2, K^+_{T_2}) - I(T_2, K^-_{T_2})} \cdot \frac{I^2(T_2, K^*)}{I^2(T_1, K^*)}\right)}{\ln\left( \frac{T_1}{T_2}\right)},
    \]
    where $K^*$ denotes the ATM log-strike. As this approximation does not depend on specific model parameters, it can be used without first calibrating a specific volatility model. With rough fractional volatility, $\hat{H}$ provides a quick way to estimate the Hurst parameter from the implied volatility surface.
    
    \item \textbf{Step 2 - Estimation of $\eta$.} To estimate $\eta$, we use the short-time behavior of the implied volatility of VIX options. Let $\Delta$ be 30 trading days. The VIX index at time $T$ is defined as
    \[
    VIX_T = \sqrt{\frac{1}{\Delta}E_T \left[ \int_T^{T+\Delta} \sigma_s^2 ds\right]}.
    \]
    Consider a European option with payoff $(VIX_T - K)_+$. Let the strike be $K=VIX_0$ and denote by $I^{VIX}_T(0)$ the implied volatility of such an option. That is, $I^{VIX}_T(0)$ is the ATM implied volatility (ATMI) of a European option on the VIX index with maturity $T$. Theorem~$8$ in \cite{AlosGLMugu22} proves a property of the short-time behavior of the ATMI in a general setting. Applying this result to the particular case of $I^{VIX}_T(0)$ under rough Bergomi dynamics, the following result is derived in Example~$10.2.3$ of \cite{AlosLorite25}:
    \[
    \lim_{T \to 0}I^{VIX}_T(0) = \frac{\eta \sqrt{2H} \Delta^{H-1/2}}{2(H+\frac{1}{2})}.
    \]
    Once we know $\hat{H}$, the estimator $\hat{\eta}$ is thus computed as
    \[
    \hat{\eta} = \frac{I^{VIX}_T(0)(2\hat{H}+1)}{\sqrt{2\hat{H}}\Delta^{\hat{H}-1/2}}, \quad \text{provided } T \approx 0.
    \]
    
    \item \textbf{Step 3 - Estimation of $\rho$.} The main result in \cite{AlosNuaPrav24} deals with the short-time \emph{skew} of the ATMI (Theorem~$1$). When applied to the rough Bergomi model (Section~$5.2$), the authors show that
    \[
    \lim_{T \to 0} T^{1/2 - H}\partial_K I(0,K^\star) = \frac{2\eta \rho \sqrt{2H}}{3+4H(2+H)}, 
    \]
    where $K^\star$ is the ATM strike at time $0$. Given the estimates $\hat{H}$ and $\hat{\eta}$, and using a finite difference scheme to compute $\partial_k I(T, K^\star)$, we obtain the estimator $\hat{\rho}$ as:
    \[
    \hat{\rho} = \frac{T^{1/2- \hat{H}} \partial_k I(0,K^\star)(3+4\hat{H}(2+\hat{H}))}{2\hat{\eta}\sqrt{2\hat{H}}}, \quad \text{provided } T\approx 0.
    \]
    
    \item \textbf{Step 4 - Estimation of $\sigma_0$.} For this step, only $\hat{H}$ is required. Theorem~$6.5.5$ in \cite{AlosLorite25} shows that the ATM implied volatility of European options corresponding to a rough Bergomi model satisfies the following asymptotic relationship:
    \[
    I(T, K^\star) \approx \sigma_0 + c_1 T^{2H} + O(T^{H+1/2}).
    \]
    Thus, considering several ATM options with different maturities, we can compute $I(T,K^\star)$ and do a regression to obtain $\sigma_0$. This regression also provides $c_1$, but since all other parameters have been calibrated, its value is not relevant. 
\end{itemize}
This algorithm provides an easy-to-implement calibration of the rough Bergomi model using only information on the short-term regime. 

To test its accuracy, we let the \emph{market} parameters be $\sigma_0 = 0.2$, $H = 0.1$, $\eta = 0.5$ and $\rho = -0.7$, which we consider as ground truth. The algorithm described above yields the following calibration, which will be used in Section~\ref{numerical_rBergomi} as the benchmark for comparison with the signature-based model.

\begin{table}[H]
    \centering
    \begin{tabular}{ccc}
        \hline
        Parameter & True Value & Calibrated Value \\
        \hline
        $\sigma_0$ & 0.2 & 0.199884 \\
        $H$ & 0.1 & 0.100968 \\
        $\eta$ & 0.5 & 0.490527 \\
        $\rho$ & -0.7 & -0.672485 \\
        \hline
    \end{tabular}
    \caption{Calibrated parameters using the rough Bergomi calibration algorithm.}
    \label{tab:rBergomi-tab}
\end{table}

In Section \ref{SigBasedModel} we will introduce a data-driven model based on path signatures, which does not assume any specific parametric form for the volatility process. This approach can learn directly from a \emph{primary} noise, enabling it to adapt to a broader class of behaviors.

\section{Path Signatures} \label{pathsignatures}

A natural way to incorporate signatures into stochastic volatility modeling is through the framework proposed by \cite{cuchiero23}, where the \emph{asset price} is modeled as a linear functional of the signature of a driving noise process. Although this approach performs well under the assumption that the volatility process is a semimartingale, it is less suitable in settings characterized by rough volatility, where such regularity assumptions no longer hold.

To address this limitation, \cite{cuchiero25} propose an alternative formulation in which the \emph{volatility} process itself is expressed as a linear functional of the signature of the primary noise. Although this approach is computationally more intensive, it does not require the volatility to satisfy any martingale or semimartingale condition, making it particularly well-suited to the modeling of rough or highly irregular volatility dynamics.

In this paper, we adopt a similar approach, that is, we assume that the volatility is a continuous function of a general underlying stochastic process, called the \emph{primary noise}, which does not need to be of the Heston type. This continuous function is then approximated by a linear combination of the elements of the signature of the primary noise. 

We now introduce the essential ideas from rough path theory that underpin the signature-based approach. An insightful and clear exposition of rough paths is given in the Saint-Flour lecture notes by \cite{lyonscaruanalevy07}. Other good references are \cite{chevyrev16}, \cite{cuchiero23}, and \cite{lyons02}. We follow \cite{geng21} and \cite{perediaz23} in several places. We include some proofs to support intuition.

The need for signatures arises from the problem of defining integrals of the form
\begin{equation}\label{integralexample}
\int_s^t y_u\, dx_u
\end{equation}
when the integrand $y$ and integrator $x$ lack sufficient regularity. If both $x$ and $y$ have bounded variation, the integral is defined in the Riemann–Stieltjes or Lebesgue–Stieltjes sense. If $x$ and $y$ are $\alpha$-Hölder continuous with $\alpha > \frac{1}{2}$, Young's theory applies. However, when $\alpha \leq \frac{1}{2}$, classical constructions break down, and the Riemann sums
\[
\sum_{t_i \in \mathcal{D}} y_{t_{i-1}} (x_{t_i} - x_{t_{i-1}})
\]
may fail to converge as the mesh $|\mathcal{D}| \to 0$. At best, these sums provide a first-order approximation to the integral, and additional structure is needed to make sense of the limit.

Note that these approximations depend only on the increments $x_t - x_s$. In fact, the \emph{first level} of the signature of a path $x$ corresponds precisely to its increments. The signature can then be understood as an enhanced path that augments its first-order increments with higher-order information in the form of \emph{iterated integrals}.

To illustrate why higher-order terms are essential, we borrow the following example from \cite{geng21}. Consider a smooth function $F$ and let $y_t = F(x_t)$. Then, formally, one can write:
\begin{align*}
\int_s^t F(x_u)\, dx_u &= F(x_s)(x_t - x_s) + \int_s^t (F(x_u) - F(x_s))\, dx_u \\
&= F(x_s)(x_t - x_s) + \int_s^t \int_s^u DF(x_v)\, dx_v\, dx_u \\
&= F(x_s)(x_t - x_s) + DF(x_s)\int_s^t \int_s^u dx_v\, dx_u \\
&\quad + \int_s^t \int_s^u (DF(x_v) - DF(x_s))\, dx_v\, dx_u.
\end{align*}
Continuing recursively leads to the formal expansion
\begin{align*}
\int_s^t F(x_u)\, dx_u = F(x_s)(x_t - x_s) & + DF(x_s)\int_s^t \int_s^u dx_v\, dx_u \\
& + D^2F(x_s)\int_s^t \int_s^u \int_s^v dx_r\, dx_v\, dx_u \\
& + D^3F(x_s)\int_s^t \int_s^u \int_s^v \int_s^r dx_z\, dx_r\, dx_v\, dx_u + \cdots
\end{align*}
That is, computing the integral $\int_s^t F(x_u)\, dx_u$ requires access to the full collection of iterated integrals of $x$, not just its increments.

Note that if $x$ takes values in $\R^d$, then the second-level iterated integral
\[
\int_s^t \int_s^u dx_v\, dx_u
\]
is a tensor consisting of $d^2$ terms of the form $\int_s^t \int_s^u dx^i_v\, dx^j_u$. Higher-order levels live in higher tensor powers. Thus, the natural way to organize this structure is through the \emph{tensor algebra}, introduced formally below.

In low regularity settings (like Brownian motion or rough volatility models), these higher-order iterated integrals are not well-defined. Rough path theory allows us to \emph{define} them abstractly, thereby extending integration to paths of low regularity. 

Informally, if $x$ is $\alpha$-Hölder continuous, we expect that
\[
\left| \int_{s<t_1<\cdots<t_n<t} dx_{t_1} \cdots dx_{t_n} \right| \lesssim |t - s|^{n\alpha},
\]
so higher-order terms decay rapidly. This motivates approximating the integral of a function $F(x_t)$ against $dx_t$ as
\[
\int_s^t F(x_u)\, dx_u \ \approx \ \sum_{k=1}^N \text{DF}^{(k-1)}(x_s)\, \bX^k_{s,t},
\]
where $\bX^k_{s,t} = \int_{s<t_1<\cdots<t_k<t} dx_{t_1}\cdots dx_{t_k}$ will represent the $k$-th level of the signature of $x$. The truncation level $N$ will depend on the regularity of $x$. 

As a result, to define pathwise integration in irregular settings, and to model functionals of paths (such as volatility), we must specify a family of tensors $(\bX^k_{s,t})_{k=1}^N$ satisfying some algebraic and analytic constraints. These will form the signature of a rough path, which we now formalize by introducing the tensor algebra.

\subsection{Tensor Algebras}

Let $V$ be a real-valued finite dimensional vector space. In practice, $V$ will typically be $\R^d$, for some $d \geq 1$. For any non-negative integer $n$, we denote the $n$-th tensor power of $V$ as
\[
V^{\otimes n} := V \otimes \cdots \otimes V,
\]
with $V^{\otimes 0} = \R$. The tensor power of a vector space is a vector space as well. Moreover, if $V$ is isomorphic to $\R^d$ for some $d \geq 1$ then $V^{\otimes n}$ is isomorphic to $\R^{d^n}$. In particular, all tensor powers of $\R$ are isomorphic to $\R$ itself.  

If $e_1, \dots, e_d$ is a basis of $V$, then the elements $\{e_{i_1} \otimes \dots \otimes e_{i_n}; (i_1, \dots, i_n) \in \{1, \dots, d\}^n \}$ are a basis of $V^{\otimes n}$, that is, every tensor $v \in V^{\otimes n}$ can be written uniquely as
\[
v = \sum_{1 \leq i_1, \dots, i_n \leq d} \lambda_{i_1, \dots, i_n} e_{i_1} \otimes \dots \otimes e_{i_n},
\]
for some coefficients $\{\lambda_{i_1, \dots, i_n} \in \R; (i_1, \dots, i_n) \in \{1, \dots, d\}^n \}$. 

\begin{defi}[Extended Tensor Algebra]
    We define the \emph{extended tensor algebra $T((V))$ over $V$} as the set
    \[
    T((V)) = \{ \mathbf{a} = (a_0, a_1, \dots );\text{ } a_n \in V^{\otimes n} \}
    \]
    equipped with the following element-wise addition and scalar product
    \[
    \mathbf{a} + \mathbf{b} = (a_0 + b_0, \dots, a_n + b_n, \dots ), \quad \lambda \mathbf{a} = (\lambda a_0, \lambda a_1, \dots ),
    \]
    and endowed with the product $\otimes$ defined by
    \[
    \mathbf{a} \otimes \mathbf{b} = (c_0, c_1, \dots),
    \]
    where 
    \[
    c_n = \sum_{i+j = n} a_i \otimes b_j.
    \]
\end{defi}
In the same way that we can define a product in $T((V))$, we can characterize its invertible elements. Specifically, if $\mathbf{a} \in T((V))$ and the zeroth level $a_0 \in \R$ is nonzero, then $\mathbf{a}$ admits a multiplicative inverse in $T((V))$, given by the formal series:
\[
\mathbf{a}^{-1} = \sum_{n \geq 0} \frac{1}{a_0}\left( \1 - \frac{\mathbf{a}}{a_0} \right)^{\otimes n},
\]
where $\mathbf{1} := (1, 0, 0, \dots)$ is the multiplicative identity in $T((V))$, and the powers are taken with respect to the tensor product. Finally, we define the tensor algebra over $V$ as the set
\[
T(V) = \{ \mathbf{a} \in T((V)); \text{ } \exists n \in \mathbb{N} \text{ such that } a_k = 0 \text{ }\forall k \geq n\}.
\]
In other words, $T(V)$ consists of all formal tensor series with only finitely many nonzero terms. 

To make the notation more concise, define the multi-index $I = (i_1, \dots, i_n) \in \{1, \dots ,d\}^n$. We then write $e_I = e_{i_1} \otimes \cdots \otimes e_{i_n}$, and we denote the length of $I$ by $|I| = n$. In order to write scalars, we set $\alpha = \alpha e_{\emptyset}$, with $|\emptyset| = 0$. This notation allows us to write any tensor $v \in V^{\otimes n}$ as 
\[
v = \sum_{|I|=n} \lambda_I e_I,
\]
for some coefficients $\{\lambda_I \in \R; |I|=n\}$.

Given an element of $T(V)$, we can naturally associate a linear map on $T((V))$, in a manner analogous to the Riesz representation Theorem. 
\begin{defi}
    For any $\ell=\sum_{|I|\geq 0} \ell_I e_I \in T(V)$ and $\mathbf{a} = \sum_{|I|\geq 0} a_I e_I \in T((V))$, we define the map $\langle\cdot,\cdot\rangle:T(V) \times T((V)) \to \R$ by 
\begin{equation}
\langle \ell, \mathbf{a} \rangle := \sum_{|I| \geq 0} \ell_I a_I. \label{linearfunctional}
\end{equation}
\end{defi}
This map is well defined because there are only finitely many nonzero elements $\ell_I$. Note that we can recover the coordinate $a_I$ of $\mathbf{a}$ with $\langle e_I, \mathbf{a} \rangle = a_I$.  

We now introduce another important product on $T(V)$. The shuffle product is a way to combine two tensors in $T(V)$ by interweaving their entries in all possible ways, while \emph{preserving the relative order} within each tensor. Its effect is usually compared to that of shuffling cards from two decks while keeping each deck's internal order intact. The shuffle product is important in rough path theory because it encodes how products of iterated integrals combine.

\begin{defi} \label{shuffleprod}
    For any multi-indices $I = (i_1, \dots, i_n)$ and $J = (j_1, \dots, j_m)$, let $I' = (i_1, \dots, i_{n-1})$ and $J' = (j_1, \dots, j_{m-1})$. The shuffle product $e_I \shuffle e_J$ is defined recursively as
    \[
    e_I \shuffle e_J = (e_{I'} \shuffle e_J) \otimes e_{i_n} + (e_I \shuffle e_{J'}) \otimes e_{j_m},
    \]
    with the convention $e_I \shuffle e_{\emptyset} = e_{\emptyset} \shuffle e_I = e_I$.
\end{defi}
\begin{ex}
    If we consider $e_1 \otimes e_2$ and $e_3$ we get
    \[
    (e_1 \otimes e_2) \shuffle e_3 =  e_1 \otimes e_3 \otimes e_2 + e_3 \otimes e_1 \otimes e_2 + e_1 \otimes e_2 \otimes e_3.
    \]
\end{ex}

The shuffle product of two tensors of lengths $m$ and $n$ has $\binom{m + n}{m}$ elements. 
\begin{ex}\label{exShuffle2}
    Consider $I=\{1,2,3\}$ and $J=\{2,1\}$. With a slight abuse of notation, we may write $e_I=e_1\otimes e_2\otimes e_3=e_{123}$ and $e_J=e_2\otimes e_1=e_{21}$. To better observe the shuffling, in the expression below we underline the indexes corresponding to $e_{21}$:
    \begin{align*}
        e_{123} \shuffle e_{21} 
        &=  e_{123\underline{21}} + e_{\underline{2}123\underline{1}} + e_{\underline{21}123} + e_{\underline{2}1\underline{1}23} + e_{\underline{2}12\underline{1}3} \\
        &\quad + e_{1\underline{2}23\underline{1}} + e_{1\underline{2}2\underline{1}3} + e_{1\underline{21}23} + e_{12\underline{2}3\underline{1}} + e_{12\underline{21}3} \\
        &= e_{12321} + e_{12123} + 2e_{12213} + 2e_{12231} + 2e_{21123} + e_{21213} + e_{21231}.   
    \end{align*}
\end{ex}

While $e_1\otimes e_2\otimes e_3 \in V^{\otimes 3}$ and $e_2\otimes e_1\in V^{\otimes 2}$, note that $e_1\otimes e_2\otimes e_3 \shuffle e_2\otimes e_1 \in V^{\otimes 5}$. As we shall see in Section \ref{signatureappoximation}, the shuffle product sharply increases the order of computations required to construct the signature-based approximation to the volatility. 

Let $\ell^1, \ell^2\in T(V)$, with $\ell^1 =  \sum_{|I|\geq 0} \ell^1_I e_I$ and $\ell^2 =  \sum_{|J|\geq 0} \ell^2_J e_J$. We then have:
\[
\ell^1 \shuffle \ell^2 =  \sum_{|I|, |J| \geq 0} \ell^1_I \ell^2_J e_I \shuffle e_J.
\]
The collection $(T(V), +, \cdot, \shuffle)$ is a commutative algebra.

\subsection{Signature of Paths of Bounded Variation}\label{signatboundedvar}

We say that $\mathcal{D}_{[0,T]} = \{t_0, t_1, \dots, t_n\}$ is a partition of the interval $[0,T] $ if $0 = t_0 < t_1 < \cdots < t_n = T$. If the interval is clear from the context, we will simply write $\mathcal{D}$. Let $V$ be a $d$-dimensional vector space. 

\begin{defi}\label{p_variation}
    Let $p \geq 1$. A continuous path $X : [0,T] \to V$ has finite $p$-variation in $[0,T]$ if 
    \[
    ||X||_p = \left( \sup_{\mathcal{D}} \sum_{t_i \in \mathcal{D}} |X_{t_{i+1}} - X_{t_i}|^p\right)^{1/p}
    \]
    is finite. We denote by $\mathcal{V}^p([0,T])$ the set of continuous paths with finite $p$-variation in $[0, T]$.
\end{defi}
It is not difficult to show that if $1 < p < q$, then 
\[
\mathcal{V}^1([0,T]) \subset \mathcal{V}^p([0,T]) \subset \mathcal{V}^q([0,T]) \subset \mathcal{C}([0,T]).
\]
If $X \in \mathcal{V}^1([0,T])$, we say that $X$ is of bounded variation. 

To define the signature of a continuous path of bounded variation, we need to define integrals with respect to paths. Notice that, since $X$ has bounded variation, one can define the integral with respect to $X$ using Young's integration theory (see \cite{young}).

\begin{defi}[Signature]
    Consider a $d$-dimensional vector space $V$ and let $X:[0,T]\to V$ be a continuous path of bounded variation. Using the multi-index notation $I=(i_1,\dots,i_n)\in\{1,\dots, d\}^n$, we define the signature of $X$ on the interval $[0,T]$ as 
    \[
    S(X)_{0,T} = \sum_{|I|\geq 0} S(X)_{0,T}^I e_I,
    \]
    where the coefficients of the signature are defined recursively as 
    \begin{align*} 
    S(X)_{0,T}^0 & = \langle e_{\emptyset}, S(X)_{0,T}\rangle := 1 \nonumber \\
    S(X)_{0,T}^I & = \langle e_I, S(X)_{0,T}\rangle := \int_0^T \langle e_{I'}, S(X)_{0,s}\rangle dX^{i_{n}}_s. \label{signat_recursive}
    \end{align*}
\end{defi}

Recall that $I'=(i_1,\cdots,i_{n-1})$. The element of the signature corresponding to index $I$ can be written as 
\[
S(X)_{0,T}^{i_1, \dots, i_n} = \int_{0 < s_1 < \dots < s_n < T} dX_{s_1}^{i_1} \cdots dX_{s_n}^{i_n}.
\]

To gain some insight into the structure of the signature, consider the following two examples from \cite{chevyrev16}. 
\begin{ex}\label{exampleSig1}
Let $X:[0,t]\to\R$ be a one-dimensional path of bounded variation. Note that the multi-indexes are $I=(i_1,\cdots,i_n)\in\{1\}^n$. The signature of $X$ is then given by
\begin{align*} 
    S(X)_{0,t}^0 & = 1 \\
    S(X)_{0,t}^1 & = \int_0^t dX_s = X_t - X_0 \\
    S(X)_{0,t}^{11} & = \int_0^t \int_0^s dX_u dX_s = \frac{1}{2!}(X_t - X_0)^2 \\
    S(X)_{0,t}^{111} & = \int_0^t \int_0^s \int_0^u dX_rdX_u dX_s = \frac{1}{3!}(X_t - X_0)^3
\end{align*}
and so on. 
\end{ex}
\begin{ex}\label{exampleSig2}
Let $X:[0,5]\to\R^2$ be defined by $X_t = (X_t^1, X_t^2) = (3+t, (3+t)^2)$. Being a two-dimensional path, the multi-indexes are $I=(i_1,\cdots,i_n)\in\{1,2\}^n$. The elements of the signature are 
\allowdisplaybreaks
\begin{flalign*} 
    S(X)_{0,5}^0 & = 1 \\
    S(X)_{0,5}^1 & = \int_0^5 dX_t^1 = \int_0^5 dt = X_5^1 - X_0^1 = 5 \\
    S(X)_{0,5}^2 & = \int_0^5 dX_t^2 = \int_0^5 2(3+t)dt = X_5^2 - X_0^2 = 55 \\
    S(X)_{0,5}^{11} & = \int_0^5 \int_0^t dX_s^1 dX_t^1 = \int_0^5 \left[ \int_0^t ds \right] dt = \frac{25}{2} \\
    S(X)_{0,5}^{12} & = \int_0^5 \int_0^t dX_s^1 dX_t^2 = \int_0^5 \left[ \int_0^t ds \right] 2(3+t) dt = \frac{475}{3} \\
    S(X)_{0,5}^{21} & = \int_0^5 \int_0^t dX_s^2 dX_t^1 = \int_0^5 \left[ \int_0^t 2(3+s)ds \right] dt = \frac{350}{3} \\
    S(X)_{0,5}^{22} & = \int_0^5 \int_0^t dX_s^2 dX_t^2 = \int_0^5 \left[ \int_0^t 2(3+s)ds \right] 2(3+t)dt = \frac{3025}{2} \\
    S(X)_{0,5}^{111} & = \int_0^5 \int_0^t \int_0^s dX_u^1 dX_s^1 dX_t^1 = \int_0^5 \left[ \int_0^t \left[ \int_0^s du \right] ds \right] dt = \frac{125}{6}
\end{flalign*}
and so on. The signature of $X$ on $[0,5]$ can therefore be written as
\[
S(X)_{0,5} = (1,\ 5,\ 55,\ 12.5,\ 158.33,\ 116.66,\ 1512.5,\ 20.83,\dots).
\]
\end{ex}

The next result, known as Chen's identity, shows that, even though a path's signature is defined algebraically, it still captures the way the path evolves over time. In particular, it enables the reconstruction of the signature of $X$ over the interval $[0, T]$ provided that it is known on a collection of subintervals that cover $[0, T]$. 
\begin{teo}[Chen's identity]
    Let $X: [0,T] \to V$ be a continuous path of bounded variation. Then, for all $t\in (0, T)$,
    \begin{equation} \label{chenidentity}
    S(X)_{0,T} = S(X)_{0,t} \otimes S(X)_{t, T}.
    \end{equation}
\end{teo}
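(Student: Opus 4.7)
The plan is to prove Chen's identity coordinate by coordinate on the canonical basis $\{e_I\}$ of $T((V))$. Since any element of $T((V))$ is determined by its coordinates $\langle e_I, \cdot\rangle$, it suffices to establish that
\[
\langle e_I, S(X)_{0,T}\rangle = \langle e_I, S(X)_{0,t}\otimes S(X)_{t,T}\rangle
\]
for every multi-index $I = (i_1,\dots,i_n) \in \{1,\dots,d\}^n$ and every $n \geq 0$. The case $n=0$ is immediate from the convention $\langle e_{\emptyset}, S(X)_{\cdot,\cdot}\rangle = 1$, and the case $n=1$ reduces to the additivity of the Young integral, $X_T^i - X_0^i = (X_t^i - X_0^i) + (X_T^i - X_t^i)$.

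For arbitrary $n$, the idea is to expand the left-hand side as the iterated integral
\[
\langle e_I, S(X)_{0,T}\rangle = \int_{0<s_1<\dots<s_n<T} dX_{s_1}^{i_1}\cdots dX_{s_n}^{i_n}
\]
and then partition the simplex at the cut-point $t$: for each $k\in\{0,1,\dots,n\}$ set
\[
\Delta^{(k)} = \{(s_1,\dots,s_n) : 0<s_1<\dots<s_k<t<s_{k+1}<\dots<s_n<T\},
\]
with the usual convention when $k=0$ or $k=n$. The sets $\Delta^{(0)},\dots,\Delta^{(n)}$ partition the full simplex up to a null set, and each $\Delta^{(k)}$ factors as a Cartesian product of a simplex in $[0,t]^k$ and a simplex in $[t,T]^{n-k}$. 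Applying Fubini's theorem for Lebesgue–Stieltjes integrals (justified because the bounded-variation assumption endows $X$ with genuine signed measure increments) gives
\[
\langle e_I, S(X)_{0,T}\rangle = \sum_{k=0}^{n} \langle e_{(i_1,\dots,i_k)}, S(X)_{0,t}\rangle \, \langle e_{(i_{k+1},\dots,i_n)}, S(X)_{t,T}\rangle,
\]
where the empty sub-multi-indices (for $k=0$ and $k=n$) contribute the value $1$.

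To recognize this as the right-hand side, I would unpack the definition of $\otimes$ on $T((V))$: the $n$-th level of $S(X)_{0,t}\otimes S(X)_{t,T}$ is $\sum_{k=0}^n S(X)_{0,t}^{(k)} \otimes S(X)_{t,T}^{(n-k)}$, whose coefficient on $e_{i_1}\otimes\cdots\otimes e_{i_n}$ is exactly $\sum_{k=0}^n \langle e_{(i_1,\dots,i_k)}, S(X)_{0,t}\rangle \, \langle e_{(i_{k+1},\dots,i_n)}, S(X)_{t,T}\rangle$, matching the expression above and yielding \eqref{chenidentity}. The only delicate step is ensuring the simplex decomposition and the Fubini application are fully rigorous; this is routine in the bounded-variation setting, since all iterated integrals can be interpreted in the Lebesgue–Stieltjes sense. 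An alternative route would use induction on $n$ together with the recursive definition $\langle e_I, S(X)_{0,T}\rangle = \int_0^T \langle e_{I'}, S(X)_{0,s}\rangle\, dX_s^{i_n}$ and the additivity $\int_0^T = \int_0^t + \int_t^T$, but the geometric decomposition above makes the combinatorics of the tensor product on $T((V))$ particularly transparent.
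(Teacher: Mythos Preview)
Your proof is correct. The paper, however, takes the inductive route you mention only as an alternative: it proves the coordinate identity
\[
S(X)_{0,T}^{i_1,\dots,i_n} = \sum_{k=0}^n S(X)_{0,t}^{i_1,\dots,i_k}\, S(X)_{t,T}^{i_{k+1},\dots,i_n}
\]
by induction on $n$, using the recursive definition $S(X)_{0,T}^{i_1,\dots,i_{n+1}} = \int_0^T S(X)_{0,s}^{i_1,\dots,i_n}\, dX_s^{i_{n+1}}$, splitting $\int_0^T = \int_0^t + \int_t^T$, and applying the induction hypothesis inside the second integral. Your simplex-decomposition argument is more geometric and, as you note, makes the match with the tensor-product formula on $T((V))$ immediate; its cost is the appeal to Fubini for iterated Lebesgue--Stieltjes integrals, which is unproblematic here but is an extra ingredient. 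The paper's induction avoids Fubini entirely and needs only one-dimensional additivity of the integral, at the price of a slightly less transparent bookkeeping step when reassembling the sum. Both are standard; your write-up would be strengthened by stating explicitly that the boundary sets $\{s_k = t\}$ have measure zero with respect to the product Stieltjes measure (this is where continuity of $X$ enters), so that the $\Delta^{(k)}$ genuinely partition the simplex up to a null set.
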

\begin{proof}
    We need to prove that, given $i_1, \dots, i_n$, then
    \[
    S(X)_{0,T}^{i_1, \dots, i_n} = \sum_{k = 0}^n S(X)_{0,t}^{i_1, \dots, i_k} S(X)_{t,T}^{i_{k+1}, \dots, i_n}
    \]
    We use induction on the level of signature $n$. If $n = 0$ then \eqref{chenidentity} is simply
    \[
    1 = 1 \otimes 1,
    \]
    which holds trivially. Assume that \eqref{chenidentity} holds for all $n \geq 0$. For $n+1$ we have
    \begin{align*}
        S(X)_{0,T}^{i_1, \dots, i_{n+1}} = & \int_0^T S(X)_{0,s}^{i_1, \dots, i_n} dX_s^{i_{n+1}}  \\
        = & \int_0^t S(X)_{0,s}^{i_1, \dots, i_n} dX_s^{i_{n+1}} + \int_t^T S(X)_{0,s}^{i_1, \dots, i_n} dX_s^{i_{n+1}} \\
        = &\ S(X)_{0,t}^{i_1, \dots, i_{n+1}} + \int_t^T S(X)_{0,s}^{i_1, \dots, i_n} dX_s^{i_{n+1}} \\
        = &\ S(X)_{0,t}^{i_1, \dots, i_{n+1}} + \int_t^T \sum_{k=0}^n S(X)^{i_1, \dots, i_k}_{0,t} S(X)_{t,s}^{i_{k+1}, \dots, i_n} dX^{i_{n+1}}_s,
    \end{align*}
    where the last equality follows from the induction step. Rearranging,
    \begin{align*}
        S(X)_{0,T}^{i_1, \dots, i_{n+1}} = &\ S(X)_{0,t}^{i_1, \dots, i_{n+1}} + \sum_{k=0}^nS(X)^{i_1, \dots, i_k}_{0,t}  \int_t^T S(X)_{t,s}^{i_{k+1}, \dots, i_n} dX^{i_{n+1}}_s \\
        = &\ S(X)_{0,t}^{i_1, \dots, i_{n+1}} + \sum_{k=0}^nS(X)^{i_1, \dots, i_k}_{0,t} S(X)_{t,T}^{i_{k+1}, \dots, i_{n+1}} \\
        = &\ \sum_{k = 0}^{n+1} S(X)^{i_1, \dots, i_k}_{0,t} S(X)_{t,T}^{i_{k+1}, \dots, i_{n+1}}.
    \end{align*}
which concludes the proof.
\end{proof}
From the definition of the signature, it is clear that $S(X)_{0,T} \in T((V))$. However, we can show that the signature of $X$ lies in a smaller space.
\begin{defi}
    An element $\mathbf{a} \in T((V))$ is said to be group-like if for every pair $\ell^1, \ell^2 \in T(V)$ we have
    \[
    \langle \ell^1, \bfa \rangle \langle \ell^2, \bfa \rangle = \langle \ell^1 \shuffle \ell^2, \bfa \rangle.
    \]
    We denote by $G(V)$ the set of group-like elements of $T((V))$.
\end{defi}

The group-like property is \emph{analogous} to the behavior of exponentials: just as $e^{x+y} = e^xe^y$ transforms addition into multiplication, signatures can be thought of as “exponentials” of paths rather than numbers. The shuffle product represents all the ways in which two tensors (say, $\ell_1$ and $\ell_2$) can be combined while preserving their internal order. The group-like condition says that evaluating the shuffle is equivalent to evaluating each tensor separately and multiplying the results. We now make this idea precise.

\begin{prop}\label{grouplikeprop}
    Let $X:[0,T] \to V$ be a continuous path of bounded variation. Then, the signature of $X$ satisfies the group-like property. That is, for every pair $\ell^1, \ell^2 \in T(V)$,  
    \[
    \langle \ell^1, S(X)_{0,T} \rangle \langle \ell^2, S(X)_{0,T} \rangle = \langle \ell^1 \shuffle \ell^2, S(X)_{0,T} \rangle.
    \]
\end{prop}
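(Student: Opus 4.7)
\medskip
\noindent\textbf{Proof plan.} By bilinearity of both sides in $\ell^1, \ell^2$ (the pairing $\langle\cdot,\cdot\rangle$ is linear in its first argument and, by Definition~\ref{shuffleprod} extended linearly, $\shuffle$ is bilinear), it is enough to establish the identity for pure basis tensors $\ell^1 = e_I$ and $\ell^2 = e_J$. In other words, the goal reduces to showing
\[
S(X)_{0,T}^{I}\; S(X)_{0,T}^{J} \;=\; \langle e_I \shuffle e_J,\, S(X)_{0,T}\rangle
\]
for all multi-indices $I,J$.

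I would proceed by induction on $n+m$, where $n=|I|$ and $m=|J|$. The base case $n=0$ or $m=0$ is immediate from the convention $e_{I}\shuffle e_{\emptyset}=e_{I}$ and from $S(X)_{0,T}^{0}=1$. For the inductive step, write $I=(I',i_n)$ and $J=(J',j_m)$ and view both signature coordinates as Young integrals on $[0,T]$:
\[
S(X)_{0,T}^{I}=\int_0^T S(X)_{0,s}^{I'}\,dX^{i_n}_s,\qquad S(X)_{0,T}^{J}=\int_0^T S(X)_{0,s}^{J'}\,dX^{j_m}_s.
\]
Since $X$ has bounded variation, these are real-valued bounded variation functions of the upper limit, so integration by parts gives
\[
S(X)_{0,T}^{I}\; S(X)_{0,T}^{J} \;=\; \int_0^T S(X)_{0,s}^{I'}\,S(X)_{0,s}^{J}\,dX^{i_n}_s \;+\; \int_0^T S(X)_{0,s}^{I}\,S(X)_{0,s}^{J'}\,dX^{j_m}_s.
\]

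Now I would apply the induction hypothesis to the two products inside the integrals, since $|I'|+|J|$ and $|I|+|J'|$ are both strictly less than $n+m$. This turns the right-hand side into
\[
\int_0^T \langle e_{I'}\shuffle e_J,\, S(X)_{0,s}\rangle\,dX^{i_n}_s \;+\; \int_0^T \langle e_I\shuffle e_{J'},\, S(X)_{0,s}\rangle\,dX^{j_m}_s.
\]
The final step is to recognize this expression using the recursive definition of the shuffle product,
\[
e_I\shuffle e_J \;=\; (e_{I'}\shuffle e_J)\otimes e_{i_n} \;+\; (e_I\shuffle e_{J'})\otimes e_{j_m},
\]
together with the recursive definition of signature coordinates, which gives $\langle \ell\otimes e_k, S(X)_{0,T}\rangle=\int_0^T\langle \ell, S(X)_{0,s}\rangle\,dX^k_s$ for any $\ell\in T(V)$ by linearity. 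Comparing the two sides then yields the desired identity.

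\medskip
\noindent\textbf{Expected main obstacle.} The conceptual content is mild once integration by parts is admissible, so the only real difficulty is the bookkeeping: aligning the two terms produced by integration by parts with the two terms in the recursive shuffle, and checking that the recursive characterization of signature coordinates propagates correctly through the linear pairing. The analytic tool (Young's integration by parts) is fine in the bounded-variation setting assumed here; the proof does not extend verbatim to rougher integrators, which is why the signature of a rough path requires the group-like property to be imposed or proven by different means.
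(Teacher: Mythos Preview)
Your proposal is correct and follows essentially the same approach as the paper: reduce to basis tensors by bilinearity, induct on $|I|+|J|$, apply integration by parts to the product $S(X)_{0,T}^{I}\,S(X)_{0,T}^{J}$, invoke the induction hypothesis on the resulting integrands, and then match the two terms against the recursive definition of the shuffle product. The only cosmetic difference is that the paper phrases the base case as $n=0$ (i.e.\ $I=J=\emptyset$), whereas you note that either $|I|=0$ or $|J|=0$ suffices; both are fine.
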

\begin{proof}
    By linearity, it is enough to prove it for $\ell^1 = e_I$ and $\ell^2 = e_J$. Let $n = |I| + |J|$. We will prove the result by induction on $n$. For $n=0$, we have $I=J=\emptyset$ and the result holds trivially. Assume that it holds for $n$, and let $I = (i_1, \dots, i_{n+1-m})$ and $J = (j_1, \dots, j_m)$. Note first that $\langle e_I, S(X)_{0,T} \rangle = S(X)_{0,T}^{I}$ and $\langle \e_J, S(X)_{0,T} \rangle = S(X)_{0,T}^{J}$. Using integration by parts and the notation for $I'$ and $J'$ introduced in Definition \ref{shuffleprod}, we have  
    \begin{align*}
        & \langle e_I, S(X)_{0,T} \rangle \langle e_J, S(X)_{0,T} \rangle \\
        & \quad = \int_0^T \langle e_J, S(X)_{0,s} \rangle d \langle e_I, S(X)_{0,s} \rangle + \int_0^T \langle e_I, S(X)_{0,s} \rangle d\langle e_J, S(X)_{0,s} \rangle   \\
        & \quad = \int_0^T \langle e_{I'}, S(X)_{0,s} \rangle \langle e_J, S(X)_{0,s} \rangle dX_s^{i_{n+1-m}}  + \int_0^T \langle e_I, S(X)_{0,s} \rangle \langle e_{J'}, S(X)_{0,s} \rangle dX_s^{j_m}.
    \end{align*}
    Using the induction step, it follows that 
    \begin{align*}
        & \langle e_I, S(X)_{0,T} \rangle \langle e_J, S(X)_{0,T} \rangle \\
        & \quad = \int_0^T \langle (e_{I'} \shuffle e_J), S(X)_{0,s} \rangle dX_s^{i_{n+1-m}}  + \int_0^T \langle (e_I \shuffle e_{J'}), S(X)_{0,s} \rangle dX_s^{j_m} \\
        & \quad = \langle (e_{I'} \shuffle e_J) \otimes e_{n+1-m}, S(X)_{0,T}\rangle  + \langle (e_I \shuffle e_{J'}) \otimes e_{j_m}, S(X)_{0,T} \rangle \\
        & \quad = \langle (e_I \shuffle e_J), S(X)_{0,T} \rangle, 
    \end{align*}
    which concludes the proof.
\end{proof}

The function $\langle\cdot,\cdot\rangle$ defined in (\ref{linearfunctional}) allows us to interpret the elements $\ell$ of the tensor algebra $T(V)$ as linear functionals when paired with a signature $S(X)_{0,T}$. If we evaluate two linear functionals $\ell_1$ and $\ell_2$ separately on the signature $S(X)_{0,T}$ and then multiply those two scalar values, the product equals what we get by evaluating the single functional $\ell^1 \shuffle \ell^2$ on $S(X)_{0,T}$. 

If we consider the path $X:[0,5]\to\R^2$ from Example \ref{exampleSig2}, we have 
\[
S(X)_{0,5}^{12} = \int_0^5 \int_0^t dX_s^1 dX_t^2 \quad \text{and} \quad S(X)_{0,5}^1 = \int_0^5 dX_t^1. 
\]
The multiplication of these two iterated integrals would be a polynomial in the components of the signature $S(X)_{0,5}$. The above proposition says that such a nonlinear expression can still be treated in a linear way provided we use the shuffle product. With the slight abuse of notation we used in Example \ref{exShuffle2}, we see that $e_{12} \shuffle e_1 = e_{121} + 2e_{112}$. Therefore, the product of the iterated integrals $S(X)_{0,5}^{12}$ and $S(X)_{0,5}^1$ can be expressed as a linear combination of 
\[
S(X)_{0,5}^{121} = \int_0^5 \int_0^t \int_0^s dX_u^1 dX_s^2 dX_t^1 \quad \text{and} \quad S(X)_{0,5}^{112} = \int_0^5 \int_0^t \int_0^s dX_u^1 dX_s^1 dX_t^2. 
\]

In other words, products of iterated integrals (polynomials in the elements of the signature) can be written as linear combinations of \emph{higher-order} integrals. The fact that the space of polynomials on signatures can be linearly organized via the shuffle product will be used in Section \ref{SigBasedModel}. The price to pay for linearity is the higher dimension of the tensor space in which the linear expression lives, which happens to be the space in which numerical computations will be carried out.

\subsection{Rough Paths} \label{RoughPaths}

So far we have developed the signature for continuous paths of bounded variation. By Young's integration theory, this construction extends to continuous paths of finite $p$-variation for $p < 2$. However, the stochastic processes most commonly used in finance---such as Brownian motion or fractional Brownian motion with small Hurst indexes---do not satisfy this condition. We therefore need to find a way to extend the signature to a broader class of paths with more irregular behavior.

This extension is achieved by \emph{lifting} the paths. Intuitively, \emph{lifting} refers to the process of enriching a path with additional information (namely, its iterated integrals). For smooth paths this yields the signature, while for more irregular paths this is done abstractly.

Let $X: [0,T] \to V$ be a path of finite $p$-variation for some $p\geq 2$, so that $X$ may be too irregular for classical iterated integrals to exist. To overcome this difficulty, we define a new object called a rough path 
\[
\bX_{s,t} = \left(\X^1_{s,t}, \X^2_{s,t}, \X^3_{s,t},\dots, \X^{\floor{p}}_{s,t} \right), 
\]
where $\X^1_{s,t}=X_t - X_s$ is the increment of the original path and each $\X^k_{s,t}$ is an approximation to the $k$-th order iterated integral
\[
\X^k_{s,t} \approx \int_{s<u_1<\cdots<u_k<t} dX_{u_1} \otimes\cdots \otimes dX_{u_k}
\]  
At this level of intuition, \emph{approximation} means that the $\X^k$ serve as proxies for the true iterated integrals, satisfying certain algebraic properties and appropriate $p$-variation bounds. This lifted structure enables us to define integration against $X$ even when classical approaches such as Riemann–Stieltjes or Young integration break down. In the remainder of this section we formalize these ideas.

\begin{defi}[Truncated Tensor Algebra]
Let $N\in \N$. We define the \emph{truncated tensor algebra of order $N$ over $V$} as
\[
T^N(V) = \{ \bfa = (a_0, a_1, \dots) \in T((V)); \text{ } a_k = 0 \text{ }\forall k > N\}.
\]
The projection map $\pi_{\leq N}: T((V))\to T^N(V)$ is defined as 
\[
\pi_{\leq N}((a_i)_{i=0}^{\infty}) = (a_i)_{i=0}^N.
\]
\end{defi}

For any $\bfa, \bfb\in T^N(V)$, we define the truncated tensor product in $T^N(V)$ as
\[
\bfa \otimes_{\leq N}\bfb = \pi_{\leq N}(\bfa\otimes\bfb).
\]
When dealing with elements of $T^N(V)$, if there is no risk of confusion we will generally use $\otimes$ to denote $\otimes_{\leq N}$.

Let $X: [0,T] \to V$ be a continuous path of bounded variation and $\Delta_{T} = \{(s,t) \in [0,T]^2 ; s \leq t\}$. The truncated signature of order $N$ of a path $X$ can therefore be defined as 
\begin{align*}
S(X)^{\leq N}: \Delta_T & \to T^N(V) \\
(s, t) & \mapsto \pi_{\leq N}(S(X)_{s,t}).
\end{align*}
We write $S(X)^{\leq N}(s,t) = S(X)^{\leq N}_{s,t}$. 

A slight modification in the proof of Chen's identity shows that, for all $0\leq s<u<t\leq T$,
\begin{equation}\label{multiplicativesignatN}
S(X)^{\leq N}_{s,t} = S(X)^{\leq N}_{s,u} \otimes S(X)^{\leq N}_{u,t}.
\end{equation}
We say that $S(X)^{\leq N}$ is \emph{multiplicative}. The following definition extends the multiplicative property to a more general setting. 

\begin{defi}[Multiplicative functional]
    For $N \in \N$, let $\bX : \Delta_T \to T^N(V)$ be a continuous map and denote $\bX(s,t)=\bX_{s,t}$. Since $\bX_{s,t} \in T^N(V)$, we can write $\bX_{s,t} = (\bX_{s,t}^0, \bX_{s,t}^1, \dots, \bX_{s,t}^N)$, where $\bX_{s,t}^k \in V^{\otimes k}$ for each $k$. We say that $\bX$ is a \textnormal{multiplicative functional} of degree $N$ in $V$ if, for every $(s,t)\in \Delta_T$, we have $\bX_{s,t}^0:=1$ and 
    \begin{equation} \label{chenrough}
    \bX_{s,t} = \bX_{s,u} \otimes \bX_{u,t}
    \end{equation}
    for all $s\leq u\leq t$.
\end{defi}

By extension, we also refer to \eqref{chenrough} as Chen's identity. Consider the case when $N = 1$. Chen's identity says that
\[
(1, \bX^{1}_{s,t}) = (1, \bX^1_{s,u}) \otimes (1, \bX^1_{u,t}) = (1, \bX^1_{s,u} + \bX^{1}_{u,t}),
\]
which implies $\bX^1_{s,t} = \bX^1_{s,u} + \bX^1_{u,t}$. The type of functionals that satisfy this property are called \emph{additive} functionals. Additivity provides an important step toward the definition of a rough path. 

Select an arbitrary constant $v\in V$ and define a path $\psi:[0,T]\to V$ by 
\[
\psi_t:=v + \bX_{0,t}^1.
\]
Then, $\psi_t - \psi_s = \bX_{0,t}^1 - \bX_{0,s}^1$. By additivity (Chen's identity at level $1$), this equals $\bX_{s,t}^1$. That is,
\[
\bX_{s,t}^1 = \psi_t - \psi_s.
\]
In other words, a multiplicative functional of order $1$ in $V$ is equivalent to the increment map of a path $\psi:[0,T]\to V$, unique up to an additive constant. 

Up to now, we started from a continuous path of finite variation $X: [0,T] \to V$ and showed that its truncated signature---an element of the tensor algebra $T^N(V)$---satisfies the multiplicative identity (\ref{multiplicativesignatN}). We now reverse this perspective: instead of constructing the signature from a classical path, we assume the algebraic structure of a signature and study its properties as a path taking values in the tensor algebra $T^N(V)$. 

We have already seen that the first level of a multiplicative functional is given by the increments of a path in $V$, just as the first level of the signature of a path corresponds to its own increments. We can now generalize this. 
\begin{lema}\label{additiveFunctionals}
    Let $\bX, \bY:\Delta_T\to T^N(V)$ be two multiplicative functionals of order $N$ that agree on the first $N-1$ levels. Then, the function $\Psi:\Delta_T\to V^{\otimes N}$ defined by $\Psi_{s,t} = \bX_{s,t}^N - \bY_{s,t}^N$ is additive, that is, $\Psi_{s,t} = \Psi_{s,u} + \Psi_{u,t}$, for all $s\leq u\leq t$.
\end{lema}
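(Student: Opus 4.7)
The plan is to compute $\bX_{s,t}^N - \bY_{s,t}^N$ directly by invoking Chen's identity for both functionals and isolating the $N$-th level of the truncated tensor product.

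First I would write out the multiplicative relation $\bX_{s,t} = \bX_{s,u} \otimes \bX_{u,t}$ at level $N$ using the definition of the tensor product on $T^N(V)$, obtaining
\[
\bX_{s,t}^N \;=\; \sum_{k=0}^{N} \bX_{s,u}^{k} \otimes \bX_{u,t}^{N-k},
\]
and the analogous identity for $\bY$. Subtracting the two expressions and using the hypothesis that $\bX^{j}=\bY^{j}$ for every $j\in\{0,1,\dots,N-1\}$, the intermediate terms corresponding to $1\leq k\leq N-1$ cancel, since in each such term \emph{both} factors have index in $\{1,\dots,N-1\}$ and therefore agree for $\bX$ and $\bY$.

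The only surviving contributions come from the boundary indices $k=0$ and $k=N$. Using $\bX_{s,u}^{0}=\bY_{s,u}^{0}=1$ and $\bX_{u,t}^{0}=\bY_{u,t}^{0}=1$, these simplify to
\[
\bX_{u,t}^{N}-\bY_{u,t}^{N} \;=\; \Psi_{u,t}
\qquad\text{and}\qquad
\bX_{s,u}^{N}-\bY_{s,u}^{N} \;=\; \Psi_{s,u},
\]
respectively. Adding them gives $\Psi_{s,t}=\Psi_{s,u}+\Psi_{u,t}$, which is the additivity statement.

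I do not foresee any serious obstacle: the argument is essentially a bookkeeping exercise on the level-$N$ component of Chen's identity, and the key observation is simply that the hypothesis kills all the ``mixed'' terms in the convolution sum, leaving only the two endpoint contributions whose difference is precisely $\Psi$. The one point requiring mild care is to verify that the assumption ``agree on the first $N-1$ levels'' is interpreted as agreement on levels $0$ through $N-1$, so that both factors in each intermediate term indeed coincide for $\bX$ and $\bY$; once this is fixed, the cancellation is immediate.
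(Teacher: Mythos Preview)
Your proof is correct and follows essentially the same route as the paper: both expand the $N$-th level of Chen's identity as a convolution sum, subtract the corresponding expressions for $\bX$ and $\bY$, and observe that the intermediate terms (those with both indices strictly between $0$ and $N$) cancel by the hypothesis, leaving only the two boundary contributions $\Psi_{s,u}$ and $\Psi_{u,t}$. The only cosmetic difference is that the paper separates out the boundary terms before writing the remaining sum, whereas you write the full sum first and then isolate $k=0$ and $k=N$.
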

\begin{proof}
    Due to the multiplicative property, $\bX_{s,t} = \bX_{s,u}\otimes \bX_{u,t}$ for all $s\leq u\leq t$. Consider the $N$-th level component of the functionals in each side of the last equality. In the case of the $N$-th level component of $\bX_{s,u}\otimes \bX_{u,t}$, we separate the summands that only include elements of $V^{\otimes N}$ from the rest: 
    \[
    \bX^N_{s,t} = \bX^N_{s,u} + \bX^N_{u,t}  + \sum_{i+j=N-1} \bX^i_{s,u}\otimes \bX^j_{u,t}, 
    \] 
    where the elements of the summation are tensor products of elements from lower levels ($i,j\ge 0$). An analogous expression holds for $\bY_{s,t}^N$. Then,
    \[
    \Psi_{s,t} = \bX_{s,t}^N - \bY_{s,t}^N = \Psi_{s,u} + \Psi_{u,t} + \sum_{i+j=N-1} \left( \bX^i_{s,u}\otimes \bX^j_{u,t} - \bY^i_{s,u}\otimes \bY^j_{u,t} \right).
    \]
    As $\bX$ and $\bY$ agree on the first $N-1$ levels, the last term on the right-hand side is zero, yielding $\Psi_{s,t} = \Psi_{s,u} + \Psi_{u,t}$, which is what we needed to prove.
\end{proof}
\begin{lema}\label{lema2multiplicative}
    Let $\bX:\Delta_T\to T^N(V)$ be a multiplicative functional of order $N$ in $V$ and let $\Psi:\Delta_T\to V^{\otimes N}$ be an additive function. Then $\bX + \Psi$ is also a multiplicative functional. 
\end{lema}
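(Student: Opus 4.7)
The plan is to verify directly that $\bX + \Psi$ satisfies the normalization at level zero and Chen's identity, viewing $\Psi_{s,t}$ canonically as the element $(0,\dots,0,\Psi_{s,t}) \in T^N(V)$ concentrated at level $N$. Normalization is immediate, since the zero-th level of $\bX_{s,t} + \Psi_{s,t}$ is $1 + 0 = 1$, so the only real content is the multiplicative identity.

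To that end, I would fix $s \leq u \leq t$ and expand the (truncated) tensor product bilinearly:
\[
(\bX_{s,u} + \Psi_{s,u}) \otimes (\bX_{u,t} + \Psi_{u,t}) = \bX_{s,u}\otimes \bX_{u,t} + \bX_{s,u} \otimes \Psi_{u,t} + \Psi_{s,u}\otimes \bX_{u,t} + \Psi_{s,u}\otimes \Psi_{u,t}.
\]
The first term equals $\bX_{s,t}$ by the multiplicativity of $\bX$. The cross terms $\bX_{s,u}\otimes \Psi_{u,t}$ and $\Psi_{s,u}\otimes \bX_{u,t}$ live a priori in components of degree $k+N$ for $k=0,1,\dots,N$, so after the projection $\pi_{\leq N}$ only the contribution from the scalar level $k=0$ of $\bX$ survives; using $\bX_{s,u}^0 = \bX_{u,t}^0 = 1$, these reduce to $\Psi_{u,t}$ and $\Psi_{s,u}$ respectively, both placed at level $N$. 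Finally, $\Psi_{s,u}\otimes \Psi_{u,t}$ lives at level $2N$ and is killed by the truncation.

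Putting these pieces together, the right-hand side of the expansion is $\bX_{s,t} + \Psi_{s,u} + \Psi_{u,t}$, and invoking the additivity of $\Psi$ yields $\Psi_{s,u}+\Psi_{u,t}=\Psi_{s,t}$, so the total equals $\bX_{s,t} + \Psi_{s,t}$, which is precisely the left-hand side of Chen's identity for $\bX + \Psi$.

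There is no real obstacle here: the argument is pure bookkeeping in the truncated tensor algebra. The only point requiring attention is being explicit about the embedding $V^{\otimes N}\hookrightarrow T^N(V)$ used to interpret $\bX_{s,t}+\Psi_{s,t}$, and verifying that the truncation $\otimes_{\leq N}$ annihilates exactly the higher-order cross terms so that the surviving contributions match $\Psi_{s,t}$ after applying additivity.
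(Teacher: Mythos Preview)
Your proposal is correct and follows essentially the same approach as the paper: a bilinear expansion of the truncated tensor product, identification of $\bX_{s,u}\otimes\bX_{u,t}=\bX_{s,t}$ by multiplicativity, reduction of the cross terms to $\Psi_{s,u}$ and $\Psi_{u,t}$ via the degree-counting argument (only the scalar level of $\bX$ survives truncation), elimination of $\Psi_{s,u}\otimes\Psi_{u,t}$ as a level-$2N$ term, and a final appeal to additivity. Your explicit mention of the level-zero normalization and the embedding $V^{\otimes N}\hookrightarrow T^N(V)$ is a nice touch that the paper leaves implicit.
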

\begin{proof}
    We need to show that, for all $s\leq u\leq t$,
    \[
    \bX_{s,t} + \Psi_{s,t} = (\bX_{s,u} + \Psi_{s,u}) \otimes  (\bX_{u,t} + \Psi_{u,t}).
    \]
    The right hand side of the above expression can be expanded as 
    \[
    \bX_{s,u}\otimes \bX_{u,t} + \bX_{s,u}\otimes \Psi_{u,t} + \Psi_{s,u}\otimes \bX_{u,t} + \Psi_{s,u}\otimes\Psi_{u,t}.
    \]
    As $\bX$ is multiplicative, $\bX_{s,u}\otimes \bX_{u,t} = \bX_{s,t}$. We now prove that  
    \begin{equation}\label{proofXpsi}
    \bX_{s,u}\otimes \Psi_{u,t} + \Psi_{s,u}\otimes \bX_{u,t} + \Psi_{s,u}\otimes\Psi_{u,t} = \Psi_{s,t}.
    \end{equation}
    Note that $\Psi$ takes values in $V^{\otimes N}$, which is the highest order component in the truncated tensor algebra $T^N(V)$. Therefore, when tensoring $\bX_{s,u}$ with $\Psi_{u,t}$, the only element in $V^{\otimes N}$ will be precisely $\Psi_{u,t}$. That is, $\bX_{s,u}\otimes_{\leq N} \Psi_{u,t} = \Psi_{u,t}$. More formally, assume that $\bX_{s,u}=\sum_{|J|\leq N}b_Je_J$ for some coefficients $b_J$, and let $I=(0,\dots,0,1)$ with $|I|=N$. Then, as $b_{\emptyset}=1$,  
    \[
    \bX_{s,u}\otimes \Psi_{u,t} = \left(\sum_{ |J|\leq N}b_Je_J \right)\otimes \Psi_{u,t}e_I = 1\Psi_{u,t}(e_{\emptyset}\otimes e_I) + \left(\sum_{|J|> N}c_Je_J \right),
    \]
    for some coefficients $c_J$, with $|J|>N$. We therefore have $\bX_{s,u}\otimes_{\leq N} \Psi_{u,t} = \Psi_{u,t}$. 
    The same reasoning applies to show that $\Psi_{s,u}\otimes_{\leq N} \bX_{u,t} = \Psi_{s,u}$. For the last term in left-hand side of (\ref{proofXpsi}), let $I$ be as above and $J=(0,\dots,0,1)$ with $|J|=N$. Then,  
    \[
    \Psi_{s,u}e_I \otimes \Psi_{u,t}e_J = \Psi_{s,u}\Psi_{u,t}e_I\otimes e_J.
    \]
    As $e_I\otimes e_J\in V^{\otimes2N}$, we have $\Psi_{s,u}\otimes_{\leq N} \Psi_{u,t}=0$. Writing $\otimes$ for $\otimes_{\leq N}$, it follows that  
    \[
    \bX_{s,u}\otimes \Psi_{u,t} + \Psi_{s,u}\otimes \bX_{u,t} + \Psi_{s,u}\otimes\Psi_{u,t} = \Psi_{s,u} + \Psi_{u,t} = \Psi_{s,t}, 
    \]
    where the last equation follows from the additivity assumption. This completes the proof.
\end{proof}

We now combine these results with the $p$-variation bounds on the functionals. Recall that a path $X:[0,T]\to V$ is $\alpha$-H\"older continuous if, for $s\leq t\in[0,T]$ and $0<\alpha\leq 1$, 
\[
|X_t-X_s| \leq C|t-s|^{\alpha} 
\]
for some constant $C>0$. If $X$ is $\alpha$-H\"older continuous, then it has finite $1/\alpha$-variation. The converse does not generally hold. However, if $X$ is a continuous path with finite $p$-variation, there exists a continuous, increasing reparametrization $\tau$ such that $X\circ\tau$ is $1/p$-H\"older continuous. 

For a continuous functional $\bX:\Delta_T\to T^N(V)$, we define 
\begin{equation}\label{p_var_norm}
    ||\bX||_{p\cdot \text{var}}:=\max_{1\leq k\leq N} \sup_{\mathcal{D}}{\left( \sum_{t_i\in\mathcal{D}} ||\pi_k(\bX_{t_i,t_{i+1}})||_{V^{\otimes k}}^{p/k} \right)^{k/p}},
\end{equation}
where the $\sup$ is taken over all the partitions $\mathcal{D}_{[0,T]}$. If $||\bX||_{p\cdot \text{var}}<\infty$, the functional $\bX$ is said to have finite $p$-variation. The $p$-variation distance between two functionals $\bX$ and $\bY$ of  finite $p$-variation is defined as 
\begin{equation}\label{pvardistance}
    d_{p\cdot \text{var}}(\bX,\bY):=||\bX-\bY||_{p\cdot \text{var}}.
\end{equation}

Assume that the multiplicative functionals $\bX, \bY:\Delta_T\to T^N(V)$ have finite $p$-variation and agree on the first $N-1$ levels. Then, by Lemma~\ref{additiveFunctionals}, the $N$-level difference,
\[
\Psi_{s,t}=\bX_{s,t}^N - \bY_{s,t}^N
\]
defines an additive function on $\Delta_T$. For a fixed $v\in V^{\otimes N}$, $\Psi$ induces a path $\psi:[0,T] \to V^{\otimes N}$ by
\[
\psi_t = v + \Psi_{0,t}.
\]
Then, by additivity of $\Psi$,
\[
\Psi_{s,t} = \psi_t - \psi_s 
\]
for any $s\leq t$. Additivity implies that we can think of $\Psi$ as a function that comes from the increments of a path. 

As $\bX$ and $\bY$ have finite $p$-variation, the difference $\Psi$ at level $N$ inherits finite $p/N$-variation. This follows directly from the structure of the $p$-variation norm (\ref{p_var_norm}), where the $N$-th level contributes with exponent $p/N$. 

It follows that there exists a continuous and increasing reparametrization $\tau$ of $[0,T]$ such that the reparametrized path $\Psi \circ \tau$ (defined by $\Psi_{\tau(s),\tau(t)} = \psi_{\tau(t)} - \psi_{\tau(s)}$) is $N/p$-H\"older continuous. This means that $\psi$ is regular enough to be treated as a genuine path in $V^{\otimes N}$, and that its increments $\Psi_{s,t} = \psi_t - \psi_s$, are well-behaved. Regularity allows us to reinterpret $\Psi$ as a “missing” top-level component that can be added to $\bY$ to produce a new functional
\[
\mathbf{Z}_{s,t}:=\bY_{s,t} + \Psi_{s,t},
\]
which, by Lemma~\ref{lema2multiplicative}, is multiplicative (and of order $N$).

Now, suppose that $N/p > 1$. Then, as $\psi \circ \tau$ is H\"older continuous with exponent greater than $1$, it must be constant. This implies that $\Psi_{s,t} = \psi_t - \psi_s = 0$ for all $s,t$, so the top level of $\bX$ and $\bY$ must coincide:
\[
\bX^N = \bY^N.
\]
Therefore, any two multiplicative functionals of finite $p$-variation that agree up to level $\floor{p}$ must in fact agree entirely. This suggests that the levels up to  $\floor{p}$ determine the rest.  

This observation raises the converse question: If a multiplicative functional $\bX$ is defined only up to level $N$, with finite $p$-variation and $N\geq \floor{p}$, can we extend it to higher levels in a consistent way? That is, can we construct a full multiplicative functional $\bY:\Delta_T\to T^n(V)$ with $n>N$, such that $\bY$ agrees with $\bX$ up to level $N$, and has finite $p$-variation?

Unlike the previous argument (no two extensions can differ when $N>\floor{p}$), this one is about existence and uniqueness of such an extension. The following result, proved in \cite{lyonscaruanalevy07}, answers the question affirmatively.
 
\begin{teo}[Extension Theorem]\label{extensionthm} 
Let $p \geq 1$ be a real number, $N \geq 1$ an integer, and let $\bX : \Delta_T \to T^N(V)$ be a multiplicative functional of degree $N$ with finite $p$-variation. Suppose that $N \geq \floor{p}$. Then, for every integer $n > N$, there exists a unique continuous multiplicative functional 
\[
\bY : \Delta_T \to T^n(V) 
\]
such that
\begin{itemize}
    \item [1.] $\bY$ agrees with $\bX$ up to level $N$; that is, $\pi_{\leq N}(\bY) = \bX$;
    \item [2.] $\bY$ has finite $p$-variation.
\end{itemize}
Moreover, the map that sends $\bX$ to its extension $\bY$ is continuous with respect to the $p$-variation metric.
\end{teo}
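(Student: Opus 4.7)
The plan is to reduce the statement to a single-level extension: by induction on $n-N$, it suffices to show that any multiplicative functional $\bX:\Delta_T\to T^N(V)$ of finite $p$-variation with $N\geq \floor{p}$ admits a unique extension to a multiplicative functional of degree $N+1$ with finite $p$-variation, and that this extension depends continuously on $\bX$ in the $p$-variation metric. Iterating this one-level extension then yields the claim for arbitrary $n>N$.

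Uniqueness is essentially already spelled out in the excerpt. If $\bY^{(1)},\bY^{(2)}:\Delta_T\to T^{N+1}(V)$ are two such extensions, they agree with $\bX$ on the first $N$ levels, so by the first of the two preceding lemmas (applied at degree $N+1$), the difference $\Psi_{s,t}:=(\bY^{(1)}_{s,t})^{N+1}-(\bY^{(2)}_{s,t})^{N+1}$ is additive. From the definition of the $p$-variation norm \eqref{p_var_norm}, $\Psi$ inherits finite $p/(N+1)$-variation. Since $N\geq \floor{p}$ forces $N+1>p$, one has $p/(N+1)<1$, so after a continuous increasing reparametrization the path associated with $\Psi$ becomes $(N+1)/p$-H\"older continuous with exponent strictly greater than $1$, hence constant. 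Therefore $\Psi\equiv 0$ and the extension at level $N+1$ is unique.

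For existence, I would build the $(N+1)$-th level as a limit of partition-based products. Extend $\bX_{s,t}$ to $\hat{\bX}_{s,t}\in T^{N+1}(V)$ by setting its top component to zero, and for any partition $\mathcal{D}=\{s=t_0<\cdots<t_k=t\}$ of $[s,t]$ define
\[
\hat{\bY}^{\mathcal{D}}_{s,t}:=\hat{\bX}_{t_0,t_1}\otimes\hat{\bX}_{t_1,t_2}\otimes\cdots\otimes\hat{\bX}_{t_{k-1},t_k}\in T^{N+1}(V).
\]
By multiplicativity of $\bX$ on $T^N(V)$, the first $N$ levels of $\hat{\bY}^{\mathcal{D}}_{s,t}$ already coincide with $\bX_{s,t}$ for every $\mathcal{D}$; at level $N+1$, inserting a single intermediate point $u$ between two consecutive nodes changes the top component by $\sum_{\alpha+\beta=N+1,\,1\leq\alpha,\beta\leq N}\bX^{\alpha}_{s,u}\otimes\bX^{\beta}_{u,t}$. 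Introducing a super-additive control $\omega$ associated with the $p$-variation of $\bX$, so that $\|\bX^{\alpha}_{s,t}\|\lesssim \omega(s,t)^{\alpha/p}$, each cross term is bounded by $\omega(s,u)^{\alpha/p}\omega(u,t)^{\beta/p}$, and since $(\alpha+\beta)/p=(N+1)/p>1$ a Young-type telescoping argument (successively deleting the node that separates the sub-interval of largest control) shows that $\hat{\bY}^{\mathcal{D}}_{s,t}$ is Cauchy as $|\mathcal{D}|\to 0$, producing a limit $\bY_{s,t}\in T^{N+1}(V)$.

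With $\bY$ in hand, multiplicativity follows by taking partitions of $[s,t]$ passing through any intermediate point $u$, splitting the product accordingly, and letting the mesh vanish on each side. Finite $p$-variation of $\bY$ at level $N+1$ is read off from the same telescoping estimate, since the bound $\|\bY^{N+1}_{s,t}\|\lesssim \omega(s,t)^{(N+1)/p}$ persists in the limit, and continuity of the map $\bX\mapsto\bY$ in the $p$-variation metric is inherited from the construction, which depends linearly and continuously on the entries of $\bX$ with convergence estimates uniform over bounded $p$-variation balls. The main technical obstacle is precisely this telescoping bound at level $N+1$: constructing the super-additive control and exploiting the strict inequality $(N+1)/p>1$ is what renders the cross-term sums convergent—everything else reduces to algebra or to a direct application of the two lemmas already established in the text.
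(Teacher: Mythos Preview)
The paper does not actually prove this theorem: it merely states the result and refers the reader to \cite{lyonscaruanalevy07} for the proof. Your sketch is therefore not competing with any argument in the paper itself, but it is essentially the standard Lyons proof found in that reference: induction on the level, uniqueness via the additivity lemma and the H\"older exponent $(N+1)/p>1$, and existence via the limit of partition products with a super-additive control and a ``remove the worst node'' telescoping bound. As a high-level outline this is correct; the only places where real work hides are the construction of the control $\omega$ from \eqref{p_var_norm} (one needs the $p$-variation to define a genuinely super-additive $\omega$, typically via $\omega(s,t)=\sum_k\|\bX^k\|_{p\text{-var};[s,t]}^{p/k}$ plus a binomial/neo-classical inequality) and the quantitative continuity statement, which in \cite{lyonscaruanalevy07} requires a more careful estimate than ``inherited from the construction.'' None of this invalidates your approach, but be aware that the paper treats the theorem as a black box rather than proving it.
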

This result highlights that, for a multiplicative functional with finite $p$-variation, the first $\floor{p}$ levels completely capture all of its information, leading naturally to the next definition.

\begin{defi}[Rough Path]
    Let $p \geq 1$. A \emph{$p$-rough path} is a continuous multiplicative functional 
    \[
    \bX : \Delta_T \to T^{\floor{p}}(V)
    \]
    of degree $\floor{p}$ with finite $p$-variation. The space of $p$-rough paths is denoted by $\Omega^p_T(V)$.
\end{defi}

What distinguishes rough paths from general multiplicative functionals is that they retain only the minimal number of components necessary to capture all relevant information. In this sense, a rough path can be viewed as a “compressed” version of a multiplicative functional: it contains exactly the levels up to $\floor{p}$, which fully determine the rest under finite $p$-variation.

So far, we have not explicitly relied on the theory of signatures of bounded variation paths, except to define objects that reflect some of their structural properties. Rough paths form a highly abstract class, while paths of bounded variation are concrete and familiar. It is therefore natural to examine the rough paths that are \emph{close} to signatures of bounded variation paths, namely, those that arise as limits of such signatures. This leads us to the following definition.

\begin{defi}[Geometric Rough Paths] A \emph{geometric $p$-rough path} is a $p$-rough path $\bX$ for which there exists a sequence of paths of bounded variation $(X_n)_{n \geq 1}$ such that
\[
\lim_{n \to \infty} d_{p\cdot \textnormal{var}}(\bX, S(X_n)^{\leq \floor{p}}) = 0.
\]
The space of geometric $p$-rough paths is denoted by $G\Omega^p_T(V)$.
\end{defi}
Recall from Proposition \ref{grouplikeprop} that the signature of a bounded variation path satisfies the group-like property. It follows that each truncated signature $S(X_n)^{\leq \floor{p}}$ takes values in $G^{\floor{p}}(V)$, the set of group-like elements in the truncated tensor algebra $T^{\floor{p}}(V)$.

Now, since the group-like property is algebraic and preserved under limits, and the geometric $p$-rough path $\bX$ is defined as the limit of such signatures in the $p$-variation topology, it also takes values in $G^{\floor{p}}(V)$. Hence, every geometric $p$-rough path takes values in $G^{\floor{p}}(V)$.

Note, however, that the converse does not hold in general: not every $p$-rough path taking values in $G^{\floor{p}}(V)$ arises as the limit of signatures of bounded variation paths. This distinction motivates the following definition.

\begin{defi}[Weakly Geometric Rough Paths]
    A \emph{weakly geometric $p$-rough path} is a $p$-rough path taking values in $G^{\floor{p}}(V)$. The space of weakly geometric $p$-rough paths is denoted by $WG\Omega^p_T(V)$.
\end{defi}

The difference between $G\Omega^p_T(V)$ and $WG\Omega^p_T(V)$ is subtle and becomes relevant especially when $V$ is infinite-dimensional. It is always the case that
\[
G\Omega^p_T(V) \subset WG\Omega^p_T(V) \subset \Omega^p_T(V).
\]

\begin{ex}[Brownian Motion] \label{exBMotion}
As we often work with Brownian motion in the context of signature-based models, it is worth pausing to examine its signature and its interpretation as a rough path.

Let $B:[0,T] \to \R$ be a standard Brownian motion, and assume that stochastic integrals with respect to $B$ are defined in the Itô sense. Since Brownian motion has finite $p$-variation for any $p>2$, we can attempt to lift it to a $p$-rough path of degree $N=2$. The natural candidate for such a lift is the stochastic It\^o signature:
\begin{align*}
    S^{\text{It\^o}}(B)^{\leq 2}_{s,t} &= \left(1, \int_s^t dB_u,\; \int_s^t \int_s^u dB_r \, dB_u \right) \\
    &= \left(1, B_t - B_s,\; \int_s^t (B_u - B_s) \, dB_u \right) \\
    &= \left(1, B_t - B_s,\; \frac{1}{2}(B_t - B_s)^2 - \frac{1}{2}(t - s) \right),
\end{align*}
where the last identity follows from It\^o's formula, using $B_t^2 - B_s^2 = 2\int_s^t B_u \, dB_u + (t - s)$.

Now consider the shuffle identity. Since $e_1 \shuffle e_1 = 2e_1 \otimes e_1$, we have:
\[
\langle e_1, S^{\text{It\^o}}(B)^{\leq 2}_{s,t} \rangle^2 = (B_t - B_s)^2,
\]
but
\[
\langle e_1 \shuffle e_1, S^{\text{It\^o}}(B)^{\leq 2}_{s,t} \rangle = 2 \cdot \left( \frac{1}{2}(B_t - B_s)^2 - \frac{1}{2}(t - s) \right) = (B_t - B_s)^2 - (t - s).
\]
It follows that $S^{\text{It\^o}}(B)^{\leq 2}_{s,t}$ is not group-like, which means that It\^o integrals do not lead to weakly geometric rough paths. However, if we use Stratonovich integration, we obtain:
\begin{align*}
S^{\circ}(B)^{\leq 2}_{s,t} &= \left( 1, \int_s^t \circ dB_u,\; \int_s^t \int_s^u \circ dB_r \circ dB_u \right) \\
&= \left( 1, B_t - B_s,\; \frac{1}{2}(B_t - B_s)^2 \right).
\end{align*}
As $\langle e_1, S^{\circ}(B)^{\leq 2}_{s,t} \rangle^2$ matches 
\[
\langle e_1 \shuffle e_1, S^{\circ}(B)^{\leq 2}_{s,t} \rangle = 2 \cdot \frac{1}{2}(B_t - B_s)^2 = (B_t - B_s)^2,
\]
the Stratonovich signature is group-like and defines a weakly geometric $p$-rough path for any $p>2$.
\end{ex}

\medskip
\noindent Even though the It\^o signature is not group-like, we are not at a dead end. There are at least two standard approaches. One is to construct a weakly geometric rough path lift of Brownian motion by defining the second level as the Stratonovich iterated integral, i.e., work with
\[
\mathbf{B}_{s,t} := \left(1,\, B_t - B_s,\, \int_s^t (B_u - B_s) \circ dB_u \right),
\]
which is group-like. This is known as the \emph{Stratonovich lift} of Brownian motion and is the standard choice in rough path theory.

Alternatively, one could define a non-geometric rough path using the It\^o integral, but the path would lie outside the standard tensor algebra and would need to incorporate It\^o correction terms. This gives rise to \emph{branched} or \emph{generalized} rough paths (see \cite{brunedHZ21}). 

In general, the Stratonovich lift is preferred because it aligns with the algebraic structure of signatures (the group-like property) and allows for a direct interpretation of rough integrals as limits of classical Riemann–Stieltjes approximations. For one-dimensional paths, the following result shows that the situation is simpler.
\begin{lema}
Let $p\geq 1$ and let $X:[0,T] \to \R$ be a continuous path of finite $p$-variation. Then there exists a canonical lift to a weakly geometric $p$-rough path, given by
\[
\bX_{s,t} = \left(1, X_t - X_s, \frac{(X_t - X_s)^2}{2!}, \dots, \frac{(X_t - X_s)^{\floor{p}}}{\floor{p}!} \right).
\]
\end{lema}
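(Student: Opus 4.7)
The plan is to verify, directly from the explicit formula defining $\bX_{s,t}$, the three properties needed to conclude that $\bX \in WG\Omega^p_T(\R)$: continuity in $(s,t)$, Chen's multiplicative identity, finite $p$-variation, and pointwise membership in the group-like set $G^{\floor{p}}(\R)$. The crucial simplification is that $V=\R$ is one-dimensional, so every tensor power $V^{\otimes k}$ is canonically $\R$ and the truncated tensor product $\otimes_{\leq \floor{p}}$ reduces to ordinary scalar multiplication (subject to the truncation at level $\floor{p}$).

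First I would dispatch continuity, which is immediate from the continuity of $X$ and the polynomial form of the components. Next, I would verify Chen's identity level by level: since $\bX_{s,t}^k = (X_t-X_s)^k/k!$, the claim $\bX_{s,t} = \bX_{s,u}\otimes \bX_{u,t}$ at level $n$ amounts to
\[
\frac{(X_t-X_s)^n}{n!} = \sum_{k=0}^{n} \frac{(X_u-X_s)^k}{k!}\cdot\frac{(X_t-X_u)^{n-k}}{(n-k)!},
\]
which is exactly the binomial expansion of $\frac{1}{n!}\bigl((X_u-X_s)+(X_t-X_u)\bigr)^n$. This handles multiplicativity for every $0\leq n\leq \floor{p}$.

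For finite $p$-variation I would plug the formula into the norm \eqref{p_var_norm}. At level $k\leq \floor{p}$, each summand equals $(k!)^{-p/k}\,|X_{t_{i+1}}-X_{t_i}|^p$, so summing over any partition gives at most $(k!)^{-p/k}\|X\|_p^p$, and raising to the power $k/p$ produces the bound $(k!)^{-1}\|X\|_p^k$. Taking the maximum over $1\leq k\leq \floor{p}$ yields $\|\bX\|_{p\cdot \mathrm{var}} \leq \max_{1\leq k\leq \floor{p}} \|X\|_p^k / k! < \infty$. For the group-like property I would first establish the one-dimensional shuffle identity $e_1^{\otimes m}\shuffle e_1^{\otimes n} = \binom{m+n}{m}\,e_1^{\otimes(m+n)}$ by a quick induction on $m+n$ using Definition~\ref{shuffleprod}, and then check
\[
\langle e_1^{\otimes m},\bX_{s,t}\rangle\,\langle e_1^{\otimes n},\bX_{s,t}\rangle = \frac{(X_t-X_s)^{m+n}}{m!\,n!} = \binom{m+n}{m}\,\langle e_1^{\otimes (m+n)},\bX_{s,t}\rangle,
\]
which matches the required pairing with $e_1^{\otimes m}\shuffle e_1^{\otimes n}$.

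There is no genuinely hard step here: every verification collapses to the binomial theorem once one observes that tensors in one dimension are scalars. The only part that merits a line of justification is the shuffle formula above, and even that is routine. The reason this lemma holds so cleanly in dimension one is precisely the absence of any genuine noncommutativity in the tensor algebra; in higher dimensions the same task requires extra input (Stratonovich integration or a branched rough path construction as discussed after Example~\ref{exBMotion}), so the main obstacle in the general theory is entirely avoided here.
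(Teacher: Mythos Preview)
Your proposal is correct and follows essentially the same route as the paper's justification: both recognize the lift as the truncated tensor exponential $\exp_\otimes^{\leq\floor{p}}(X_t-X_s)$, verify Chen's identity via the binomial expansion (equivalently, the exponential law), and obtain finite $p$-variation from the fact that level $k$ is a polynomial of degree $k$ in the increment. Your treatment is slightly more thorough than the paper's---you spell out the shuffle identity $e_1^{\otimes m}\shuffle e_1^{\otimes n}=\binom{m+n}{m}e_1^{\otimes(m+n)}$ to check the group-like property explicitly and give sharp constants in the $p$-variation estimate---whereas the paper simply invokes the known properties of the truncated exponential; but the underlying argument is the same.
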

\noindent To show that this expression defines a valid lift, note first that the right-hand side is the truncated exponential in the tensor algebra
\[
\exp^{\leq \lfloor p \rfloor}_{\otimes}(X_t - X_s) = \left(1, X_t - X_s, \frac{(X_t - X_s)^2}{2!}, \dots, \frac{(X_t - X_s)^{\floor{p}}}{\floor{p}!} \right),
\]
which is known to satisfy the Chen identity. We use the notation $\exp_{\otimes}$ to indicate that the exponential is taken in the tensor algebra.

Since $X$ has finite $p$-variation and each level $k$ is a smooth function of the increment $X_t - X_s$, the $k$-th level has finite $p/k$-variation. Hence, the full lift $\bX$ has finite $p$-variation in the sense of (\ref{p_var_norm}), and defines a weakly geometric $p$-rough path. 

In particular, for a one-dimensional Brownian motion $B:[0,T] \to \R$, the path
\[
\bB_{s,t} := \left( 1,\; B_t - B_s,\; \frac{1}{2}(B_t - B_s)^2 \right)
\]
defines a weakly geometric $p$-rough path for any $p>2$. This \emph{Stratonovich lift} of Brownian motion corresponds to the first two levels of the Stratonovich signature.

\begin{remark}\label{semimartingale_lift}
    More generally, any continuous semimartingale $S = A + M$, where $M$ is a continuous local martingale and $A$ is a continuous path of bounded variation on compact intervals, admits a canonical weakly geometric rough path lift (in the Stratonovich sense); see Chapter~14 of \cite{frizvictoir10}. The intuition is that the roughness of $S$ is driven by the martingale component $M$, while the bounded variation part $A$ can be handled using classical integration. The construction of the lift relies on probabilistic estimates, in particular the Burkholder-Davis-Gundy inequality and properties of the quadratic variation. 
\end{remark}

\subsection{Time-Augmented Rough Paths}

Recall that a \emph{lifted} path refers to extending a base path $X:[0,T]\to V$ to a rough path $\bX:\Delta_T\to T^N(V)$. We now proceed to \emph{augment} $X$ with a time coordinate. Specifically, we define $\hat{X} : [0,T] \to \R \oplus V$ by
\[
\hat{X}_t:= (t, X_t).
\]
At first glance, this change may appear superficial, but it has important consequences that we discuss at the end of this section. The following proposition says that time augmentation preserves important analytic and algebraic properties, such as admitting a lift to a weakly geometric $p$-rough path.
\begin{prop}\label{augmentXtohatX}
Let $X: [0,T] \to V$ be a continuous path that admits a weakly geometric $p$-rough path lift $\bX \in WG\Omega^p_T(V)$. Define the time-augmented path $\hat{X}:[0,T] \to \R \oplus V$ by
\[
\hat{X}_t := (t, X_t).
\]
Then $\hat{X} $ also admits a weakly geometric $p$-rough path lift $\hat{\bX} \in WG\Omega^p_T(\R \oplus V)$.
\end{prop}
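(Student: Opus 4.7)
The plan is to construct $\hat{\bX}$ explicitly by lifting $\hat{X}$ one level at a time, combining the given rough path lift $\bX$ of $X$ with the canonical lift of the (smooth) time path via Young integration. Denote a basis of $\R \oplus V$ by $\{e_0, e_1, \dots, e_d\}$, where $e_0$ is the time direction; every multi-index $I$ then factors into an interleaving of time slots ($i_j = 0$) and $X$-slots ($i_j \in \{1, \dots, d\}$). For each $I$ of length at most $\floor{p}$, I would set $\hat{\bX}^I_{s,t}$ recursively by
\[
\hat{\bX}^I_{s,t} = \int_s^t \hat{\bX}^{I'}_{s,u}\, d\hat{X}^{i_k}_u,
\]
interpreting the integral as Lebesgue when $i_k = 0$, and as a rough integral supplied by $\bX$ when $I$ consists of pure $X$-slots; in the remaining mixed cases the integrand has sufficient regularity for the integral to be well defined in the sense of Young, because the time coordinate has $1$-variation and $1/1 + 1/p > 1$ automatically.

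I would then verify the three defining properties of a weakly geometric $p$-rough path. Chen's identity follows from the additivity $\int_s^t = \int_s^u + \int_u^t$ of all the integrals involved, by an induction on the multi-index length that mirrors the argument used in the proof of Chen's identity for classical signatures. Finite $p$-variation at each level $|I| = k$ (i.e.\ finite $p/k$-variation) follows inductively from the Young--L\'oeve maximal inequality for the Young steps and from the hypothesis that $\bX$ is already a $p$-rough path for pure $X$-steps, the time coordinate introducing no additional cost since it is Lipschitz. The group-like property $\hat{\bX}_{s,t} \in G^{\floor{p}}(\R \oplus V)$ is obtained by an induction on $|I| + |J|$ mirroring the proof of Proposition~\ref{grouplikeprop}: integration by parts holds in both the Young and rough integration frameworks, so the shuffle identity propagates from the pure $X$-slots (where it holds by the assumption that $\bX$ is weakly geometric) to all mixed multi-indices.

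The main obstacle is the regularity bookkeeping needed to make the recursion well posed at each step: one must check that the intermediate functional $u \mapsto \hat{\bX}^{I'}_{s,u}$ has enough joint regularity with the integrator $d\hat{X}^{i_k}_u$ for the integral to be defined. Because the time coordinate is maximally regular, it always provides enough slack in the Young condition; still, the estimates for genuinely mixed multi-indices with several rough factors separated by time slots are somewhat delicate. The cleanest way to organize the argument is to construct explicitly the first $\floor{p}$ levels as above, verifying Chen, $p$-variation, and the shuffle relations by hand, and then appeal to the Extension Theorem~\ref{extensionthm} to guarantee uniqueness, continuity in the $p$-variation metric, and consistency of the construction with any further extension to higher levels.
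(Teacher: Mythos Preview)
Your inductive plan is sound and close in spirit to the paper's argument, but the treatment of mixed multi-indices has a genuine gap. You claim that when $I$ is mixed and $i_k \neq 0$, the integral $\int_s^t \hat{\bX}^{I'}_{s,u}\, dX^{i_k}_u$ exists as a Young integral ``because the time coordinate has $1$-variation and $1/1 + 1/p > 1$''. That reasoning only applies when the \emph{integrand} $u \mapsto \hat{\bX}^{I'}_{s,u}$ itself has $1$-variation, which holds if $i_{k-1} = 0$ (then $\hat{\bX}^{I'}$ was obtained as a Lebesgue integral and is Lipschitz in $u$), but not when $i_{k-1} \neq 0$. For instance, when $p \in [3,4)$ and $I = (0,1,1)$, the integrand $\hat{\bX}^{(0,1)}_{s,u} = \int_s^u (v-s)\,dX_v$ has only $p$-variation in $u$, and Young's condition $1/p + 1/p > 1$ fails. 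You flag the difficulty (``several rough factors \ldots\ are somewhat delicate'') but do not resolve it; completing the argument along your lines would require recognising $\hat{\bX}^{I'}$ as a controlled path and using genuine rough integration rather than Young.

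The paper sidesteps this entirely by a different device. Instead of integrating against the last letter $d\hat{X}^{i_k}$, it uses the shuffle identity $e_I = (e_{I'} \shuffle e_{i_n}) - (e_{I''} \shuffle e_{i_n}) \otimes e_{i_{n-1}}$ recursively to rewrite $\langle e_I, \hat{\bX}_{s,t} \rangle$ as a combination of (i) products of components at strictly lower level, handled by the group-like property and the induction hypothesis, and (ii) an integral whose outermost integrator is $du$, obtained by iterating the identity until the zero that $I$ is assumed to contain reaches the last position. Every integral that actually needs to be \emph{defined} is therefore against the Lipschitz time coordinate, and the regularity bookkeeping you identify as the main obstacle disappears. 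Your construction can be repaired, but the paper's shuffle rearrangement is precisely what buys a clean proof valid for all $p$ without any controlled-path machinery.
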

\begin{proof}
We treat $\hat{X}$ as a path in $\R^{d+1}$, where the first coordinate is time and the remaining ones are given by $X$. We define the lifted path $\hat{\bX}$ inductively over multi-indices $I = (i_1, \dots, i_n) \in \{0, 1, \dots, d\}^n$, where $i_k = 0$ refers to the time component.

For $|I| = 1$, we set:
\[
\hat{\bX}^{(i)}_{s,t} := \begin{cases}
t - s & \text{if } i = 0, \\
X^i_t - X^i_s & \text{if } i \in \{1, \dots, d\}.
\end{cases}
\]

Assume that all levels up to length $|I|=n-1$ have been well defined. We now denote by $I$ the $n$-th level multi-index, $I = (i_1, \dots, i_n) \in \{0, 1, \dots, d\}^n$. The $n$-level components of the signature are 
\[
\langle e_I, \hat{\bX}_{s,t} \rangle = \hat{\bX}^I_{s,t}.
\]
If all elements of $I$ are nonzero ($i_k \neq 0, \text{ for } i=1,\dots,n$), we set 
\[
\hat{\bX}^I_{s,t} := \bX^I_{s,t}, 
\]
which is already defined in the original lift. Assume now that not all $i_k\neq 0$. Using the shuffle product introduced in Definition \ref{shuffleprod}, we have 
\[
e_I = e_{I'} \shuffle e_{i_n} - (e_{I''} \shuffle e_{i_n}) \otimes e_{i_{n-1}},
\]
where $I'$ has the same meaning as in Definition \ref{shuffleprod}, and $I''$ is defined in an analogous way. We then have
\begin{align}
    \langle e_I , \hat{\bX}_{s,t} \rangle = &\ \langle e_{I'} \shuffle e_{i_n} - (e_{I''} \shuffle e_{i_n}) \otimes e_{i_{n-1}}, \hat{\bX}_{s,t} \rangle \nonumber \\
    = &\ \langle e_{I'} \shuffle e_{i_n}, \hat{\bX}_{s,t}\rangle - \langle (e_{I''} \shuffle e_{i_n}) \otimes e_{i_{n-1}}, \hat{\bX}_{s,t}\rangle \nonumber \\
    = &\ \langle e_{I'}, \hat{\bX}_{s,t}\rangle \langle e_{i_n}, \hat{\bX}_{s,t}\rangle - \langle (e_{I''} \shuffle e_{i_n}) \otimes e_{i_{n-1}}, \hat{\bX}_{s,t}\rangle. \label{shuffle_augmented}
\end{align}
As $|I'|=n-1$ and $|(i_n)|=1$, it follows from the induction step that the terms $\langle e_{I'}, \hat{\bX}_{s,t}\rangle$ and $\langle e_{i_n}, \hat{\bX}_{s,t}\rangle$ are well defined. If $i_{n-1}=0$, define 
\[
\langle (e_{I''} \shuffle e_{i_n}) \otimes e_{i_{n-1}}, \hat{\bX}_{s,t}\rangle := \int_s^t \langle e_{I''} \shuffle e_{i_n} , \hat{\bX}_{s,u}\rangle du,
\]
where the integral is well-defined as a Young integral because $\bX$ is a $p$-rough path and $q=1$ for time, so that $1/p + 1/q > 1$ is always satisfied. If $i_{n-1}\neq 0$, using the shuffle product again we get:
\[
(e_{I''} \shuffle e_{i_n})\otimes e_{i_{n-1}} = e_{I''} \shuffle (e_{i_n}\otimes e_{i_{n-1}}) - (e_{I'''} \shuffle (e_{i_n}\otimes e_{i_{n-1}}))\otimes e_{i_{n-2}}.
\]
Therefore,
\begin{align}
    & \langle (e_{I''} \shuffle e_{i_n}) \otimes e_{i_{n-1}}, \hat{\bX}_{s,t}\rangle \nonumber \\ 
    & =\ \langle e_{I''}, \hat{\bX}_{s,t}\rangle \langle e_{i_n}\otimes e_{i_{n-1}}, \hat{\bX}_{s,t}\rangle - \langle (e_{I'''} \shuffle (e_{i_n}\otimes e_{i_{n-1}}))\otimes e_{i_{n-2}}, \hat{\bX}_{s,t}\rangle, \label{XhatRoughPathProof}
\end{align}
where the first term on the right hand side is well defined because of the induction step. If $i_{n-2}=0$, we define the second term on the right hand side as
\[
\langle (e_{I'''} \shuffle (e_{i_n}\otimes e_{i_{n-1}}))\otimes e_{i_{n-2}}, \hat{\bX}_{s,t}\rangle := \int_s^t \langle e_{I'''} \shuffle (e_{i_n}\otimes e_{i_{n-1}}), \hat{\bX}_{s,t}\rangle\,du,
\]
where, as above, the integral is well-defined as a Young integral. If $i_{n-2}\neq 0$, we apply shuffle product identity again to the second term on the right hand side of (\ref{XhatRoughPathProof}) to obtain a term involving $i_{n-3}$. We then repeat the process iteratively. 
\end{proof}

\begin{remark}
As a way of illustrating the construction above, let $I=\{i_1,i_2\}=\{0,1\}$. Since $I'=\{0\}$ and $I''=\emptyset$, we have
\[
(e_{I''} \shuffle e_{i_n})\otimes e_{i_{n-1}} = (e_{\emptyset} \shuffle e_{i_2})\otimes e_{i_1} = e_{i_2}\otimes e_{i_1} =  e_1\otimes e_0.
\]
It follows from (\ref{shuffle_augmented}) that 
\begin{equation} \label{exampleProof}
    \langle e_{\{0,1\}} , \hat{\bX}_{s,t} \rangle := \langle e_0, \hat{\bX}_{s,t}\rangle \langle e_1, \hat{\bX}_{s,t}\rangle - \langle e_1\otimes e_0, \hat{\bX}_{s,t}\rangle 
\end{equation}
The left hand side of the above expression is:
\[
\langle e_{\{0,1\}} , \hat{\bX}_{s,t} \rangle = \int_s^t \int_s^u dr\, dX_u = \int_s^t (u-s) dX_u,
\]
which, integrating by parts, yields
\begin{equation} \label{LHSproof}
\langle e_{\{0,1\}} , \hat{\bX}_{s,t} \rangle = (t-s)(X_t-X_s) - \int_s^t (X_u-X_s)du.
\end{equation}
The elements of the signature with indexes of length $1$ were defined as $\langle e_0, \hat{\bX}_{s,t}\rangle := t-s$ and $\langle e_1, \hat{\bX}_{s,t}\rangle := X_t-X_s$. Therefore, the right hand side of (\ref{exampleProof}) is:
\begin{align*}
\langle e_0, \hat{\bX}_{s,t}\rangle \langle e_1, \hat{\bX}_{s,t}\rangle - \langle e_1\otimes e_0, \hat{\bX}_{s,t}\rangle
& = (t-s)(X_t-X_s) - \int_s^t \int_s^u dX_r\,du \\
& = (t-s)(X_t-X_s) - \int_s^t (X_u-X_s) du,
\end{align*}
which is the same as (\ref{LHSproof}).
\end{remark}

Now that we know that there exists a lift for the time-augmented path $\hat{X}$, we define the corresponding space. 

\begin{defi}[Time-Augmented Weakly Geometric Rough Path]\label{DEFaugmentedWGpRP}
A \emph{time-augmented weakly geometric $p$-rough path} is a weakly geometric $p$-rough path $\bX \in WG\Omega^p_T(\R \oplus V)$ such that
\begin{itemize}
    \item [(i)] the first level satisfies $\pi_1(\bX_{s,t}) = \hat{X}_t - \hat{X}_s$, where $\hat{X}_t = (t, X_t)$ for some continuous path $X: [0,T] \to V$ that admits a weakly geometric $p$-rough path lift;
    \item [(ii)] for any $I$ with $|I|\leq \lfloor p\rfloor$,
                \[
                \langle e_{I0},\bX_{s,t} \rangle = \int_s^t \langle e_I, \bX_{s,u} \rangle du,
                \] 
    where the integral is a Young integral, and where the notation $e_{I0}$ means appending a $0$ (the time component) to the multi-index $I$.
\end{itemize}
We denote the space of such paths by $WG\hat{\Omega}^p_T(V)$.
\end{defi}

\begin{remark}
    A rough path in $WG\hat{\Omega}^p_T(V)$ presupposes the existence of a path $X:[0,T] \to V$, which is augmented to $\hat{X}:[0,T] \to \R \oplus V$. This contrasts with the rough paths in $WG\Omega^p_T(V)$, which are defined independent of any base path.
\end{remark}

\begin{remark}
    Being a weakly geometric rough path in $\R \oplus V$ with the correct first level does not automatically guarantee that the component $\langle e_{I0}, \bX_{s,t} \rangle$ should equal the integral of $\langle e_I, \bX_{s,u} \rangle$ against the time increment $du$. This why we need to include condition ($ii$). The Young integral is well defined since time has bounded variation. 
\end{remark}

We now briefly discuss the relevance of lifting the augmented path $(t, X_t)$ instead of lifting $X_t$ alone. Let $S(X)$ denote the signature of a path $X : [0,T] \to V$. It is well known that the map $X \mapsto S(X)$ is not injective: different paths can share the same signature—for example, if they trace out the same image at different speeds. Thus, the signature loses information about the timing or parametrization of the path.

Augmenting the path by adding time, that is, lifting $\hat{X}_t:= (t, X_t) \in \R\oplus V$, recovers injectivity. Theorem~\ref{signatuniqueness} below shows that the time-augmented signature uniquely determines the path, making it more expressive. 

Time augmentation allows the signature to distinguish between paths that follow the same geometric shape but evolve at different speeds. This distinction is crucial in the context of the universal approximation theorems discussed below: time augmentation ensures that the signature map separates paths in a sufficiently rich way to apply approximation results such as the Stone--Weierstrass theorem.

As we will also see below, time augmentation is essential for learning path-dependent functionals in an \emph{adapted} way. In models where $Y_t$ depends on the past trajectory $(X_s)_{s \leq t}$, incorporating time allows the signature to detect \emph{when} events occur. Without time, the signature treats two paths with the same shape but different timing as indistinguishable—an undesirable feature in many stochastic or temporally sensitive learning tasks.

\subsection{Signatures of Rough Paths}

Let $\bX \in \Omega^p_T(V)$ be a $p$-rough path. By the extension theorem, for every integer $N \geq \floor{p}$, there exists a unique multiplicative extension of $\bX$ to degree $N$ with finite $p$-variation. Since this extension process can be carried out to arbitrarily high degrees, it is natural to define the \emph{signature} of a $p$-rough path as its formal infinite extension. This motivates the following definition.
\begin{defi}[Signature of a Rough Path]\label{signatRoughPath}
Let $\bX \in \Omega^p_T(V)$ be a $p$-rough path. The \emph{truncated signature of order $N \geq \floor{p}$} is defined as the unique extension of $\bX$ to level $N$ with finite $p$-variation, denoted by:
    \[
    S(\bX)^{\leq N} := \left(1, S(\bX)^1, \dots, S(\bX)^N \right) \in T^N(V).
    \]
    The \emph{(full) signature} of $\bX$ is the formal series
    \begin{align*}
    S(\bX): \Delta_T & \to T((V)) \\
    (s,t) & \mapsto S(\bX)_{s,t} := (1, S(\bX)^1_{s,t},\dots,S(\bX)^n_{s,t},\dots),
    \end{align*}
    where $S(\bX)^n_{s,t} \in V^{\otimes n}$ denotes the $n$-th level of the extension.
\end{defi}

Let $\bX \in WG\Omega^p_T(V)$ and $1<p<q$. In \cite{frizvictoir10} it is proved that 
\[
WG\Omega^p_T(V) \subset G\Omega^q_T(V).
\]
Therefore, there exists a sequence of bounded variation paths $X^{(n)}$ such that their truncated signatures of order $\lfloor q \rfloor$ converge to $\bX$ in the $q$-variation topology. By Theorem \ref{extensionthm}, the extension map $\bX \mapsto S(\bX)^{\leq \floor{q}}$ is continuous in the $q$-variation topology. Since each $S(X^{(n)})^{\leq \floor{q}}$ is group-like, it follows that $S(\bX)^{\leq \floor{q}}$ is group-like as well. Letting $q \to \infty$, we obtain that the full signature $S(\bX) \in T((V))$ is group-like. That is, for any $\ell^1, \ell^2 \in T(V)$, we have:
\[
\langle\ell^1, S(\bX)_{s,t}\rangle \langle\ell^2, S(\bX)_{s,t}\rangle = \langle\ell^1 \shuffle \ell^2, S(\bX)_{s,t}\rangle.
\]

\noindent Note the slight shift in notation that has taken place here. In Section~\ref{signatboundedvar}, we began with a path $X : [0,T] \to V$ and constructed its signature, denoted by $S(X) : \Delta_T \to T((V))$. In contrast, we now start with a multiplicative functional $\mathbf{X} : \Delta_T \to T^{\lfloor p \rfloor}(V)$, which satisfies specific algebraic and analytic properties (multiplicativity and finite $p$-variation) that allow it to be uniquely extended to higher levels $N \geq \lfloor p \rfloor$. We can describe this extension using the diagram:
\[
\begin{array}{ccccc}
\bX^{\floor{p}}(V) & \longrightarrow & \bX^N(V) & \longrightarrow & \bX^\infty \\[1.3ex]
(1, \bX^1_{s,t}, \dots, \bX^{\lfloor p \rfloor}_{s,t}) & \longrightarrow & 
(1, \bX^1_{s,t}, \dots, \bX^N_{s,t}) & \longrightarrow &
(1, \bX^1_{s,t}, \dots, \bX^n_{s,t}, \dots)
\end{array}
\]
where we slightly abuse notation in referring to $\mathbf{X}^\infty$ as the infinite extension of the rough path.

In Definition~\ref{signatRoughPath}, however, we formalized this extension using the signature notation:
\[
\begin{array}{ccccc}
\bX^{\floor{p}}(V) & \longrightarrow & S(\bX)^{\leq N}(V) & \longrightarrow & S(\mathbf{X}) \\[1.3ex]
(1, \bX^1_{s,t}, \dots, \bX^{\lfloor p \rfloor}_{s,t}) & \longrightarrow & 
(1, S(\bX)^1_{s,t}, \dots, S(\bX)^N_{s,t}) & \longrightarrow &
(1, S(\bX)^1_{s,t}, \dots, S(\bX)^n_{s,t}, \dots)
\end{array}
\]

Note carefully that we have not \emph{constructed a signature} from $\mathbf{X}$. Rather, we started with a multiplicative functional satisfying the group-like property and finite $p$-variation (a $p$-rough path), and we uniquely extended it from level $\lfloor p\rfloor$ to higher levels in the (truncated) tensor algebra. This extension is granted by Theorem~\ref{extensionthm}. In Definition~\ref{signatRoughPath}, we have simply relabeled this extended functional using the signature notation previously introduced for bounded variation paths.

Although this notation may initially appear inconsistent, it aligns naturally with the classical case where the signature is defined via iterated integrals. This unification of notation allows both settings—bounded variation and rough paths—to be treated within a common framework.

\subsection{Universal Approximation Theorems}

Since the primary role of a rough path $\bX$ is to serve as a driver for integrals (or differential equations), the key object of interest is the increment $\bX_{s,t}$ rather than the individual value $\bX_t$ at a fixed time. But the two viewpoints are equivalent. Given the increments $\bX_{s,t}$, one can reconstruct a path by defining $\bX_t := \bX_{0,t}$. If the path $\bX_t \in T^N(V)$ is given, the increments can be recovered by making use of the multiplicative property, namely, $\bX_{s,t} := \bX_s^{-1} \otimes \bX_t$.

In this section, we present three fundamental theorems that enable the application of signatures to the calibration problem. The first one states that the signature (as defined in Definition~\ref{signatRoughPath}) of a time-augmented weakly geometric $p$-rough path $\bX$, evaluated at time $T$, uniquely determines $\bX$. We use the simplified notation $\bX_t:= \bX_{0,t}$.
\begin{teo}[Uniqueness of the Signature] \label{signatuniqueness}
    Let $\bX$, $\bY \in WG\hat{\Omega}^p_T(V)$. Then,  
    \[
    S(\bX)_T = S(\bY)_T \iff \forall t \in [0,T],\ \bX_t = \bY_t. 
    \]
\end{teo}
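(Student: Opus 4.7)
The ``$\Leftarrow$'' direction is immediate: if $\bX_t = \bY_t$ for every $t \in [0,T]$, then in particular $\bX_{0,T} = \bY_{0,T}$, and the uniqueness part of the Extension Theorem (Theorem~\ref{extensionthm}) yields $S(\bX)_T = S(\bY)_T$. The substance of the result lies entirely in the reverse implication.

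My plan for ``$\Rightarrow$'' is to reduce the claim to the uniqueness theorem for signatures of weakly geometric rough paths---proved by Hambly--Lyons in the bounded variation case and extended by Boedihardjo--Geng--Lyons--Yang to the rough path setting. That theorem asserts that two weakly geometric $p$-rough paths on $[0,T]$ sharing the same signature at time $T$ must agree up to \emph{tree-like equivalence}, meaning that the concatenation of one with the reverse of the other traces out a tree-like loop whose full signature is trivial. Granted this input, what remains is to show that tree-like equivalence between elements of $WG\hat{\Omega}^p_T(V)$ collapses to pointwise equality.

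This final reduction is exactly where the time augmentation earns its keep. For any $\bX \in WG\hat{\Omega}^p_T(V)$, the first level has the form $\pi_1(\bX_{s,t}) = (t-s,\, X_t - X_s)$, so its projection onto the time coordinate is strictly increasing in $t$. A putative tree-like excursion in the difference loop obtained by running $\bX$ forward and $\bY$ backward would have to retrace itself in every coordinate, including the time coordinate; but strict monotonicity of time forbids any such retracing unless, at each common time value $t$, the $V$-components of the two paths coincide. The equivalence class therefore contains a single element, so $\bX_{0,t} = \bY_{0,t}$ for all $t \in [0,T]$, and applying the Extension Theorem once more upgrades this to the identity $\bX_t = \bY_t$ in the sense used in the theorem.

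The principal obstacle is the Hambly--Lyons/BGLY uniqueness result itself: its proof requires delicate analysis of the kernel of the signature map and of the algebraic structure of group-like elements in $T((V))$, all well beyond the scope of the present section. By contrast, the time-augmentation step is geometric and essentially combinatorial: as motivated in Section~\ref{RoughPaths}, enriching $X$ to $\hat{X} = (t, X_t)$ is precisely what rules out the tree-like ambiguity that would otherwise obstruct pointwise uniqueness.
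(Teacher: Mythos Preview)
Your argument is correct, but it takes a genuinely different route from the paper. You import the heavy Hambly--Lyons/Boedihardjo--Geng--Lyons--Yang uniqueness theorem (signature determines the path up to tree-like equivalence) and then observe that the strictly monotone time coordinate of a time-augmented path rules out any nontrivial tree-like piece, forcing the equivalence class to be a singleton. The paper, by contrast, gives a completely elementary and self-contained proof: it first shows $\langle e_0^{\otimes k}, S(\bX)_t\rangle = t^k/k!$, then uses the shuffle identity to rewrite $\langle (e_I \shuffle e_0^{\otimes k})\otimes e_0,\, S(\bX)_T\rangle$ as $\int_0^T \langle e_I, \bX_u\rangle\, u^k/k!\, du$; equality of the full signatures at $T$ then matches all polynomial moments of $u\mapsto \langle e_I,\bX_u\rangle - \langle e_I,\bY_u\rangle$, and Stone--Weierstrass forces pointwise equality. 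Both proofs exploit time augmentation as the mechanism that kills the ambiguity, but the paper does so \emph{algebraically} (the time coordinate supplies the monomials $t^k/k!$ inside the signature) while you do so \emph{geometrically} (monotonicity prevents retracing). Your route is conceptually clean but leans on a deep external result you yourself flag as out of scope; the paper's route requires no machinery beyond the shuffle relation already established in Proposition~\ref{grouplikeprop}. A minor point: your last sentence invokes the Extension Theorem to pass from $\bX_{0,t}=\bY_{0,t}$ to $\bX_t=\bY_t$, but these are the same by the notational convention $\bX_t:=\bX_{0,t}$, so no further appeal is needed.
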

\begin{proof}
    Note first that $\bX:\Delta_T\to T^{\floor{p}}(\R\oplus V)$, while $S(\bX):\Delta_T\to T((\R\oplus V))$. If $\bX_t = \bY_t$ for all $t \in [0,T]$, it must be that $\bX_T = \bY_T$. By the extension theorem, $S(\bX)_T  = S(\bY)_T$. 
    
    \noindent Reciprocally, assume $S(\bX)_T = S(\mathbf{Y})_T $. Since $\bX, \bY \in WG\hat{\Omega}^p_T(V)$ and both coincide with their signatures up to level $\floor{p}$, it suffices to show that for any multi-index $I$ with $|I| \leq \floor{p}$, 
    \[
    \langle e_I , \bX_t \rangle = \langle e_I, \bY_t \rangle \quad \text{for all } t \in [0,T].
    \]
    To extract these values from the full signature $S(\bX)_T$, we face the difficulty that $\langle e_I, \bX_t \rangle$ is a function of time, not a coefficient in the signature at time $T$. However, we can recover such functions by integrating them against powers of time. To do this, we first prove that:
    \[
    \langle e_0^{\otimes k}, \bX_t \rangle = \frac{t^k}{k!} \quad \text{for all } k \geq 0.
    \]
    We proceed by induction. For $k = 0$, the claim is trivial. For $k = 1$, we have $\langle e_0, \bX_t \rangle = t$ by the definition of time augmentation. Assuming the identity holds for $k-1$, we get:
    \[
    \langle e_0^{\otimes k}, \bX_t \rangle = \int_0^t \langle e_0^{\otimes (k-1)}, \bX_u \rangle \, du = \int_0^t \frac{u^{k-1}}{(k-1)!} \, du = \frac{t^k}{k!}.
    \]
    Note that while condition ($ii$) in Definition \ref{DEFaugmentedWGpRP} applies to $k\leq \lfloor p \rfloor$, the relation extends to all $k\geq 0$ due to the bounded variation of the time component, the way $\bX \in WG\hat{\Omega}^p_T(V)$ is constructed (Proposition \ref{augmentXtohatX}) and the uniqueness of the signature extension.
    
    Now fix any multi-index $I$ with $|I| \leq \floor{p}$. Since $\bX_t$ and $S(\bX)_t$ agree up to level $\floor{p}$, we have
    \[
    \langle e_I, S(\bX)_t \rangle = \langle e_I, \bX_t \rangle.
    \]
    Using the group-like property and the shuffle product identity, we have
    \begin{align*}
    \langle (e_I \shuffle e_0^{\otimes k}) \otimes e_0, S(\bX)_T \rangle 
    &= \int_0^T \langle e_I \shuffle e_0^{\otimes k}, S(\bX)_u \rangle \, du \\
    &= \int_0^T \langle e_I, S(\bX)_u \rangle \langle e_0^{\otimes k}, S(\bX)_u \rangle \, du \\
    &= \int_0^T \langle e_I, \bX_u \rangle \frac{u^k}{k!} \, du.
    \end{align*}
    Applying the same reasoning to $\bY$, and using the assumption $S(\bX)_T = S(\bY)_T$, we conclude that:
    \[
    \int_0^T \left( \langle e_I, \bX_u \rangle - \langle e_I, \bY_u \rangle \right) \frac{u^k}{k!} \, du = 0.
    \] 
        
    Since the functions $u \mapsto \frac{u^k}{k!}$ (for $k \geq 0$) form a basis for the space of polynomials, and polynomials are dense in $C([0,T])$ by the Stone–Weierstrass theorem, it follows that any continuous function whose integral against every such monomial vanishes must be identically zero. As $\bX_u$ and $\bY_u$ are continuous, it follows that 
    \[
    \langle e_I, \bX_u \rangle = \langle e_I, \bY_u \rangle \quad \text{ for all } u \in [0,T],
    \]
    which completes the proof.
\end{proof}

Before proceeding, note that $WG\hat{\Omega}^p_T(V)$, equipped with the $p$-variation distance, becomes a topological space whose topology is induced by the $p$-variation metric.

\begin{teo}[First Universal Approximation Theorem] Let $K \subset WG\hat{\Omega}^p_T(V)$ be compact, and let $f:WG\hat{\Omega}^p_T(V)\to \R$ be continuous with respect to the $p$-variation topology. Then, for every $\eps > 0$, there exists $\ell \in T(\R\oplus V)$ such that
\[
\sup_{\bX \in K} |f(\bX) - \langle \ell, S(\bX)_T \rangle | < \eps.
\]
\end{teo}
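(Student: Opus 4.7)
The plan is to apply the Stone--Weierstrass theorem to the set of linear functionals of the signature, viewed as a subalgebra of $C(K,\R)$. Define
\[
\mathcal{A} := \{ \bX \mapsto \langle \ell, S(\bX)_T \rangle \ :\ \ell \in T(\R \oplus V) \} \subset C(K,\R).
\]
I would verify, in order, the four hypotheses of Stone--Weierstrass on the compact Hausdorff space $K$: (i) every element of $\mathcal{A}$ is continuous on $K$; (ii) $\mathcal{A}$ is a subalgebra; (iii) $\mathcal{A}$ contains the constants; (iv) $\mathcal{A}$ separates points of $K$. Once these are in place, the theorem yields uniform approximation of $f$ on $K$ by elements of $\mathcal{A}$, which is exactly the claim.

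For (i), I would argue that the truncated signature map $\bX \mapsto S(\bX)^{\leq N}_T$ is continuous with respect to the $p$-variation topology: by the Extension Theorem (Theorem \ref{extensionthm}) the extension $\bX \mapsto S(\bX)^{\leq N}$ is continuous in $p$-variation for every $N$, and since a fixed $\ell \in T(\R\oplus V)$ has only finitely many nonzero coordinates, $\langle \ell, S(\bX)_T \rangle$ depends only on finitely many levels and is therefore continuous on $K$. For (ii), I would invoke the group-like property of the signature of a weakly geometric rough path, established immediately before the statement: given $\ell^1,\ell^2 \in T(\R\oplus V)$,
\[
\langle \ell^1, S(\bX)_T\rangle \langle \ell^2, S(\bX)_T\rangle = \langle \ell^1 \shuffle \ell^2,\ S(\bX)_T \rangle,
\]
so the pointwise product of two elements of $\mathcal{A}$ is again in $\mathcal{A}$; linearity and scalar multiplication are immediate. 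For (iii), the constant function $c$ is obtained as $\langle c\, e_{\emptyset}, S(\bX)_T\rangle$ since $S(\bX)^{0}_T = 1$. For (iv), if $\bX \neq \bY$ in $K$, then by Theorem \ref{signatuniqueness} we have $S(\bX)_T \neq S(\bY)_T$, so there exists a multi-index $I$ with $\langle e_I, S(\bX)_T\rangle \neq \langle e_I, S(\bY)_T\rangle$, i.e., the element $\ell = e_I \in T(\R\oplus V)$ gives a functional in $\mathcal{A}$ that separates $\bX$ from $\bY$.

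With (i)--(iv) verified, the Stone--Weierstrass theorem gives that $\mathcal{A}$ is uniformly dense in $C(K,\R)$. Given the continuous function $f$ restricted to $K$ and $\eps > 0$, this yields $\ell \in T(\R\oplus V)$ with $\sup_{\bX \in K}|f(\bX) - \langle \ell, S(\bX)_T\rangle| < \eps$, which is the desired inequality.

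I do not anticipate a serious obstacle: the three substantive inputs (separation of points via uniqueness of the signature, algebra structure via the shuffle identity, and continuity of the signature via the Extension Theorem) have all been established earlier in the paper. The only mildly delicate step is (ii), where one must be careful that the shuffle product of two finitely supported elements of $T(\R\oplus V)$ remains finitely supported, so that $\ell^1 \shuffle \ell^2$ is a legitimate element of $T(\R\oplus V)$ whose pairing with the full signature is well defined; this is immediate from the recursive definition of $\shuffle$ on multi-indices.
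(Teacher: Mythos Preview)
Your proposal is correct and follows essentially the same Stone--Weierstrass argument as the paper: closure under multiplication via the shuffle/group-like identity, separation of points via Theorem~\ref{signatuniqueness}, and constants via $e_\emptyset$. If anything, you are slightly more careful than the paper in explicitly justifying continuity of the signature functionals through the Extension Theorem, which the paper leaves implicit.
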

\begin{proof}
We apply the Stone–Weierstrass theorem to a suitable subalgebra of continuous functions on the compact set $K \subset WG\hat{\Omega}^p_T(V)$. Define
\[
A := \operatorname{span} \left\{ \bX \mapsto \langle e_I, S(\bX)_T \rangle;\, I \text{ multi-index in } \R \oplus V \right\} \subset C(K).
\]
That is, $A$ is the collection of all finite linear combinations of coordinate functionals on the signature evaluated at time $T$. 

To show that $A$ is a subalgebra of $C(K)$ that satisfies the conditions of the Stone–Weierstrass theorem, consider first that $A$ is closed under multiplication. As the group-like property  
\[
\langle \ell_1 \shuffle \ell_2, S(\bX)_T \rangle = \langle \ell_1, S(\bX)_T \rangle  \langle \ell_2, S(\bX)_T \rangle
\]
corresponds to multiplication of linear functionals on the signature, the span of such functionals is closed under multiplication. 

Second, $A$ separates points in $K$ due to the uniqueness of the signature of time-augmented weakly geometric rough paths: in particular, Theorem~\ref{signatuniqueness} guarantees that if $\bX \neq \bY$ in $K$, then $S(\bX)_T \neq S(\bY)_T$, so there exists $\ell \in T(\R \oplus V)$ such that $\langle \ell, S(\bX)_T \rangle \neq \langle \ell, S(\bY)_T \rangle$. 

And third, $A$ contains the constant functions, as can be seen by choosing $I = \emptyset$, for which $\langle e_I, S(\bX)_T \rangle = 1$. Therefore, by the Stone–Weierstrass theorem, $A$ is dense in $C(K)$. In particular, for any $\varepsilon > 0$, there exists $\ell \in T(\R \oplus V)$ such that
\[
\sup_{\bX \in K} \left| f(\bX) - \langle \ell, S(\bX)_T \rangle \right| < \varepsilon.
\]
\end{proof}
This result is important. The signature $S(\bX)$, evaluated at time $T$, serves as a feature map that transforms a path into an infinite sequence of coordinates capturing all relevant information. Theorem~\ref{signatuniqueness} ensures that this representation is injective for time-augmented paths. It is then natural to ask whether signatures are rich enough to approximate functionals defined on paths. The First Universal Approximation Theorem answers this question affirmatively.

It specifically states that continuous functionals on compact subsets of the rough path space can be approximated arbitrarily well by linear functionals on the signature—that is, by finite linear combinations of iterated integrals. This makes signatures a powerful tool for representing and learning functionals on paths, especially in contexts such as calibration or supervised learning.

Our goal is to model one-dimensional stochastic processes of the form
\[
Y_t = f\big((\bX_s)_{s \in [0,t]}\big),
\]
where $(\bX_s)_{s \in [0,t]}$ is a stochastic process, with each $\bX_s \in WG\hat{\Omega}^p_t(V)$. Denote this stochastic process by $\mathcal{X}$:
\[
\begin{array}{rccc}
\mathcal{X}: & [0,t] & \longrightarrow & WG\hat{\Omega}_t^p(V) \\ [1ex] 
             & s & \longmapsto & \bX_s.
\end{array}
\]
Think of $Y_t$ as a volatility process driven by a rough signal $\mathcal{X}$. The function $f$, which encodes this dependence, is typically unknown.

Note that, as time increases, the domain of $f$ changes: for each $t$, $(\bX_s)_{s \in [0,t]}$ represents a collection of rough paths defined on a different time interval. To deal with this, we need a consistent way to interpret a rough path in $WG\hat{\Omega}^p_u(V)$ as an element of $WG\hat{\Omega}^p_t(V)$ for all $t \geq u$, so that $f$ can act on a common space. 

Given the potential complexity of the notation, we begin with a simple case and build up from there. Let the weakly geometric $p$-rough path $\bX \in WG\hat{\Omega}^p_T(V)$ be represented by the diagram:
\[
\begin{array}{ccl}
\Delta_T & \longrightarrow & T^{\floor{p}}(\R \oplus V) \\[1ex]
(s,t) & \longmapsto & \bX_{s,t} = \left(1, \bX^1_{s,t}, \dots, \bX^{\floor{p}}_{s,t} \right),
\end{array}
\]
for $0 \leq s \leq t \leq T$. Fixing $s=0$, we write $\bX_t := \bX_{0,t}$.

By Proposition~\ref{augmentXtohatX}, a rough path in $ WG\hat{\Omega}^p_s(V)$ originates from a base path $X:[0,s]\to V$ that admits a lift $\bX \in WG\Omega^p_s(V)$. To emphasize that this rough path lives on $[0,s]$, we denote it by $_{[\mathsf{s}]}\bX$, and it is represented by 
\[
\begin{array}{ccl}
\Delta_s & \longrightarrow & T^{\floor{p}}(V) \\[1ex]
(r,u) & \longmapsto & _{[\mathsf{s}]}\bX_{r,u} = \left(1,\, _{[\mathsf{s}]}\bX^1_{r,u}, \dots,\, _{[\mathsf{s}]}\bX^{\floor{p}}_{r,u} \right).
\end{array}
\]
For $u \in [0,s]$, we write $_{[\mathsf{s}]}\bX_u :=\! _{[\mathsf{s}]}\bX_{0,u}$.

The time-augmented path is $\hat{X}_u := (u, X_u)$, $u \in [0,s]$. By Proposition~\ref{augmentXtohatX}, $\hat{X}$ admits a weakly geometric $p$-rough path lift $_{[\mathsf{s}]}\hat{\bX} \in WG\hat{\Omega}^p_s(V)$. We refer to $_{[\mathsf{s}]}\hat{\bX}$ as the stopped path at time $s$.  

To extend this construction from $[0,s]$ to $[0,t]$, with $s<t$, we define $_{[\mathsf{t}]}\bX \in WG\Omega^p_t(V)$ by:
\begin{equation}\label{timeextension_st}
_{[\mathsf{t}]}\bX_u :=
\begin{cases}
_{[\mathsf{s}]}\bX_u & \text{for } u \in [0,s] \\
_{[\mathsf{s}]}\bX_s & \text{for } u \in [s,t].
\end{cases}
\end{equation}
We then apply the construction from the proof of Proposition~\ref{augmentXtohatX} to obtain a time-augmented rough path $_{[\mathsf{t}]}\hat{\bX} \in WG\hat{\Omega}^p_t(V)$. By construction, $_{[\mathsf{t}]}\hat{\bX}$ agrees with $_{[\mathsf{s}]}\hat{\bX}$ on $[0,s]$.

This provides a consistent way to extend truncated rough paths defined on $[0,s]$ to the full interval $[0,t]$, thus allowing $f$ to act on a unified space. The ability to extend time-augmented paths motivates the following definition.

\begin{defi}[Stopped Rough Path] 
Let $p \geq 1$. We define the space of \emph{weakly geometric stopped $p$-rough paths} as
\[
\Lambda^p_T(V) := \bigcup_{t \in [0,T]} WG\hat{\Omega}^p_t(V).
\]
Given $_{[\mathsf{s}]}\bX \in WG\hat{\Omega}^p_s(V)$ and $_{[\mathsf{t}]}\!\bY \in WG\hat{\Omega}^p_t(V)$, with $s \leq t$, we define a metric on $\Lambda^p_T(V)$ by
\[
d( _{[\mathsf{s}]}\bX,\, _{[\mathsf{t}]}\!\bY ) := d_{p\cdot \textnormal{var}} (_{[\mathsf{t}]}\bX,\, _{[\mathsf{t}]}\!\bY ) + |t - s|,
\]
where $_{[\mathsf{t}]}\bX$ denotes the extension of $_{[\mathsf{s}]}\bX$ from $[0,s]$ to $[0,t]$ as in (\ref{timeextension_st}), and $d_{p\cdot \textnormal{var}}$ denotes the $p$-variation distance on $WG\hat{\Omega}^p_t(V)$.
\end{defi}

\noindent
For further details on the topology of this space, see \cite{kalsi20} and \cite{bayer23}. The concept of stopped rough paths provides a useful framework for handling adaptedness in stochastic settings. The Second Universal Approximation Theorem below is formulated in this space.

\begin{teo}[Second Universal Approximation Theorem]
Let $K \subset WG\hat{\Omega}^p_T(V)$ be a compact subset, and let $f : \Lambda^p_T(V) \to \R$ be a continuous function. Then, for every $\varepsilon > 0$, there exists $\ell \in T(\R \oplus V)$ such that
\[
\sup_{\bX \in K,\; t \in [0,T]} \left| f(_{[\mathsf{t}]}\bX) - \langle \ell, S(_{[\mathsf{t}]}\bX)_t \rangle \right| < \varepsilon,
\]
where $_{[\mathsf{t}]}\bX$ denotes the restriction of $\bX$ to $[0,t]$, and $S(_{[\mathsf{t}]}\bX)_t := S(_{[\mathsf{t}]}\bX)_{0,t}$ is the signature of the stopped rough path up to time $t$.
\end{teo}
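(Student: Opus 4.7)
The plan is to mirror the argument used for the First Universal Approximation Theorem, the main adaptation being that the ambient compact set now lives in $\Lambda^p_T(V)$ rather than in $WG\hat{\Omega}^p_T(V)$. First, I would form the enlarged set
\[
\tilde{K} := \left\{ \bX^{[\mathsf{t}]} : \bX \in K,\ t \in [0,T] \right\} \subset \Lambda^p_T(V)
\]
and establish that $\tilde{K}$ is compact. The cleanest route is to show that the \emph{stopping map} $\Phi : K \times [0,T] \to \Lambda^p_T(V)$ defined by $(\bX, t) \mapsto \bX^{[\mathsf{t}]}$ is continuous when $K$ carries the $p$-variation metric, $[0,T]$ the usual topology, and $\Lambda^p_T(V)$ the metric $d$ introduced above. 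Since $K \times [0,T]$ is compact, it will follow that $\tilde{K} = \Phi(K \times [0,T])$ is compact in $\Lambda^p_T(V)$.

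Next, I would introduce the candidate algebra of approximators,
\[
A := \operatorname{span} \left\{ \bX^{[\mathsf{t}]} \mapsto \langle e_I, S(\bX^{[\mathsf{t}]})_t \rangle : I \text{ multi-index in } \R \oplus V \right\} \subset C(\tilde{K}, \R),
\]
and verify the three hypotheses of the Stone--Weierstrass theorem. Containment of the constant functions follows by choosing $I = \emptyset$. Closure of $A$ under multiplication is inherited from the group-like property of signatures of weakly geometric $p$-rough paths, which converts products of signature evaluations into evaluations against shuffle products,
\[
\langle \ell_1, S(\bX^{[\mathsf{t}]})_t \rangle \, \langle \ell_2, S(\bX^{[\mathsf{t}]})_t \rangle = \langle \ell_1 \shuffle \ell_2, S(\bX^{[\mathsf{t}]})_t \rangle.
\]
For point separation, the time augmentation plays the crucial role. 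Given two distinct stopped paths $\bX^{[\mathsf{s}]} \neq \bY^{[\mathsf{t}]}$ in $\tilde{K}$, if $s \neq t$ I would pick $\ell = e_0$, giving $\langle e_0, S(\bX^{[\mathsf{s}]})_s \rangle = s \neq t = \langle e_0, S(\bY^{[\mathsf{t}]})_t \rangle$; if instead $s = t$, both paths belong to the same space $WG\hat{\Omega}^p_t(V)$, and Theorem~\ref{signatuniqueness} produces a multi-index $I$ along which their signatures disagree.

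With the three Stone--Weierstrass hypotheses in place, $A$ is uniformly dense in $C(\tilde{K}, \R)$, and applying this to the continuous function $f|_{\tilde{K}}$ yields, for every $\varepsilon > 0$, some $\ell \in T(\R \oplus V)$ satisfying the claimed uniform bound. The main obstacle I anticipate is the compactness step: showing that $\Phi$ is continuous requires careful control of how the $p$-variation norm of the time-augmented extension $\hat{\mathcal{X}}^{[\mathsf{t}]}$ reacts to freezing the spatial component on $[s,t]$, and in particular that this extension depends continuously on both the original rough path and the stopping time $t$ in the mixed metric $d$. Once that continuity is in hand, the remainder of the argument is essentially a transcription of the proof of the First Universal Approximation Theorem applied to $\tilde{K}$.
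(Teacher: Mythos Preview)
Your proposal is correct and follows exactly the approach the paper indicates: the paper's own proof consists of the single sentence ``The proof follows the same lines as the first universal approximation theorem,'' and your write-up is precisely a careful unpacking of what that entails --- forming the compact set of stopped paths, checking the Stone--Weierstrass hypotheses via the group-like property and the uniqueness theorem (with the time component handling the case $s\neq t$), and noting that the only additional work beyond the first theorem is the compactness of $\tilde{K}$ via continuity of the stopping map.
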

\begin{proof}
    The proof is also based on the Stone-Weierstrass theorem, but is somewhat more technical. See \cite{kalsi20}, Lemma~B.3.
\end{proof}

This result shows that any continuous functional of a stopped rough path can be uniformly approximated, over all truncating times $t\in [0,T]$ and all paths in a compact set, by a linear functional on the signature. The approximation is uniform both in path space and in time.

\section{Signature-Based Volatility Models} \label{SigBasedModel}

Assume, for simplicity, that $r = 0$, and let $(S_t)_{t\in [0,T]}$ denote the price process of a risky asset. We work under a risk-neutral measure, and model the discounted stock price $\Tilde{S}_t := S_t$ as 
\[
d\Tilde{S}_t = \Tilde{S}_t \sigma_t \, dB_t,
\]
where $\sigma_t$ is the volatility and $B$ is a standard Brownian motion. 

Let $f : \Lambda^p_T(V) \to \R$ be a continuous function and suppose that volatility can be expressed as
\[
\sigma_t = f\big((\bX_s)_{s \in [0,t]}\big),
\]
where the stochastic process $\bX \in \Lambda^p_T(V)$ is referred to as the \emph{primary process}. This process represents the underlying source of noise driving the volatility and is assumed to take values in the space of stopped, time-augmented weakly geometric $p$-rough paths.

Fix an integer $N \geq 1$ and define the space
\[
A_N := \operatorname{span} \left\{ \bX \mapsto \langle e_I, S(\bX)_T \rangle ;\ I \text{ multi-index in } \R \oplus V,\, 0 \leq |I| \leq N \right\},
\]
where $S(\bX)_t := S(_{[\mathsf{t}]}\bX)_{0,t}$ denotes the signature of the stopped path $_{[\mathsf{t}]}\bX$. 

For a given $N\geq 1$, our goal is to find the coefficients $\ell = \{\ell_I; \, 0 \leq |I| \leq N\}$ that yield the best linear approximation $\langle \ell, S(\bX)^{\leq N}_t \rangle$ of the volatility $\sigma_t$, where
\[
\langle \ell, S(\bX)^{\leq N}_t \rangle = \sum_{|I| \leq N} \ell_I \langle e_I, S(\bX)^{\leq N}_t \rangle
\]
with $S(\bX)^{\leq N}_t \in T^N(\R \oplus V)$. This leads to the following signature-based stochastic volatility model:
\begin{align} 
    d\Tilde{S}_t(\ell) & = \Tilde{S}_t(\ell)\, \sigma_t(\ell)\, dB_t, \label{signatmodel1} \\
    \sigma_t(\ell) & = \sum_{|I| \leq N} \ell_I \langle e_I, S(\bX)^{\leq N}_t \rangle. \nonumber
\end{align}
Since $\bX$ is a stochastic process, the signature $S(\bX)^{\leq N}$ is itself a stochastic process taking values in the truncated tensor algebra $T^N(\R \oplus V)$.

\begin{ex}
    Assume that the dynamics of $(S_t)_{t\geq 0}$ are given by
    \[
        dS_t = r S_t\, dt + \sigma_t S_t\, d(\rho W_t + \sqrt{1 - \rho^2}\, B_t), \nonumber \\
    \]
    where $W$ and $B$ are independent standard Brownian motions. As the primary process $(X_t)_{t\geq 0}$ with dynamics
\begin{equation} \label{XprocessExample}
    dX_t = \kappa(\theta - X_t)\, dt + \nu \sqrt{X_t}\, dW_t
\end{equation}
is a continuous semimartingale, it admits a canonical weakly geometric rough path lift (see Remark~\ref{semimartingale_lift}). By Proposition~\ref{augmentXtohatX}, the time-augmented process also admits a weakly geometric $p$-rough path lift $\bX \in WG\Omega^p_T(\R \oplus \R)$, where time corresponds to coordinate $0$ and $X_t$ to coordinate $1$. Note that the Feller condition $2\kappa \theta \geq \nu^2$ ensures positivity of $X_t$, but is not required for the existence of the rough path lift.

\noindent We model the volatility as a linear function of the truncated time-augmented signature up to level $2$ of the primary process $X$:  
    \[
    \sigma_t(\ell):= \sum_{|I| \leq 2} \ell_I \langle e_I, S(\bX)^{\leq 2}_t \rangle.
    \]
    The truncated signature has the form
    \[
    S(\bX)_t^{\leq 2} = \left(1,\; (t, X_t),\;
    \begin{pmatrix}
        \int_0^t s\, ds & \int_0^t s\, dX_s \\
        \int_0^t X_s\, ds & \int_0^t X_s\, dX_s
    \end{pmatrix}
    \right).
    \]
    If we let
    \[
    \ell = \left(\ell_{\emptyset}, \ell_0, \ell_1, \ell_{00}, \ell_{01}, \ell_{10}, \ell_{11} \right),
    \]
    we then have
    \[
    \sigma_t(\ell) = \ell_{\emptyset} + \ell_0 t + \ell_1 X_t + \ell_{00} \frac{t^2}{2} + \ell_{01} \int_0^t s\, dX_s + \ell_{10} \int_0^t X_s\, ds + \ell_{11} \int_0^t X_s\, dX_s.
    \]
    Differentiating with respect to $t$ and using (\ref{XprocessExample}) yields
    \begin{align*}
        d\sigma_t(\ell) &= \left( \ell_0 + \ell_{00} t + \ell_{10} X_t + (\ell_1 + \ell_{01} t + \ell_{11} X_t)\, \kappa(\theta - X_t) \right)\, dt \\
        &\quad + \nu \sqrt{X_t}\, \left(\ell_1 + \ell_{01} t + \ell_{11} X_t\right)\, dW_t.
    \end{align*}
    This expression shows how a linear signature model recovers a volatility process consistent with the Heston variance dynamics.
\end{ex}

    Continuing with the example, the third level of the time-augmented signature, $S(\bX)_t^3$, consists of iterated integrals over $0 < u_1 < u_2 < u_3 < t$, indexed by $I = (i_1, i_2, i_3) \in \{0,1\}^3$. A few representative terms are:
    \[
    \begin{aligned}
    \langle e_{000}, S(\bX)_t \rangle &= \int_0^t \int_0^{u_2} \int_0^{u_1} du_0\, du_1\, du_2 = \frac{t^3}{6}, \\
    \langle e_{001}, S(\bX)_t \rangle &= \int_0^t \int_0^{u_2} \int_0^{u_1} du_0\, du_1\, dX_{u_2}, \\
    \langle e_{011}, S(\bX)_t \rangle &= \int_0^t \int_0^{u_2} \int_0^{u_1} du_0\, dX_{u_1}\, dX_{u_2}, \\
    \langle e_{111}, S(\bX)_t \rangle &= \int_0^t \int_0^{u_2} \int_0^{u_1} dX_{u_0}\, dX_{u_1}\, dX_{u_2}.
    \end{aligned}
    \]
    In total, the third level of the time-augmented signature includes $2^3 = 8$ terms:
    \[
    \left(
    \begin{array}{cc}
        \langle e_{000}, S(\bX)_t \rangle & \langle e_{001}, S(\bX)_t \rangle \\[1ex]
        \langle e_{010}, S(\bX)_t \rangle & \langle e_{011}, S(\bX)_t \rangle \\[1ex]
        \langle e_{100}, S(\bX)_t \rangle & \langle e_{101}, S(\bX)_t \rangle \\[1ex]
        \langle e_{110}, S(\bX)_t \rangle & \langle e_{111}, S(\bX)_t \rangle
    \end{array}
    \right)
    =
    \left(
    \begin{array}{ll}
        \int_0^t \int_0^{u_2} \int_0^{u_1} du_0\, du_1\, du_2
        & \int_0^t \int_0^{u_2} \int_0^{u_1} du_0\, du_1\, dX_{u_2} \\[1ex]
        \int_0^t \int_0^{u_2} \int_0^{u_1} du_0\, dX_{u_1}\, du_2
        & \int_0^t \int_0^{u_2} \int_0^{u_1} du_0\, dX_{u_1}\, dX_{u_2} \\[1ex]
        \int_0^t \int_0^{u_2} \int_0^{u_1} dX_{u_0}\, du_1\, du_2
        & \int_0^t \int_0^{u_2} \int_0^{u_1} dX_{u_0}\, du_1\, dX_{u_2} \\[1ex]
        \int_0^t \int_0^{u_2} \int_0^{u_1} dX_{u_0}\, dX_{u_1}\, du_2
        & \int_0^t \int_0^{u_2} \int_0^{u_1} dX_{u_0}\, dX_{u_1}\, dX_{u_2}
    \end{array}
    \right)
    \]

In Sections \ref{numerical_Heston} and \ref{numerical_rBergomi} we will work with the truncated signature $S(\bX)^{\leq 3}_t$.

\subsection{The Signature Approximation to Volatility}\label{signatureappoximation}

If $V$ has dimension $d$, let $d_N := \sum_{k=0}^N d^k$ denote the dimension of the truncated tensor algebra $T^N(V)$. The following notation allows us to represent elements of $T^N(V)$ as vectors in $\R^{d_N}$. Let
\[
\mathcal{L} : \{ I;\ |I| \leq N \}\ \to\ \{1,\dots,d_N\}
\]
be a \emph{labeling} function, that is, a bijection that assigns a unique index to each multi-index $I$ of length at most $N$. For any $\ell = \sum_{|I| \leq N} \ell_I e_I \in T^N(V)$, the map
\[
\begin{array}{rccl}
\mathbf{vec} : & T^N(V) & \to & \R^{d_N} \\[1ex]
& \ell & \mapsto & \big( \ell_{\mathcal{L}^{-1}(1)}, \dots, \ell_{\mathcal{L}^{-1}(d_N)} \big)
\end{array}
\]
\emph{flattens} the elements of $T^N(V)$, making it possible to identify tensors with vectors in $\R^{d_N}$. This makes the use of signature data in numerical algorithms quite convenient. We now apply this to the truncated signature $\Tilde{S}_t(\ell)$.

\medskip 
Assume that the primary process $\bX \in \Lambda^p_T(V)$ is obtained as the rough path lift of a stochastic process $(t, X_t)$, where $X_t$ solves an SDE driven by a Brownian motion $W_t$. In particular, $\bX$ is measurable with respect to $W$ and adapted to its natural filtration.

As above, let $f : \Lambda^p_T(V) \to \R$ be a continuous function and let $\sigma_t = f\big((\bX_s)_{s \in [0,t]}\big)$. Define the process $Z_t := \rho W_t + \sqrt{1 - \rho^2}\, B_t$, where $B$ is another Brownian motion, independent of $W$. Our model is then
\begin{equation} \label{signatmodel2}
\begin{aligned}
    d\Tilde{S}_t(\ell) & = \Tilde{S}_t(\ell)\, \sigma_t(\ell)\, dZ_t  \\ 
    \sigma_t(\ell) & \approx \sum_{|I| \leq N} \ell_I \langle e_I, S(\bX)^{\leq N}_t \rangle,
\end{aligned}
\end{equation}
where \eqref{signatmodel2} is a well-defined Itô integral, since $\sigma_t(\ell)$ is a predictable process adapted to the filtration of $W$, and $Z$ is a Brownian motion correlated with $W$. Recall that 
$S(\bX)_t = S(_{[\mathsf{t}]}\bX)_{0,t}$ is the signature of the stopped path $_{[\mathsf{t}]}\bX$. The solution of \eqref{signatmodel2} can be expressed in terms of the signature as follows.

\begin{prop}
Let $\Tilde{S}_t(\ell)$ be the discounted price process defined by \eqref{signatmodel2}, and assume that $\bX \in WG\hat{\Omega}^p_T(V)$ is the time-augmented weakly geometric $p$-rough path lift of a stochastic process $(t, X_t)$ adapted to a Brownian motion $W$. Let $Z_t = \rho W_t + \sqrt{1 - \rho^2}\, B_t$, where $B$ is a Brownian motion independent of $W$. Then, for a given $N\geq 1$, the discounted price process $\Tilde{S}_t(\ell)$ admits the representation
\[
\Tilde{S}_t(\ell) = S_0 \exp \left( \ell^T Q(t) \ell + \ell^T \int_0^t \mathbf{vec}(S(\bX)^{\leq N}_s)\, dZ_s \right),
\]
where $\ell^T$ denotes the transpose of $\ell$, and $Q(t)$ is the symmetric matrix defined by
\[
Q(t)_{\mathcal{L}(I), \mathcal{L}(J)} := -\frac{1}{2} \left\langle (e_I \shuffle e_J) \otimes e_0,\; S(\bX)^{\leq 2N+1}_t \right\rangle.
\]
\end{prop}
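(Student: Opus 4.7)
The plan is to apply It\^o's formula to $\log \tilde S_t(\ell)$. From the SDE~(\ref{signatmodel2}) this yields
\[
\log \tilde S_t(\ell) = \log S_0 + \int_0^t \sigma_s(\ell)\,dZ_s - \tfrac{1}{2}\int_0^t \sigma_s(\ell)^2\,ds,
\]
and the result will follow by exponentiating, provided we identify each integral with one of the two terms on the right-hand side of the claimed formula.

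The stochastic integral is handled by pure linearity. Substituting $\sigma_s(\ell) = \sum_{|I|\leq N}\ell_I\langle e_I, S(\bX)^{\leq N}_s\rangle$ and pulling the constants $\ell_I$ outside,
\[
\int_0^t \sigma_s(\ell)\,dZ_s = \sum_{|I|\leq N}\ell_I \int_0^t \langle e_I, S(\bX)^{\leq N}_s\rangle\,dZ_s = \ell^T \int_0^t \mathbf{vec}(S(\bX)^{\leq N}_s)\,dZ_s,
\]
where the last equality uses the labelling $\mathcal{L}$ that identifies $T^N(\R\oplus V)$ with $\R^{d_N}$.

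The substantive part of the proof is the finite-variation term. First I would expand the square and use the group-like property of $S(\bX)$ to convert the pointwise product of signature entries into a single entry at a higher level:
\[
\sigma_s(\ell)^2 = \sum_{|I|,|J|\leq N} \ell_I \ell_J \,\langle e_I, S(\bX)_s\rangle\langle e_J, S(\bX)_s\rangle = \sum_{|I|,|J|\leq N}\ell_I \ell_J\,\langle e_I \shuffle e_J,\, S(\bX)_s\rangle.
\]
The shuffle identity used here was proved in Proposition~\ref{grouplikeprop} for bounded-variation paths and was extended to the rough-path signature $S(\bX)$ in the discussion following Definition~\ref{signatRoughPath}. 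Next, I would integrate in time and use the fact that, since the zero-th coordinate of $\bX$ is $t$, the recursive definition of the signature gives $\int_0^t \langle e_K, S(\bX)_s\rangle\,ds = \langle e_K \otimes e_0,\, S(\bX)_t\rangle$ for every multi-index $K$. Applying this with $e_K = e_I \shuffle e_J$ and matching against the definition of $Q(t)$ yields $-\tfrac{1}{2}\int_0^t \sigma_s(\ell)^2\,ds = \ell^T Q(t)\ell$.

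The main obstacle is book-keeping rather than depth. One has to verify that the shuffle identity holds at the level of the rough-path signature (not only for its smooth approximants) and, more importantly, that the truncation levels match up: the shuffle of two multi-indices of length at most $N$ is supported on levels $\leq 2N$, and appending $e_0$ raises this to $\leq 2N+1$, which is precisely why $Q(t)$ is defined in terms of $S(\bX)^{\leq 2N+1}_t$. A minor tidying remark is that $Q(t)$ may be taken symmetric because the double sum over $(I,J)$ is symmetric in its arguments, so the quadratic form $\ell^T Q(t)\ell$ is unambiguously defined.
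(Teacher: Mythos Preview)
Your proposal is correct and follows essentially the same route as the paper: apply It\^o's formula to $\log\tilde S_t(\ell)$, use linearity for the stochastic integral, and for the quadratic-variation term expand the square, invoke the shuffle (group-like) identity to collapse the product of signature entries, and then integrate against $ds$ via the time-augmentation relation $\int_0^t\langle e_K,S(\bX)_s\rangle\,ds=\langle e_K\otimes e_0,S(\bX)_t\rangle$. The only cosmetic difference is that the paper packages the double sum as the outer-product matrix $\tilde Q(t)=\mathbf{vec}(S(\bX)^{\leq N}_t)\mathbf{vec}(S(\bX)^{\leq N}_t)^T$ before applying the shuffle identity entrywise, whereas you apply it directly to the sum; the content is identical.
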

   
\begin{proof}
    Rewrite (\ref{signatmodel2}) as
    \[
    \frac{d\Tilde{S}_t(\ell)}{\Tilde{S}_t(\ell)} = \sigma_t(\ell)\, dZ_t.
    \]
    Applying It\^o's formula, we have
    \begin{align*}
    d\log(\Tilde{S}_t(\ell)) & = \frac{d\Tilde{S}_t(\ell)}{\Tilde{S}_t(\ell)} + \frac{1}{2} \left(\frac{-1}{\Tilde{S}_t(\ell)^2} \right) \sigma_t(\ell)^2 \Tilde{S}_t(\ell)^2 dt \\
    & = - \frac{1}{2} \sigma_t(\ell)^2 dt + \sigma_t(\ell)\, dZ_t.
    \end{align*}
    Integrating and rearranging we get
    \begin{equation}\label{TildeSproof}
    \Tilde{S}_t(\ell) = S_0 \exp\left( -\frac{1}{2}\int_0^t \sigma_s(\ell)^2 ds + \int_0^t \sigma_s(\ell)dZ_s \right) .
    \end{equation}
    With the notation introduced above, we can express the linear approximation of order $N$ to the volatility as
    \[
    \sigma_t(\ell) = \ell^T \mathbf{vec}\left(S(\bX)^{\leq N}_t \right).
    \]
    It follows that
    \[
    \sigma_t(\ell)^2 = \left( \ell^T \mathbf{vec}\left(S(\bX)^{\leq N}_t \right) \right)  \left( \ell^T \mathbf{vec}\left(S(\bX)^{\leq N}_t \right) \right)^T =\ \ell^T \mathbf{vec}\left (S(\bX)^{\leq N}_t \right) \mathbf{vec}\left( S(\bX)^{\leq N}_t)\right)^T \ell. 
    \]
    Define $\Tilde{Q}(t):= \mathbf{vec}\left(S(\bX)^{\leq N}_t \right) \mathbf{vec}\left(S(\bX)^{\leq N}_t\right)^T$, which is a symmetric positive semi-definite matrix representing the outer product of the truncated signature vector with itself. We then have
    \[
    \sigma_t(\ell)^2 = \ell^T \Tilde{Q}(t)\, \ell.
    \] 
    Note that the elements of $\Tilde{Q}(t)$ are
    \[
    \Tilde{Q}(t)_{\mathcal{L}(I) \mathcal{L}(J)} = \langle e_I, S(\bX)^{\leq N}_t\rangle \langle e_J, S(\bX)^{\leq N}_t\rangle = \langle e_I \shuffle e_J, S(\bX)^{\leq 2N}_t \rangle.
    \]
    It follows that  
    \begin{align}
        \int_0^t \Tilde{Q}(s)_{\mathcal{L}(I) \mathcal{L}(J)} ds
        & = \int_0^t \langle e_I \shuffle e_J, S(\bX)_{s}^{\leq 2N} \rangle ds \nonumber \\
        & = \langle (e_I \shuffle e_J) \otimes e_0, S(\bX)_t^{\leq 2N+1} \rangle. \label{QtildeShufffle}
    \end{align}
    If we define the matrix $Q(t)$ by 
    \begin{equation}
    Q(t)_{\mathcal{L}(I) \mathcal{L}(J)} := -\frac{1}{2} \langle (e_I \shuffle e_J) \otimes e_0, S(\bX)_t^{\leq 2N+1} \rangle, \label{QShuffle} 
    \end{equation}
    then the first term in the exponential of (\ref{TildeSproof}) can be written as 
    \begin{equation*}
        -\frac{1}{2}\int_0^t \sigma_s(\ell)^2 ds = \ell^T \left(-\frac{1}{2} \int_0^t \Tilde{Q}(s)\, ds \right)\ell = \ell^T Q(t) \ell.
    \end{equation*}
    The second term in the exponential is 
    \[
    \int_0^t \sigma_s(\ell) dZ_s = \ell^T \int_0^t \mathbf{vec}\left( S(\bX)^{\leq N}_s \right) dZ_s,
    \]
    which concludes the proof.
\end{proof}

Since $\Tilde{Q}(t)$ is positive semi-definite by construction, it follows that $Q(t)$ is negative semi-definite. That is, for every $\ell \in \R^{d_N}$,
\[
\ell^T Q(t) \ell \leq 0.
\]
As the Cholesky decomposition applies to positive semi-definite matrices, $-Q(t)$ admits a Cholesky decomposition $U(t)^T U(t)$. We can therefore write
\[
\ell^T Q(t) \ell = -||U(t) \ell||^2_{2}.
\]
Note that it is cheaper to compute $||U(t) \ell||^2_{2}$ than $\ell^T Q(t) \ell$. 

\begin{remark}
In the case of a one-dimensional primary process, we see from (\ref{QShuffle}) that $Q(t)$ depends on the signature $S(\bX)_t$ up to level $2N+1$.
\end{remark}

If the signature is truncated at level $N=3$, then $\Tilde{Q}(t)$ is a $15\times 15$-matrix, whose elements are of the type $\langle e_I, S(\bX)^{\leq 3}_t\rangle$, where the $15$ basis elements are
\[
\{1, e_0, e_1, e_{00}, e_{01}, e_{10}, e_{11}, e_{000}, e_{001}, e_{010}, e_{011}, e_{100}, e_{101}, e_{110}, e_{111}\}.
\]
With the above ordering, $\mathcal{L}(e_{01}) = 5$ and $\mathcal{L}(e_{111}) = 15$. The $(15,5)$-entry of matrix $\Tilde{Q}(t)$ is therefore
\[
\Tilde{Q}(t)_{15,5} = \langle e_{111}, S(\bX)^{\leq 3}_t\rangle \langle e_{01}, S(\bX)^{\leq 3}_t\rangle. 
\]
By (\ref{QtildeShufffle}),  
\[
\int_0^t \Tilde{Q}(s)_{15,5}\, ds = \langle (e_{111} \shuffle e_{01}) \otimes e_0, S(\bX)_t^{\leq 7} \rangle, 
\]
where
\begin{align*}
    (e_{111} \shuffle e_{01}) \otimes e_0 & = (e_{11101} + 2e_{11011} + 3e_{10111} + 4e_{01111})\otimes e_0 \\
    & = e_{111010} + 2e_{110110} + 3e_{101110} + 4e_{011110},
\end{align*}
which includes basis elements of tensor spaces of higher dimensions. In particular, to compute the matrix $Q(t)$, we see from (\ref{QShuffle}) that its entries $\langle (e_I \shuffle e_J) \otimes e_0, S(\bX)_t^{\leq 2N+1} \rangle$ live in a space of dimension $2^{(2N+1)+1}-1$. 

In our particular case, $2N+1 = 7$, so $Q(t)$ depends on signature entries in $T^{\leq 7}(\R \oplus V)$, a space of dimension $2^8 - 1 = 255$. To populate a $15\times 15$ matrix we need to fetch its elements from the entries of a $255\times 255$ matrix.   

\subsection{Calibration} \label{calibration_signat}

Consider the following model of a discounted asset price, parameterized by $\theta$:
\[
d\Tilde{S}_t^{\theta} = \Tilde{S}_t^{\theta}\, \sigma_t^{\theta}\, dZ_t,
\]
where $\sigma_t^{\theta}$ is the volatility process associated with the parameter $\theta$, and $Z_t$ is as in (\ref{signatmodel2}). For a given $\theta$, we can compute the price of a European call option with strike $K$ and maturity $T$ as
\[
C(K, T, \theta) = e^{-rT} \mathbb{E}[(S_T^{\theta} - K)_+] = \mathbb{E}[(\Tilde{S}_T^{\theta} - e^{-rT} K)_+].
\]

Let $\{ C^{\textsf{mkt}}(K_i, T_i) \}_{i=1}^N$ denote the observed market prices for varying strikes and maturities. If there exists a parameter $\theta^*$ such that the model perfectly describes the real dynamics of the asset, then
\[
C^{\textsf{mkt}}(K_i, T_i) = C(K_i, T_i, \theta^*) \quad \text{for all } i.
\]

While this exact match is unlikely in practice, our goal is to find a parameter configuration $\theta$ that minimizes the discrepancy between the model and market prices. We thus consider the least squares loss function
\[
    L(\theta) = \sum_{i=1}^N \gamma_i \left( C^{\textsf{mkt}}(K_i, T_i) - C(K_i, T_i, \theta) \right)^2,
\]
where $\gamma_i > 0$ are user-specified weights.

In the case of signature-based models, the role of the parameter $\theta$ is played by a vector $\ell \in \R^{d_N}$, where $d_N$ is the dimension of the truncated signature space. The corresponding loss function becomes
\begin{equation}\label{LSlossfunction}
L(\ell) = \sum_{i=1}^N \gamma_i \left( C^{\textsf{mkt}}(K_i, T_i) - C(K_i, T_i, \ell) \right)^2.
\end{equation}

Recall that the value of the signature-driven price process $\Tilde{S}_t(\ell)$ at maturity $t = T$ is given by
\begin{equation}\label{priceTildeSexpression}
    \Tilde{S}_T(\ell)(\omega) = S_0 \exp \left( -\| U(T)(\omega)\, \ell \|_2^2 + \ell^T \int_0^T \mathbf{vec}\left( S(\bX)_t^{\leq N}(\omega)\right)\, dZ_t \right),
\end{equation}
where $U(T)(\omega)$ is the Cholesky factor associated with $-Q(T)$ on sample path $\omega$, and $S(\bX)_t^{\leq N}$ is the truncated signature of the primary process $\bX$ up to level $N$.

\subsection{The Algorithm} \label{algorithm}

We start by simulating the discounted stock prices $\Tilde{S}_T$ at maturities $T \in \{0.1,\ 0.6,\ 1.1,\ 1.6\}$. For each $T$, we need to compute the matrix $U(T)$ and the stochastic integrals $\int_0^T \mathbf{vec}(S(\bX)^{\leq N}_t)\,dZ_t$ that appear in~\eqref{priceTildeSexpression}. Let $n_{\text{MC}}$ be the number of Monte Carlo samples; for each maturity, we compute the call prices corresponding to strikes $K \in \{90,\ 95,\ 100,\ 105,\ 110\}$, which yields the following 20 values:
\[
C(K_i, T_i, \ell) \approx \frac{1}{n_{\text{MC}}} \sum_{j=1}^{n_{\text{MC}}} \left( \Tilde{S}_{T_i}(\ell)(\omega_j) - e^{-r T_i} K_i \right)_{+},
\]
$i=1,\dots,20$, where each $\omega_j$ denotes a sample path. 

For calibration, we use as ground truth the synthetic market prices $\{C^{\textsf{mkt}}(K_i, T_i)\}_{i=1}^{20}$ generated under the assumption that the market follows either Heston (Section~\ref{numerical_Heston}) or rough Bergomi dynamics (Section~\ref{numerical_rBergomi}).  

The signature approach seeks to minimize the discrepancy between market option prices and signature-generated prices, which is achieved by minimizing 
\[
L(\ell) = \sum_{i=1}^{20} \gamma_i \left( C^{\textsf{mkt}}(K_i, T_i) - C(K_i, T_i, \ell) \right)^2,
\]
where the weights $\gamma_i$ are proportional to the inverse Vega of each option. 

Once the optimal coefficient vector $\ell^*$ is obtained, we can generate three sets of option prices. We describe here the Heston case:\footnote{The superscript $\textsf{SIG}$ refers to results from the signature-based approach. In the case of the Heston approximation of \cite{ADSV15}, we use the superscript $\textsf{ASV}$ (from the authors’ surnames). In Section~\ref{numerical_rBergomi}, the prices obtained with the analytical approximation described in Section~\ref{short_time_calib_rb} will be denoted by superscript $\textsf{VIX}$.}:
\begin{itemize}
    \item $\{C^{\textsf{mkt}}(K_i, T_i)\}$, the synthetic "market" prices,
    \item $\{C(K_i, T_i, \ell^*)\}$, the signature model prices,
    \item $\{C^{ASV}(K_i, T_i)\}$, the prices using the second-order approximation in \cite{ADSV15}.
\end{itemize}
\noindent Using these prices, we compute the three implied volatility surfaces from the Black-Scholes formula:
\begin{itemize}
    \item $\{\text{IV}^{\textsf{mkt}}(K_i, T_i)\}$, from the "market" prices,
    \item $\{\text{IV}^{SIG}(K_i, T_i, \ell^*)\}$, from the signature model prices,
    \item $\{\text{IV}^{ASV}(K_i, T_i)\}$, from the second-order approximation prices.
\end{itemize}
\noindent These surfaces are compared in Section~\ref{numerical_Heston}. We now describe the algorithm in detail.

\begin{enumerate}
    \item \textbf{Simulate sample paths.} Generate $n_{\text{MC}}$ Monte Carlo paths for the Brownian motions $W$ and $B$ using Gaussian increments. Construct $Z = \rho W + \sqrt{1 - \rho^2} B$, and simulate the process $X$ using an Euler scheme. Construct the augmented process $\bX$, and for each path:
    \begin{itemize}
        \item compute the truncated signature $S(\bX)_T^{\leq 2N+1}$,
        \item evaluate the stochastic integral $\int_0^T \mathbf{vec}(S(\bX)^{\leq N}_t)\, dZ_t$.
    \end{itemize}
    
    \item \textbf{Assemble the matrix $Q(T)$.} For each sample path, compute the symmetric matrix
    \[
    Q(T)_{\mathcal{L}(I), \mathcal{L}(J)} = -\frac{1}{2} \left\langle (e_I \shuffle e_J) \otimes e_0,\; S(\bX)_T^{\leq 2N+1} \right\rangle,
    \]
    and perform a Cholesky decomposition of $-Q(T)$ to obtain $U(T)$. 

    \item \textbf{Optimize the loss.} Initialize $\ell \in \R^{d_N}$ and iterate the following steps until convergence:
    \begin{enumerate}
        \item For each path $\omega_j$, evaluate
        \[
        \Tilde{S}_T(\ell)(\omega_j) = S_0 \exp\left( -\|U(T)(\omega_j)\, \ell\|_2^2 + \ell^T \int_0^T \mathbf{vec}(S(\bX)^{\leq N}_t(\omega_j))\, dZ_t \right).
        \]
        \item Compute $C(K_i, T_i, \ell)$ as the Monte Carlo average over $\omega_j$.
        \item Evaluate $L(\ell)$ and update $\ell$ using a numerical optimizer.
    \end{enumerate}
\end{enumerate}

\noindent Note that the signatures are computed once (\emph{offline}) and reused when updating $\ell$, making calibration significantly faster.

Before presenting the results, we highlight an important structural property of signatures that will serve as a diagnostic for numerical approximation quality.

\begin{prop}[Factorial Decay]\label{factorialdecay}
Let $X : [0,T] \to \R^d$ be a path of finite $p$-variation for some $p \geq 1$, and let $\bX \in WG\hat{\Omega}^p_T(\R^d)$ denote its time-augmented weakly geometric rough path lift. Then for all $k \geq 1$, the $k$-th level satisfies
\[
\| \bX^k_{s,t} \| \leq \frac{C(X)^k}{k!},
\]
for some constant $C(X) > 0$ depending on $X$, uniformly over all $(s,t) \in \Delta_T$, and for any tensor norm on $(\R^d)^{\otimes k}$.
\end{prop}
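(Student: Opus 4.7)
The strategy is to exploit the geometric decay induced by integration over the $k$-simplex: iterated integrals over $\{s < u_1 < \cdots < u_k < t\}$ naturally produce a factor of $1/k!$ relative to the product measure. I would first prove the bound for paths of bounded variation by a direct Riemann–Stieltjes calculation and then extend it to weakly geometric rough paths through an approximation argument based on the extension theorem.

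When $X$ is of bounded variation, so is the time-augmented path $\hat X_t = (t, X_t)$, and $\bX^k_{s,t}$ coincides with the classical iterated integral
\[
\bX^k_{s,t} = \int_{\Delta_k(s,t)} d\hat X_{u_1} \otimes \cdots \otimes d\hat X_{u_k}, \qquad \Delta_k(s,t) := \{s < u_1 < \cdots < u_k < t\}.
\]
For any admissible cross-norm on $(\R \oplus \R^d)^{\otimes k}$ one has the pointwise bound $\|d\hat X_{u_1} \otimes \cdots \otimes d\hat X_{u_k}\| \leq \prod_{i=1}^k d|\hat X|(u_i)$, where $|\hat X|$ is the total-variation measure. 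Integrating over the ordered simplex, whose share of the full product measure on $[s,t]^k$ is exactly $1/k!$, gives
\[
\|\bX^k_{s,t}\| \leq \frac{|\hat X|([s,t])^k}{k!} \leq \frac{\bigl(T + |X|_{1\text{-var};[0,T]}\bigr)^k}{k!},
\]
which establishes the claim with $C(X) = T + |X|_{1\text{-var};[0,T]}$.

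For the general weakly geometric case with $p > 1$, I would approximate $\bX$ by truncated signatures of smooth paths. By the Friz–Victoir embedding $WG\Omega^p_T(V) \subset G\Omega^q_T(V)$ for any $q > p$, such an approximating sequence $\{X_n\}$ of bounded-variation paths exists in the $q$-variation topology, and Theorem~\ref{extensionthm} propagates this convergence to every signature level. Applying the bounded-variation estimate to each $X_n$ yields $\|S(X_n)^k_{s,t}\| \leq C(X_n)^k/k!$, and the desired factorial control for $\bX$ follows by passing to the limit.

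The main obstacle lies precisely in this passage to the limit: the $1$-variations of the approximants typically blow up even as $S(X_n) \to \bX$ in $q$-variation, so $C(X_n)$ need not remain bounded. The cleanest way to circumvent this is to invoke the quantitative form of Lyons' extension theorem, which gives a direct estimate $\|\bX^k_{s,t}\| \leq \omega(s,t)^{k/p}/\bigl(\beta\,(k/p)!\bigr)$ for a suitable control $\omega$; combined with $\omega(s,t) \leq \omega(0,T)$ on the bounded interval $[0,T]$ and an absorption of the Gamma-function ratios into a time-independent constant depending only on $\|\bX\|_{p\text{-var}}$ and $T$, this can be rewritten in the claimed form $C(X)^k/k!$.
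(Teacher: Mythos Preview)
Your proposal is correct, and in fact the paper does not supply its own proof of this proposition: it simply attributes the result to the multiplicative structure together with the $p$-variation control and refers the reader to \cite{lyons98} and \cite[Thm.~10.35]{frizvictoir10}. Your sketch is precisely what lies behind those citations --- the bounded-variation simplex estimate you give first is exactly the sharper bound $\|\bX^k_{s,t}\|\leq \|\bX\|^k_{1\cdot\mathrm{var}}/k!$ that the paper records separately (citing \cite{fermanian21}), and your final step, invoking the quantitative Lyons extension theorem to obtain $\|\bX^k_{s,t}\|\leq \omega(s,t)^{k/p}/(\beta\,(k/p)!)$ and then absorbing constants, is the content of the Friz--Victoir reference. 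You were also right to flag the obstacle in the na\"ive limiting argument: the $1$-variations of smooth approximants do blow up, so the direct BV bound cannot simply be passed to the limit, and the quantitative extension estimate is indeed the clean way around it.
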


This factorial decay follows from the multiplicative (group-like) structure and the control provided by the $p$-variation norm. In practice, it serves as a valuable check: the magnitudes of the iterated integrals should decay rapidly with $k$, and deviations from this pattern can signal numerical instability or truncation issues. See \cite{lyons98} and \cite[Thm.~10.35]{frizvictoir10} for proofs and generalizations.

For paths of bounded variation, a stronger estimate holds:
\[
\| \bX^k_{s,t} \| \leq \frac{1}{k!} \| \bX \|^k_{1\cdot \mathrm{var}},
\]
as noted in \cite{fermanian21}. This bound is exact for signatures of bounded variation paths. Although it does not hold for arbitrary $p$-rough paths, it remains relevant numerically, since signature approximations are typically based on interpolated (and hence $BV$) paths.

\noindent \textbf{Implementation details.} 
Results in the following sections correspond to $n_{\text{MC}} = 800{,}000$ Monte Carlo paths and signature truncation level $N = 3$. Brownian increments are generated via standard Gaussian sampling, and $X$ is simulated using an Euler discretization.

Signatures are computed using a vectorized version of Peter Foster's code\footnote{{\url{https://github.com/pafoster/path\_signatures\_introduction}}}, adapted for GPU acceleration. Optimization of $L(\ell)$ is done using SciPy's \texttt{minimize} function with the \texttt{L-BFGS-B} method (tolerance $10^{-8}$), and with box constraints on $\ell$ to accelerate convergence.

\cite{issa23} note that the choice of interpolation method typically has little impact and that simple linear interpolation is often sufficient to compute signature approximations. We did experiment with higher-order interpolation schemes (such as cubic splines), but the marginal gains in accuracy were negligible, so we kept it linear.

All computations were carried out on a consumer desktop with 128\,GB RAM and an NVIDIA RTX 3080 Ti GPU, without access to specialized computing clusters.

\section{Calibration with a Heston Primary Process}\label{numerical_Heston}

In this section, we compare the performance of the signature-based method introduced in Section~\ref{SigBasedModel} with the parametric approach presented in Section~\ref{ASV_IVolApproxim}. 

As the signature-based approach learns volatility directly from a \emph{primary} noise, we first tested this learning mechanism using an Ornstein–Uhlenbeck process. However, to compare fairly with the parametric calibration in \cite{ADSV15}, which is derived under Heston dynamics, we use a Heston variance primary process.

\subsection{Calibration Setup. The Uncorrelated Case.}

Recall that the \emph{market} model in \cite{ADSV15} is given by:
\begin{align*}
    dS_t &= r S_t\, dt + \sigma_t S_t\, d\big( \rho W_t + \sqrt{1 - \rho^2}\, B_t \big), \\
    d\sigma_t^2 &= \kappa(\theta - \sigma_t^2)\, dt + \nu \sqrt{\sigma_t^2}\, dW_t.
\end{align*}

Using the calibrated parameters from Table~\ref{tab:ASV-calibration}, we compute option prices and invert the Black-Scholes formula to obtain the implied volatility surface, which we denote by $\text{IV}^{\textsf{ASV}}$.

To calibrate the signature-based model, we consider 20 option prices $C(K_i, T_i)$ at maturities $\{0.1,\, 0.6,\, 1.1,\, 1.6\}$ and strikes $\{90,\, 95,\, 100,\, 105,\, 110\}$. The primary process $X$ follows a Heston variance SDE with parameters $X_0 = 0.1$, $\nu = 0.2$, $\kappa = 2$, and $\theta = 0.15$.

Volatility is modeled as:
\[
\sigma_t(\ell) \approx \langle \ell, S(\bX)_t^{\leq 3} \rangle,
\]
where $S(\bX)_t^{\leq 3}$ is the truncated signature of the time-augmented path $\hat{X}_t = (t, X_t)$. The loss function to minimize is:
\begin{equation}\label{lossfunction1}
    L(\ell) = \sum_{i=1}^{20} \gamma_i \left( C^{\textsf{mkt}}(K_i, T_i) - C(K_i, T_i, \ell) \right)^2,
\end{equation}
where $\gamma_i$ is set as the inverse Vega of the $i$-th option.

As mentioned above, the number of Monte Carlo samples is $n_{\text{MC}} = 800{,}000$. Since computing the matrix $Q(T)$ involves terms up to level $2N + 1 = 7$, this requires evaluating signatures in a $2^8 - 1 = 255$-dimensional space. That is, each $15 \times 15$ matrix $Q(T)$ depends on a corresponding $255 \times 255$ matrix, making this step computationally intensive.

\paragraph{Calibration Results.} The optimal coefficient vector minimizing $L(\ell)$ is:
\begin{align*}
    \ell^* = (& 0.201202133,\; 0.142660997,\; 1.08471290,\; -0.297312378,\; -0.0293435325,\; -0.0422317187, \\
           & 9.25090162 \times 10^{-4},\; 0.293103687,\; -0.0143435573,\; -0.0134285652,\; -1.64737083 \times 10^{-3}, \\
           & -2.89883092 \times 10^{-3},\; -5.72798006 \times 10^{-4},\; -1.93045420 \times 10^{-3},\; -1.84406803 \times 10^{-4}).
\end{align*}

\noindent Recall that
\[
\sigma_t(\ell) = \ell_{\emptyset} + \ell_0 t + \ell_1 X_t + \ell_{00} \frac{t^2}{2} + \ell_{01} \int_0^t s\, dX_s + \ell_{10} \int_0^t X_s\, ds + \ell_{11} \int_0^t X_s\, dX_s + \ell_{000} \frac{t^3}{6} + \cdots.
\]
The coefficient $\ell_1 \approx 1.085$ confirms that the model has learned a strong linear dependence on $X_t$, consistent with the Heston structure. Likewise, $\ell_{\emptyset} \approx 0.201$ is close to the initial volatility $\sigma_0=0.2$.

The minimum value of the loss function $L(\ell)$ obtained during calibration was $1.05 \times 10^{-4}$, indicating a good fit to market prices. From the simulated option prices we compute the implied volatilities, which we denote by $\text{IV}^{\textsf{SIG}}$. Figure~\ref{figComparisonIVols1} shows a comparison between the two surfaces. 

\begin{figure}[H]
    \centering
    \includegraphics[width=0.5\linewidth]{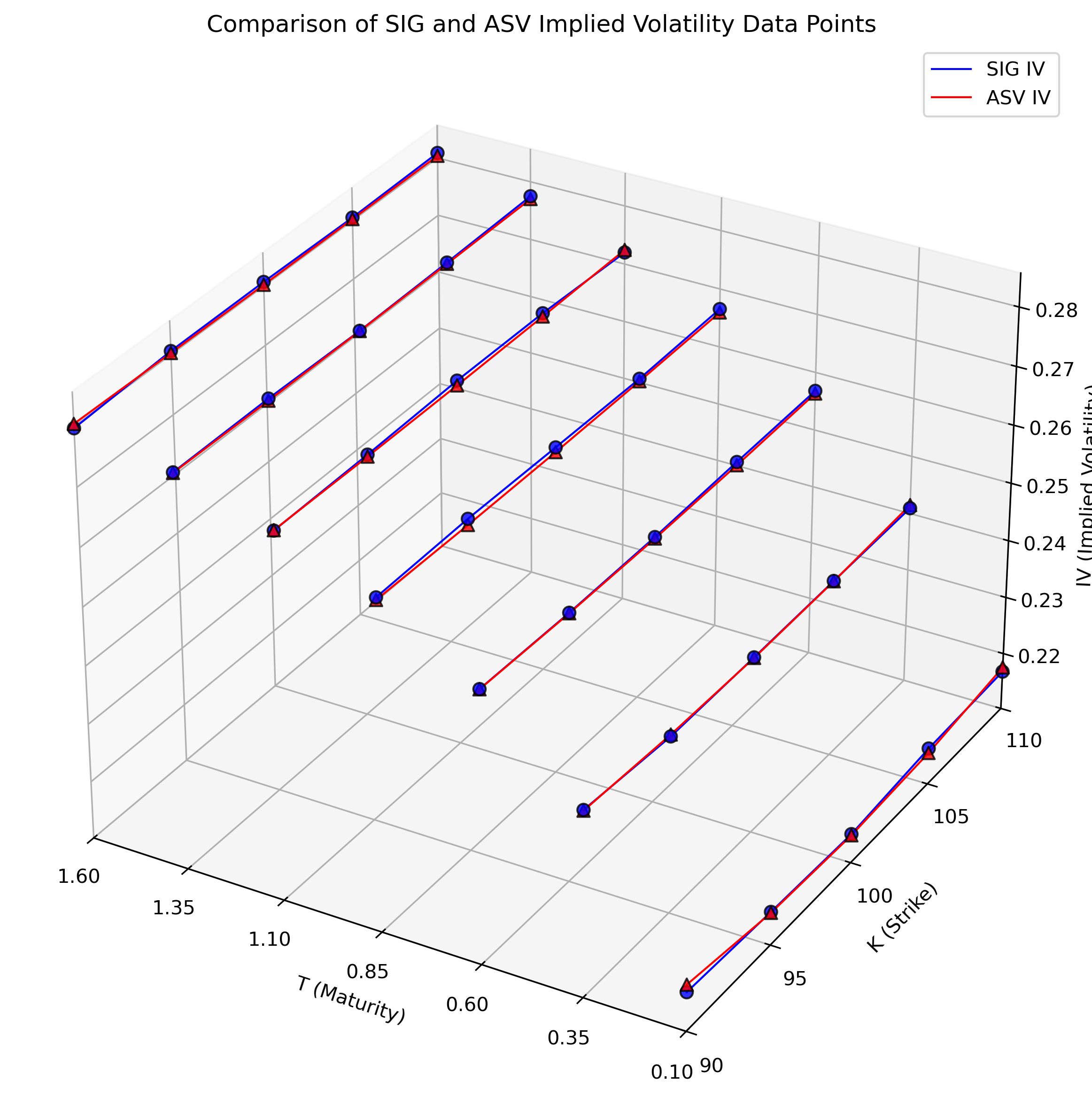}
    \caption{\small Heston implied volatility surfaces: signature-based (\textsf{SIG}) and analytical approximation (\textsf{ASV}).}
    \label{figComparisonIVols1}
\end{figure}

To evaluate the quality of the calibration methods, we computed, for each of the 20 option contracts $(K_i, T_i)$, the errors
\begin{align*}
    e^{\textsf{SIG}}_{i} \, & =\, \left| \text{IV}^{\textsf{SIG}}(K_i, T_i) - \text{IV}^{\textsf{mkt}}(K_i, T_i) \right| \\
    e^{\textsf{ASV}}_{i} \, & =\, \left| \text{IV}^{\text{ASV}}(K_i, T_i) - \text{IV}^{\textsf{mkt}}(K_i, T_i) \right|.
\end{align*}
A detailed breakdown is reported in Table~\ref{tab:Heston_errors}, where entries marked with $(*)$ correspond to the cases in which $e^{\textsf{SIG}} < e^{\textsf{ASV}}$.  

\begin{table}[H]
\centering
\scriptsize
\begin{tabular}{ccll}
\toprule
$T$ & $K$ & $e^{\textsf{ASV}}$ & $e^{\textsf{SIG}}$ \\
\midrule
0.1 &  90 & 0.00004 & 0.00127 \\
0.1 &  95 & 0.00002 & 0.00007 \\
0.1 & 100 & 0.00005 & 0.00009 \\
0.1 & 105 & 0.00003 & 0.00069 \\
0.1 & 110 & 0.00003 & 0.00078 \\
0.6 &  90 & 0.00010 & 0.00024 \\
0.6 &  95 & 0.00012 & 0.00021 \\
0.6 & 100 & 0.00012 & 0.00005 (*) \\
0.6 & 105 & 0.00012 & 0.00026 \\
0.6 & 110 & 0.00010 & 0.00019 \\
1.1 &  90 & 0.00011 & 0.00029 \\
1.1 &  95 & 0.00012 & 0.00008 (*) \\
1.1 & 100 & 0.00012 & 0.00055 \\
1.1 & 105 & 0.00012 & 0.00031 \\
1.1 & 110 & 0.00012 & 0.00069 \\
1.6 &  90 & 0.00011 & 0.00089 \\
1.6 &  95 & 0.00012 & 0.00014 \\
1.6 & 100 & 0.00012 & 0.00029 \\
1.6 & 105 & 0.00012 & 0.00008 (*) \\
1.6 & 110 & 0.00011 & 0.00031 \\
\bottomrule
\end{tabular}
\caption{Calibration errors for the uncorrelated Heston model ($\rho=0$).} 
\label{tab:Heston_errors}
\end{table}

Both methods exhibit a high level of calibration accuracy, with most errors lying in the range of $10^{-4}$ to $10^{-5}$, indicating that their performance is broadly comparable. While the analytical approach generally yields slightly smaller errors, there are several instances in which the signature-based method performs better, as highlighted in the table. This confirms that the signature-based approach is able to capture the structure of the implied volatility surface with a degree of accuracy comparable to model-based expansions, while maintaining its flexibility and model-agnostic nature.

\subsection{The Correlated Case}\label{sub:Heston_correlated}

We now consider the case when the asset price and the volatility process are correlated. Using the calibrated parameters from Table~\ref{tab:ASV-correlated}, we obtain the option prices and compute the implied volatility surface.

We let the \emph{primary} Heston variance process $X$ be initialized with parameters $X_0 = 0.25$, $\nu = 0.35$, $\kappa = 3.3$, $\theta = 0.15$ and $\rho = -0.5$, and we approximate the volatility by the truncated signature
\[
\sigma_t(\ell) \approx \langle \ell, S(\bX)_t^{\leq 3} \rangle.
\]
The loss function remains as in~\eqref{lossfunction1}. The optimal coefficient vector is
\begin{align*}
    \ell^* = (& -0.195158212,\; -0.250867130,\; -0.125195785,\; 0.606113847,\; -0.303740047, \\ 
           & 0.347580926,\; 0.136816382,\; -0.664746087,\; 0.563172308,\; 0.033241841, \\
           & 0.029376982,\; 0.019240593,\; -0.065104522,\; 3.67\!\times\! 10^{-5},\; -8.94\!\times\! 10^{-3}).
\end{align*}

Whereas in the uncorrelated setting, the optimizer happened to find a parameterization close to the true process, in the correlated case it settled on a different (but still effective) minimizer. The minimum value of the loss function $L(\ell)$ was $1.46 \times 10^{-3}$, slightly less precise than in the uncorrelated case. Figure~\ref{figComparisonIVols2} provides a visual comparison of the two calibrated surfaces.

\begin{figure}[H]
    \centering
    \includegraphics[width=0.5\linewidth]{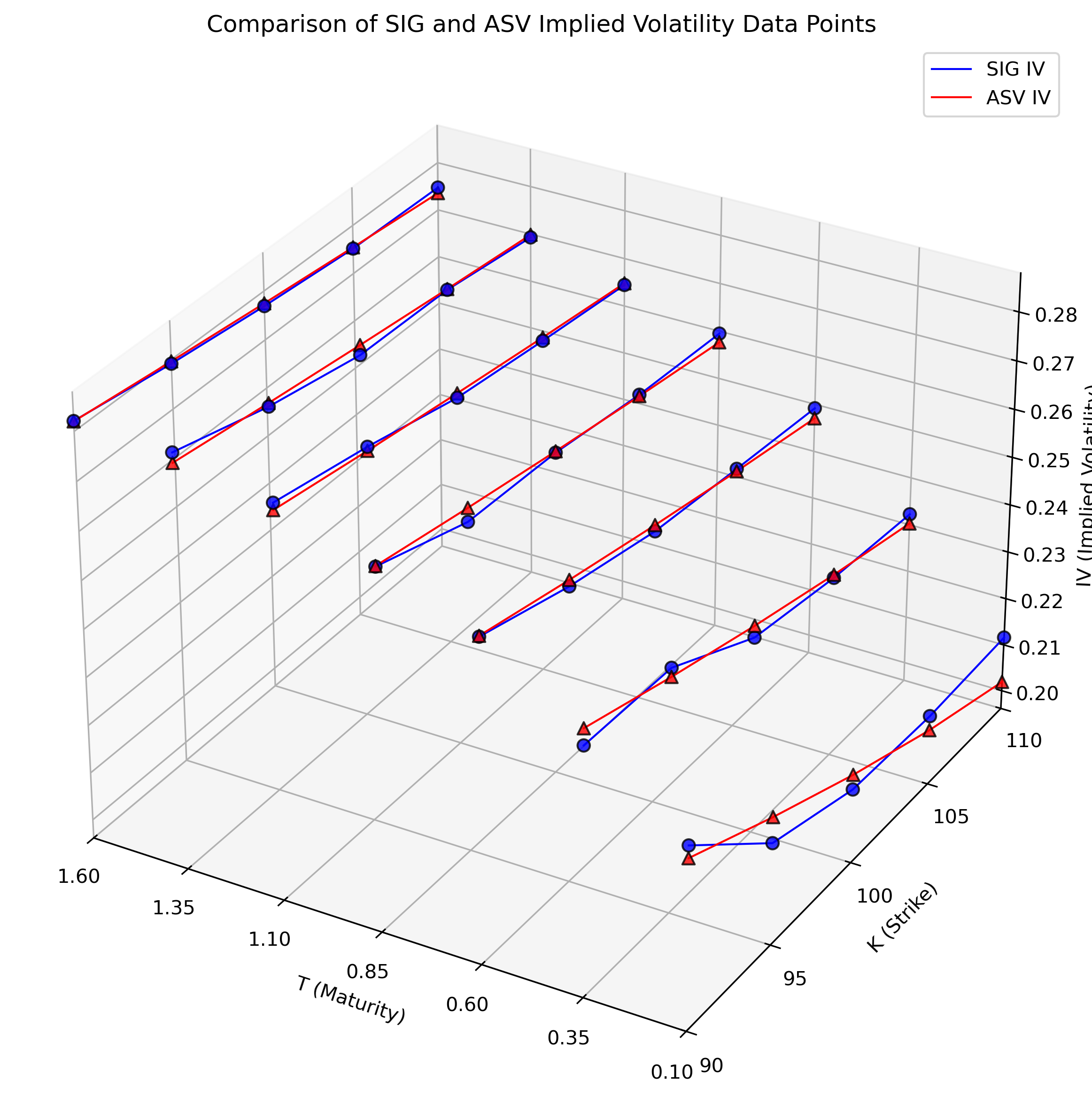}
    \caption{Implied volatility surfaces for the Heston model with correlation.}
    \label{figComparisonIVols2}
\end{figure}

To assess calibration accuracy, we compute the absolute errors $e^{\textsf{SIG}}_i$ and $e^{\textsf{ASV}}_i$ for each of the 20 contracts $(K_i, T_i)$. A detailed breakdown is provided in Table~\ref{tab:Heston_errors_correl}, where entries marked with $(*)$ correspond to the cases in which $e^{\textsf{SIG}} < e^{\textsf{ASV}}$.

\begin{table}[H]
\centering
\scriptsize
\begin{tabular}{ccll}
\toprule
$T$ & $K$ & $e^{\textsf{ASV}}$ & $e^{\textsf{SIG}}$ \\
\midrule
0.1 &  90 & 0.00046 & 0.00261 \\
0.1 &  95 & 0.00023 & 0.00535 \\
0.1 & 100 & 0.00002 & 0.00307 \\
0.1 & 105 & 0.00013 & 0.00304 \\
0.1 & 110 & 0.00018 & 0.00972 \\
0.6 &  90 & 0.00024 & 0.00018 (*) \\
0.6 &  95 & 0.00014 & 0.00130 \\
0.6 & 100 & 0.00005 & 0.00127 \\
0.6 & 105 & 0.00002 & 0.00051 \\
0.6 & 110 & 0.00008 & 0.00215 \\
1.1 &  90 & 0.00012 & 0.00146 \\
1.1 &  95 & 0.00006 & 0.00081 \\
1.1 & 100 & 0.00001 & 0.00091 \\
1.1 & 105 & 0.00004 & 0.00069 \\
1.1 & 110 & 0.00008 & 0.00036 \\
1.6 &  90 & 0.00005 & 0.00007 \\
1.6 &  95 & 0.00001 & 0.00047 \\
1.6 & 100 & 0.00003 & 0.00056 \\
1.6 & 105 & 0.00006 & 0.00027 \\
1.6 & 110 & 0.00010 & 0.00118 \\
\bottomrule
\end{tabular}
\caption{Calibration errors for the Heston case ($\rho=-0.5$).} 
\label{tab:Heston_errors_correl}
\end{table}

Both methods continue to exhibit a comparable level of accuracy, although the overall precision is slightly reduced in the presence of correlation, with most errors lying in the range of $10^{-4}$ to $10^{-3}$. The analytical approach generally achieves smaller errors, with only a limited number of instances in which the signature-based method performs better. One possible explanation for this behavior is that the effect of negative correlation is encoded in higher-order interactions, which are only partially captured at the chosen truncation level of the signature (see \cite{issa23}). Increasing the truncation level to $N=4$ yields only marginal improvements, suggesting that substantially higher-order terms may be required to fully capture the dependence structure, although at a significantly higher computational cost.

\section{Calibration with a Rough Bergomi Primary Process} \label{numerical_rBergomi}

The signature-based method makes no structural assumptions about the market volatility, offering greater flexibility and robustness. To demonstrate this flexibility, in this section we assume that \emph{the market is rough Bergomi} and we use a fractional Brownian motion as primary process.  

The asymptotic method in Section~\ref{ASV_IVolApproxim} is not suitable anymore because the information in the long-term maturities is not relevant for the rough Bergomi case. Instead, we rely on the algorithm described in Section~\ref{short_time_calib_rb}, which exploits the short-maturity behavior of European options.

We first compute the \emph{market} option prices and the option prices obtained with the calibrated parameters from Table~\ref{tab:rBergomi-tab}. For the numerical techniques, see \cite{BLPakkanen2017} and \cite{turbochargingRB2018}.\footnote{The code is available at \url{https://github.com/ryanmccrickerd/rough_bergomi}} We then invert both set of prices to obtain the implied volatility surfaces, which we denote respectively by $\text{IV}^{\textsf{mkt}}$ and $\text{IV}^{\textsf{VIX}}$. 

We now compare the parametric calibration $\text{IV}^{\textsf{VIX}}$ with the one obtained via the signature method. Let $Z_t = \rho W_t + \sqrt{1 - \rho^2}\, B_t$, where $B$ is a Brownian motion independent of $W$. The model is: 
\begin{equation*}
\begin{aligned}
    d\Tilde{S}_t(\ell) & = \Tilde{S}_t(\ell)\, \sigma_t(\ell)\, dZ_t  \\
    \sigma_t(\ell) & \approx \langle \ell, S(\bX)_t^{\leq 3} \rangle,
\end{aligned}
\end{equation*}
with primary process: 
\begin{equation}\label{primary_rBergomi}
X_t = \sqrt{2H}\int_0^t (t-s)^{H-\frac{1}{2}}\, dW_s.
\end{equation}
Unlike the case of continuous semimartingales, where the Stratonovich integral naturally provides the rough path structure, the rough path lift of fractional Brownian motion must be constructed using techniques from Gaussian process theory. (See~\cite{FrizHairer24}, Chapter 10). In particular, \cite{CoutinQian02} show that fractional Brownian motion admits a canonical geometric rough path lift for any Hurst parameter $H > 1/4$.

\begin{remark}\label{remark_Hroughness}
    In Section~\ref{short_time_calib_rb}, the \emph{market} values are generated from a rough Bergomi model with $H=0.1$, while in the calibration below we use a primary process~(\ref{primary_rBergomi}) with $H=0.2$. Both values lie below the theoretical threshold $H>1/4$. This does not pose a problem in practice: on the one hand, we work with the \emph{time-augmented} path $\hat{X}_t=(t,X_t)$, where the bounded variation time component provides additional structure; on the other hand, signatures are computed from discrete samples of the path, which are interpolated linearly (see Section~\ref{algorithm}), and such piecewise linear approximations always admit a rough path lift. 
\end{remark}

With these considerations in mind, we first calibrate the model using~(\ref{primary_rBergomi}) as the primary process, with parameters $H=0.2$ and $\rho=-0.6$. Using a fractional Brownian motion directly as the primary process yields accurate results, but the calibration is computationally demanding, as the model must implicitly learn to enforce positivity of the variance.  

Next, we consider the geometric transformation
\[
X_t = \exp\left(\sqrt{2H}\int_0^t (t-s)^{H-\frac{1}{2}}\, dW_s\right),
\]
with the same values of $H$ and $\rho$. Although this smooth transformation does not remove the theoretical roughness constraint discussed in Remark~\ref{remark_Hroughness}, it improves the numerical performance and leads to more regular signature behavior. The total computation time (signature evaluation and parameter optimization) decreases from approximately $3$ hours to $39$ minutes, while achieving a comparable minimum value of the loss function (around $9 \times 10^{-4}$).

As a third alternative, we consider a shifted exponential transformation 
\[
X_t = X_0 \exp\left(\sqrt{2H}\int_0^t (t-s)^{H-\frac{1}{2}}\, dW_s\right),
\]
with an arbitrary initial value $X_0=0.1$ and the same values of $H$ and $\rho$. This specification offers a further improvement in computational efficiency, with the full calibration procedure requiring only $17$-$19$ minutes, and achieving a lower minimum value of the loss function ($3.5 \times 10^{-4}$), which indicates a better fit to market prices than the correlated Heston case. The corresponding optimal parameter vector is:
\begin{align*}
\ell^* = (& 0.17273586,\; -0.29578964,\; -0.08071348,\; 0.40101573,\; -0.2974647,\; 0.31988953,\; 1.40158411,\; \\
          & 0.15016936,\; -0.05769989,\; 0.00999173,\; 0.25021442,\; 0.02998332,\; -0.00789562,\; 0.12012242,\; \\
          & 0.27102252).
\end{align*}

\noindent From the simulated option prices we compute the implied volatilities, which we now denote by $\text{IV}^{\textsf{VIX}}$. Figure~\ref{figComparisonIVols3} shows the two surfaces.

\begin{figure}[H]
    \centering
    \includegraphics[width=0.5\linewidth]{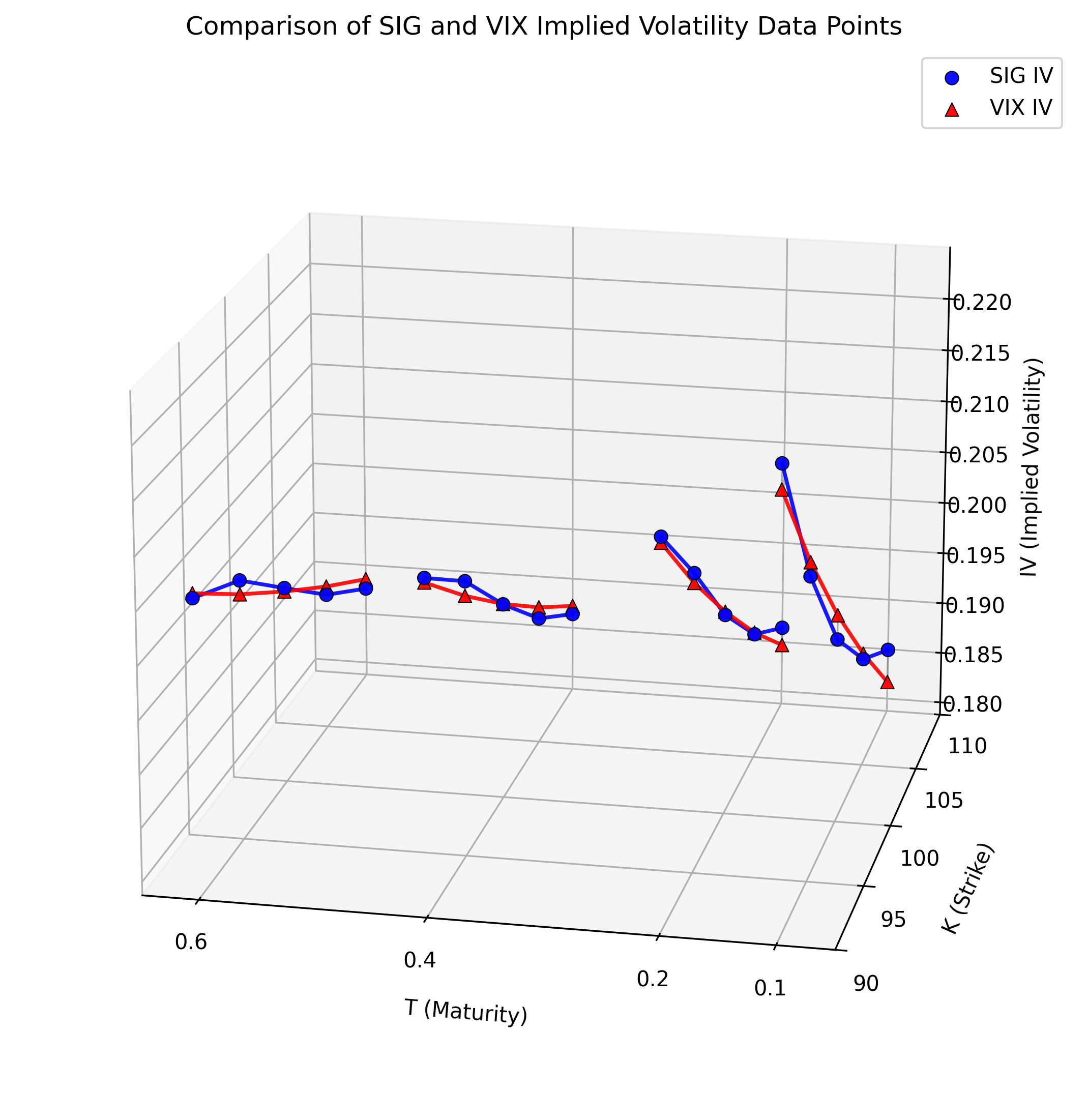}
    \caption{Implied volatility surfaces for the rough Bergomi model.}
    \label{figComparisonIVols3}
\end{figure}

For all 20 option contracts $(K_i, T_i)$, we compute the errors
\begin{align*}
    e^{\textsf{SIG}}_{i} \, & =\, \left| \text{IV}^{\textsf{SIG}}(K_i, T_i) - \text{IV}^{\textsf{mkt}}(K_i, T_i) \right| \\
    e^{\textsf{VIX}}_{i} \, & =\, \left| \text{IV}^{\text{VIX}}(K_i, T_i) - \text{IV}^{\textsf{mkt}}(K_i, T_i) \right|.
\end{align*}
A breakdown of the errors is provided in Table~\ref{tab:rBergomi_errors}, where entries marked with $(*)$ correspond to the cases in which $e^{\textsf{SIG}} < e^{\textsf{VIX}}$.

\begin{table}[!htbp]
\centering
\scriptsize
\begin{tabular}{ccll}
\toprule
$T$ & $K$ & $e^{\textsf{VIX}}$ & $e^{\textsf{SIG}}$ \\
\midrule
0.1 &  90 & 0.00104 & 0.00126 \\
0.1 &  95 & 0.00054 & 0.00183 \\
0.1 & 100 & 0.00001 & 0.00231 \\
0.1 & 105 & 0.00058 & 0.00004 (*) \\
0.1 & 110 & 0.00117 & 0.00442 \\
0.2 &  90 & 0.00079 & 0.00026 (*) \\
0.2 &  95 & 0.00041 & 0.00049 \\
0.2 & 100 & 0.00001 & 0.00033 \\
0.2 & 105 & 0.00044 & 0.00027 (*) \\
0.2 & 110 & 0.00087 & 0.00262 \\
0.4 &  90 & 0.00060 & 0.00019 (*) \\
0.4 &  95 & 0.00029 & 0.00107 \\
0.4 & 100 & 0.00003 & 0.00001 (*) \\
0.4 & 105 & 0.00035 & 0.00075 \\
0.4 & 110 & 0.00067 & 0.00015 (*) \\
0.6 &  90 & 0.00049 & 0.00094 \\
0.6 &  95 & 0.00022 & 0.00109 \\
0.6 & 100 & 0.00006 & 0.00039 \\
0.6 & 105 & 0.00032 & 0.00050 \\
0.6 & 110 & 0.00061 & 0.00037 (*) \\
\bottomrule
\end{tabular}
\caption{Calibration errors for the rough Bergomi case.} 
\label{tab:rBergomi_errors}
\end{table}

Both methods achieve a high level of calibration accuracy, with errors typically of order $10^{-4}$, indicating that the two approaches are broadly comparable in terms of precision. Note how the signature-based method outperforms the analytical approximation in a number of instances, as highlighted in the table, suggesting a slight advantage in this setting.

As in Section~\ref{sub:Heston_correlated}, we also increased the truncation level of the signature to $N=4$, but observed only marginal improvements, despite the additional computational cost. This indicates that the chosen truncation level already captures most of the relevant structure of the volatility surface.

A possible explanation for the improved performance of the signature-based approach with respect to the Heston case is the non-Markovian nature of the rough Bergomi model, driven by fractional Brownian motion. While analytical approximations rely on specific structural assumptions, the signature framework is naturally designed to encode temporal interactions, which may allow it to better represent such effects. Overall, these results further illustrate the robustness and adaptability of the signature-based approach in complex, non-Markovian volatility regimes.

\section{Conclusions}

This paper provides a detailed comparison between two complementary approaches to the calibration of implied volatility surfaces: analytical approximations and data-driven models based on signatures of rough paths. Rather than viewing these methodologies in opposition, our analysis highlights how they address the calibration problem from different but compatible perspectives, each with its own strengths in terms of structure, flexibility, and computational cost.

The analytical approach builds on model-specific frameworks (namely, the Heston and rough Bergomi models) and derives explicit calibration formulas: asymptotic expansions for Heston, and a new VIX-based calibration scheme for rough Bergomi introduced in this paper. When the underlying dynamics are known, these methods provide highly accurate calibration in a low-dimensional setting at minimal computational cost.

The signature-based methodology does not rely on a fixed parametric specification. Volatility is modeled as a linear functional of the signature of a primary process, which can be chosen to reflect different features of the data. This flexibility allows the method to adapt to a wide range of dynamics, including non-Markovian settings. In the Heston case, the signature-based model achieves a calibration accuracy comparable to the analytical approach, with the global optimization error over the whole implied volatility surface typically below $10^{-3}$.

When using a fractional Brownian motion as the primary process in a rough Bergomi setting, the calibration remains highly accurate, with global implied volatility errors consistently of order $10^{-4}$. The performance is slightly improved compared to the Heston-based specification, which may be attributed to the non-Markovian nature of fractional Brownian motion and the ability of signatures to capture such temporal dependencies effectively. These results further illustrate the robustness and adaptability of the signature-based approach in complex volatility regimes.

From a computational perspective, the analytical approximations are essentially instantaneous once derived, making them particularly attractive when the underlying model is specified. The signature-based approach, while more computationally demanding, remains practical. With $100{,}000$ simulated paths, the full calibration (including signature computation and optimization) takes approximately 15 minutes. Increasing the number of paths to $800{,}000$ improves accuracy at the cost of longer runtimes, between 45 and 90 minutes, depending on the model. This reflects a natural trade-off between precision and computational effort, and the method can be tuned according to the desired level of accuracy.

In summary, analytical methods provide an optimal solution when the model is correctly specified, combining precision and efficiency. Signature-based models, on the other hand, offer a robust and flexible alternative that performs well across different dynamics, particularly in non-Markovian settings. Together, these approaches balance model-driven insights with data-driven adaptability, opening promising directions for future research.

\vspace{0.8cm}

\bibliographystyle{apalike}
\bibliography{references.bib}

\end{document}